\newtheorem{theorem}{Theorem}
\theoremstyle{plain}
\newtheorem{claim}{Claim}
\newtheorem{corollary}{Corollary}
\newtheorem{definition}{Definition}
\newtheorem{proposition}{Proposition}
\newtheorem{remark}{Remark}
\numberwithin{equation}{section}
\begin{document}
\title[Agenda manipulation-proofness]{ Agenda manipulation-proofness,
stalemates, and redundant elicitation in preference aggregation. \\
{\tiny Exposing the bright side of Arrow's theorem. }}
\author{Stefano Vannucci}
\address{University of Siena, Department of Economics and Statistics}
\email{stefano.vannucci@unisi.it}
\subjclass{Mathematics Subject Classification. Primary 05C05; Secondary
52021, 52037.}
\keywords{Agenda manipulation, strategy-proofness, social welfare function,
aggregation, median join-semilattice}
\maketitle

\begin{abstract}
This paper provides a general framework to explore the possibility of agenda
manipulation-proof and proper consensus-based preference aggregation rules,
so powerfully called in doubt by a disputable if widely shared understanding
of Arrow's `general possibility theorem'. We consider two alternative
versions of agenda manipulation-proofness for social welfare functions, that
are distinguished by `parallel' vs. `sequential' execution of agenda
formation and preference elicitation, respectively. Under the `parallel'
version, it is shown that a large class of anonymous and idempotent social
welfare functions that satisfy both agenda manipulation-proofness and
strategy-proofness on a natural domain of single-peaked `meta-preferences'
induced by arbitrary total preference preorders are indeed available. It is
only under the second, `sequential' version that agenda
manipulation-proofness on the same natural domain of single-peaked
`meta-preferences' is in fact shown to be tightly related to the classic
Arrowian `independence of irrelevant alternatives' (IIA) for social welfare
functions. In particular, it is shown that using IIA to secure such
`sequential' version of agenda manipulation-proofness and combining it with
a very minimal requirement\ of distributed responsiveness results in a
characterization of the `global stalemate' social welfare function, the
constant function which invariably selects universal social indifference. It
is also argued that, altogether, the foregoing results provide new
significant insights concerning the actual content and the \textit{%
constructive} implications of Arrow's `general possibility theorem' from a
mechanism-design perspective.

2000 Mathematics Subject Classifications. Primary 05C05;\ Secondary 52021,
52037.

JEL Classification D71

Keywords: Agenda manipulation, strategy-proofness, social welfare functions,
aggregation, median join-semilattices.

Author's note: The author has no conflict of interest to acknowledge
concerning the present work.
\end{abstract}

\section{Introduction}

The \textit{agenda} is a key part of any decision problem, and is specified
by its two components. The first component of the agenda, its \textit{content%
}, is defined by the alternative options it contains. The second component,
its \textit{structure}, is the rule governing the process of option-scrutiny
that is required in order to solve the given decision problem. Thus, such a
structure amounts to a \textit{total preorder} of the available options,
resulting in an ordered partition of options that channels their \textit{%
sequential} scrutiny. It follows that \textit{agenda control }is perforce a
crucial issue, and has at least \textit{two }dimensions, namely
agenda-content and agenda-structure control. Therefore \textit{agenda
manipulation, }or the exercise of agenda control to influence the final
decision, is also a `two-dimensional' activity which must be taken into
account, to be possibly prevented or restrained.

As it happens, concerns for \textit{agenda manipulation along both of its
dimensions} have always played a distinguished role in the literature on
collective decision-making, but have scarcely if ever been the target of
explicit treatment within formal models of preference aggregation. To be
sure, it is well-known that under transitivity of the relevant preference
relation, preference-maximizing choices are \textit{path-independent, }%
namely do not depend on the sequence of intermediate choices and rejections
out of the sequence of subsets which is dictated by the agenda-structure.
Hence, whenever the aggregation rule is a \textit{social welfare function},
meaning that both the individual preference relations to be aggregated and
the `social' or aggregate preference are \textit{transitive} (and \textit{%
total}), agenda-structure control is virtually inconsequential, and agenda
manipulation along the agenda-structure dimension is automatically
prevented. On the contrary, agenda-content manipulation is of course still
possible even if all the relevant preference relations are transitive. Since
we are going to focus precisely on social welfare functions, in the sequel
we shall only discuss agenda-content manipulation, disregarding entirely
agenda-structure manipulation. Therefore, in the rest of the present paper
we shall simply \textit{identify agendas with their contents}\footnote{%
By contrast, most contributions to agenda control in the political science
literature are focussed on \textit{agenda-structure} manipulation: see e.g.
Schwartz (1986), part IV, and Austen-Smith, Banks (2005) for an extensive
treatment of agenda-structure control models in political science. Miller
(1995) explicitly distinguishes the two dimensions of agenda control but
then provides a review of some models offering a \textit{joint} treatment of
them.}. Accordingly, `agenda manipulation' is henceforth used, for the sake
of simplicity, as a synonym for `agenda-content manipulation'.

Indeed, in his classic \textit{Social Choice and Individual Values} (1963)
Arrow used precisely \textit{the need to prevent agenda manipulation} as the
main argument in favor of \ his own `Independence of Irrelevant
Alternatives' (IIA), as a key condition for \textit{proper} (or non-trivial%
\footnote{%
A dictatorial social welfare function is of course unanimity-respecting and
thus in a sense also consensus-based, but only trivially so.})\textit{\
consensus-based social welfare functions,} which are the main focus of that
work. Since then, the notion that IIA should be regarded as a basic
`nonmanipulability property' has been further reinforced by the rise and
enormous proliferation of models focussing on \textit{strategic manipulation}
issues in preference aggregation, to become eventually almost common place.
Yet, within standard models of preference aggregation including social
welfare functions it is \textit{just preference relations} that agents
provide as \textit{inputs} while the relevant agenda is a \textit{parameter}
of the aggregation rule. But then, such an aggregation rule is nothing else
than the relevant strategic \textit{game-form. }It follows that
agenda-manipulation amounts to a \textit{structural manipulation }of the
very `aggregation game', literally a game-changer. Hence, the exact
connection between agenda manipulation-proofness and IIA is \textit{not}
amenable to a proper game-theoretic scrutiny \textit{unless} the preference
aggregation model is expanded to involve the \textit{agenda formation process%
} itself. In particular, such an expanded model is needed to establish
whether the full force of IIA is actually necessary to prevent agenda
manipulation for social welfare functions that are at least minimally
outcome-unbiased and agent-inclusive.

Thus, some explicit formulation of \ the agenda formation process has to be
introduced in the relevant preference aggregation model. In the present work 
\textit{two main types of agenda formation protocols} are considered. Both
of them rely on a prespecified \textit{admissible set of }outcomes out of
which the \textit{actual} \textit{agenda} has to be defined. Moreover, in
order to avoid any sort of infinite regress, we can safely assume that
outcome-admissibility is established by another (possibly `democratic', but 
\textit{distinct}) procedure\footnote{%
It is worth recalling here that Dahl (1956) famously suggested to label as
`populist democracy' the doctrine that advocates reliance on the simple
majority rule to settle every issue including the identification of the 
\textit{admissible issues }for a possible public agenda. Arguably, one might
invoke a generalized notion of `populist democracy' as the advocacy of a
unique `democratic' decision rule to settle every issue, including every
aspect pertaining to agenda control. In that connection, assuming that
admissibility of outcome sets is subject to a \textit{distinct protocol }(if
possibly also `democratic' in some appropriate sense) also amounts to
preventing any `populist' interpretation of the overall decision mechanism.}%
. In the first agenda formation protocol, however, agents provide \textit{at
once} both their preferences on admissible outcomes and their proposals
concerning the agenda. In the second one, on the contrary, a \textit{first} 
\textit{stage} is devoted to \textit{specifying the actual agenda, }and%
\textit{\ }is followed by a \textit{second, preference-elicitation} stage
where the agents express their preferences on the previously chosen agenda.
Accordingly, two distinct formulations of \textit{Agenda
Manipulation-Proofness} (AMP) are introduced, and their distinctive impact
on the design of preference aggregation rules that guarantee at least a
minimal amount of \textit{outcome-unbiased distributed responsiveness }to
individual preferences is explored and discussed at length. In particular,
it is shown that under the \textit{first }formulation\textit{, }AMP \textit{%
and }Strategy-Proofness on a comprehensive single-peaked domain of
`meta-preferences' most naturally induced by basic preferences on outcomes
are shared by a large class of preference aggregation rules on the full
domain of \textit{arbitrary} profiles of total preference preorders\footnote{%
As mentioned below, such a result relies heavily on the main theorem of
Savaglio, Vannucci (2021) concerning strategy-proof aggregation rules in
median join-semilattices.}. Such a class of strategy-proof aggregation rules
includes social welfare functions which indeed satisfy a remarkable
combination of valuable properties. Namely, anonymity, monotonicity and a 
\textit{basic} version of Pareto-optimality, possibly even (weak)
neutrality, though occasionally producing \textit{stalemates\footnote{%
A stalemate is defined as `social indifference' among a set of alternative
social states including a pair $x,y$ such that $x$ is unanimously strictly
preferred to $y$. Thus, by definition, a stalemate admits of violations of
the Weak Pareto principle (which enforces strict social preference for $x$
versus $y$ under the aforementioned situation).}} as an output to some
preference profiles exhibiting certain \textit{specific} patterns of strong
conflict (e.g. Condorcet cycles)\footnote{%
The above mentioned social welfare functions are the quota rules, including
the Condorcet-Kemeny median rule which is indeed neutral when the number of
agents is odd.}. By contrast, the \textit{second version} of AMP turns out
to be strictly related to IIA, and the combination of IIA with a couple of
much weaker, indeed minimal, requirements of unbiased and distributed
responsiveness provides a characterization of the \textit{Global Stalemate
constant social welfare function}, namely the social welfare functions which
has \textit{universal indifference }as its\textit{\ unique possible output.}%
\footnote{%
Such a result, which amounts to a considerable strenghtening of a previous
characterization of the same constant social welfare function due to Hansson
(1969), relies heavily on Wilson (1972) and Savaglio,Vannucci (2021).}
Moreover, if the Weak Pareto property or even just idempotence (namely,
`respect for unanimity') is adjoined to IIA, an Arrowian impossibility
result is obtained. \ Thus, in order to secure agenda manipulation-proofness
of a social welfare function one may consider \textit{two} basic alternative
approaches having strikingly \textit{different} consequences. One of those
approaches relies on the introduction of IIA: it was correctly identified by
Arrow's seminal contribution, and paves the way to his classic
characterization of \textit{dictatorial} social welfare functions. That
result signals an important obstruction to the design of social welfare
functions as democratic preference aggregation protocols. The other
approach, however, has \textit{no connection whatsoever to IIA}, and is
consistent with a large class of anonymous, unanimity-respecting social
welfare functions that also retain a basic version of Pareto optimality
involving nonstrict preferences. We argue that the very contrast between
those two approaches and their respective results makes it possible to
single out and appreciate the\  \textit{constructive} implications of
Arrowian `impossibility theorems' concerning the design of preference
aggregation rules, as a significant part of their actual meaning and content.

The rest of this paper is organized as follows: section 2 collects the
formal description of the model and the results; section 3 consists in a
brief discussion of a few most strictly related contributions (the
interested reader is addressed to the supplementary Appendix for a more
extensive and detailed discussion of a carefully selected sample of the
massive amount of related literature);\ section 4 provides some concluding
remarks, and prospects for future research.

\section{Model and results}

\textbf{1. Preliminaries.}

Let $A$ be a nonempty finite set of alternative social states with $|A|\geq3 
$, $\mathcal{R}_{A}$ the set of all total preorders (i.e. reflexive,
transitive and connected binary relations) on $A$, $\mathcal{L}_{A}\subseteq%
\mathcal{R}_{A}$ the set of all linear orders (or \textit{antisymmetric}
total preorders on $A$), and $\mathcal{P}(A)$ the set of parts of $A$, or
possible \textit{agendas }from $A$. Let $N=\left \{ 1,..,n\right \} $ denote
a \textit{finite} population of agents/voters. We assume that $n\geq3$ in
order to avoid tedious qualifications. The subsets of $N$ are also referred
to as \textit{coalitions, }and $(\mathcal{P}(N),\subseteq)$ denotes the
partially ordered set of coalitions induced by set-inclusion. An \textit{%
order filter} of $(\mathcal{P}(N),\subseteq)$ is a set $F\subseteq \mathcal{P%
}(N)$ of coalitions such that for any $S\in F$ and any $T\subseteq N$, if $%
S\subseteq T$ then $T\in F$. The \textit{basis }of order filter $F$ is the
set of inclusion-minimal elements/coalitions of $F$, and is denoted by $%
F^{\min}$.

Each agent $i\in N$ is endowed with a total preference preorder $R_{i}\in$ $%
\mathcal{R}_{A}$ (whose \textit{asymmetric }component or \textit{strict
preference }is denoted by $P(R_{i})$), and proposes an agenda $%
A_{i}\subseteq A$. A \textit{social welfare function} for $(N,A)$ if a
function $f:\mathcal{R}_{A}^{N}\rightarrow \mathcal{R}_{A}$. We shall also
consider two types of social welfare functions enriched with an \textit{%
endogenous agenda formation }process. According to the class of \textit{%
parallel }rules \textit{agents release concurrently their entire inputs}
consisting of a total preorder on the set of \textit{all }admissible
alternatives and of a proposed agenda: preference aggregation and agenda
formation are also mutually concurrent processes. Notice that this also
entails that typically, namely whenever the selected agenda is a \textit{%
proper} subset of $A$, the elicited individual preferences turn out to be 
\textit{redundant.}

Thus, a \textit{parallel agenda-formation-enriched (PAFE) social welfare
function }for $(N,A)$ is an aggregation rule $\mathbf{f}:(\mathcal{P}%
(A)\times \mathcal{R}_{A})^{N}\rightarrow \mathcal{P}(A)\times \mathcal{R}%
_{A}\mathcal{\ }$(with projections $\mathbf{f}_{1}$ and $\mathbf{f}_{2}$ on $%
\mathcal{P}(A)$ and $\mathcal{R}_{A}\mathcal{\ }$, respectively). In
particular, such a PAFE $\mathbf{f}$ is said to be \textit{decomposable }if
and only if it can be decomposed into two component aggregation rules: an
agenda formation rule $f^{(1)}:\mathcal{P}(A)^{N}\longrightarrow \mathcal{P}%
(A)$ and a social welfare function $f^{(2)}:\mathcal{R}_{A}^{N}\rightarrow%
\mathcal{R}_{A}$ \footnote{%
In other terms, the two projections of $\mathbf{f}$, namely $\mathbf{f}_{1}:(%
\mathcal{P}(A)\times \mathcal{R}_{A}^{T})^{N}\longrightarrow \mathcal{P}(A)$
and
\par
$\mathbf{f}_{2}:(\mathcal{P}(A)\times \mathcal{R}_{A}^{T})^{N}%
\longrightarrow \mathcal{R}_{A}^{T}$ are such that for every $%
R_{N},R_{N}^{\prime}\in(\mathcal{R}_{A}^{T})^{N}$ and $B_{N},B_{N}^{\prime}%
\in(\mathcal{P}(A))^{N}:$%
\par
$\mathbf{f}_{1}(B_{N},R_{N})=\mathbf{f}_{1}(B_{N},R_{N}^{%
\prime}):=f^{(1)}(B_{N})$%
\par
and
\par
$\mathbf{f}_{2}(B_{N},R_{N})=\mathbf{f}_{2}(B_{N}^{%
\prime},R_{N}):=f^{(2)}(R_{N})$.
\par
Strictly speaking $\mathbf{f}$ and $f^{(1)}\times f^{(2)}$are isomorphic
functions, hence we also use the notation $\mathbf{f}\simeq f^{(1)}\times
f^{(2)}$ to denote that fact.}. Conversely, for any $f^{\prime}$ be an
agenda formation rule $f^{\prime}$ for $(N,A)$ and any social welfare
function $f$ for $(N,A)$ a decomposable PAFE social welfare function $%
\mathbf{f}\simeq f^{\prime}\times f$ can be defined in an obvious way.
Hence, \textit{any} social welfare function can be regarded as a component
of a decomposable PAFE social welfare function by combining it with an
agenda formation rule.

Observe that a decomposable PAFE $\mathbf{f}$ also induces a family of
functions $\mathcal{F}_{\mathbf{f}}:=\left \{ f_{B}^{(2)}:\mathcal{R}%
_{A}^{N}\rightarrow \mathcal{R}_{B}\text{ }|\text{ }B\in f^{(1)}[\mathcal{P}%
(A)]\right \} $ where $f_{B}^{(2)}(R_{N}):=(f^{(2)}(R_{N}))_{|B}$ for any $%
R_{N}\in \mathcal{R}_{A}^{N}$. Accordingly, the possible values of functions
in $\mathcal{F}_{\mathbf{f}}$ are given by a family of total preorders,
namely $\left \{ f_{B}^{(2)}(R_{N})\right \} _{R_{N}\in \mathcal{R}%
_{A}^{N},B\in \mathcal{P}(A)]}$: thus, for every $R_{N}\in \mathcal{R}_{A}^{N}
$ and $B\in \mathcal{P}(A)$, $f_{B}^{(2)}(R_{N})\in \mathcal{R}_{B}\subseteq
\dbigcup \limits_{B\in \mathcal{P}(A)}\mathcal{R}_{B}$. In particular, we
shall focus on \textit{sovereign }agenda-formation rules $f^{(1)}:\mathcal{P}%
(A)^{N}\longrightarrow \mathcal{P}(A)$ (namely, such that for every $%
C\subseteq A$ there exists $B_{N}\in \mathcal{P}(A)^{N}$ with $%
f^{(1)}(B_{N})=C$).

Under the class of \textit{sequential }rules, on the contrary, \textit{%
agents release their inputs in two steps}: first they provide concurrently
their proposed agendas to be aggregated into a shared agenda, then they
submit concurrently their preferences on the previously determined actual
agenda as their input to preference aggregation itself. Notice that in this
case, no redundancy in preference elicitation is to be expected.\footnote{%
This is in fact a most important feature that distinguishes PAFE and SAFE
social welfare functions. In principle, one could also consider a
preference-first version of SAFE social welfare functions. However, that
version would require a lot of redundance in preference elicitation, either
by eliciting preferences on the entire set $A$ (or even, most impractically,
eliciting specific preferences on every relevant subset of $A$). Now, the
first and more practical option gives rise to an aggregation procedure that
is essentially equivalent to a PAFE social welfare function (since
preferences on $A$ would be adapted to subsets by restriction). That is why
in the present work SAFE social welfare functions are actually identified
with their agenda-first variety.} Thus, a \textit{sequential
agenda-formation-enriched (SAFE) social welfare function }for $(N,A)$ is in
fact an \textit{agenda-contingent social welfare function, }namely a pair%
\textit{\ }$\widehat{\mathbf{f}}=(\widehat{f}^{1},\mathcal{F}(\widehat{f}%
^{1}))$ consisting of an agenda formation rule $\widehat{f}^{1}:\mathcal{P}%
(A)^{N}\longrightarrow \mathcal{P}(A)$, and a family

$\mathcal{F}(\widehat{f}^{1})=\left \{ \widehat{f}_{B}:\mathcal{R}%
_{B}^{N}\rightarrow \mathcal{R}_{B}\right \} _{B\in \widehat{f}^{1}[\mathcal{%
P}(A)]}$of \ possible social welfare functions, one for each possible agenda
selected by $\widehat{f}^{1}$. A particular case of special interest obtains
when the family of agenda-contingent social welfare functions is the \textit{%
uniform family }induced by its $\widehat{f}_{A}$, namely $\mathcal{F}(%
\widehat{f}^{1}):=\left \{ \widehat{f}_{B}:\mathcal{R}_{B}^{N}\rightarrow%
\mathcal{R}_{B}|\text{ }\widehat{f}_{B}:=(\widehat{f}_{A})_{|B}\right \}
_{B\in \widehat{f}^{1}[\mathcal{P}(A)]}$. In any case, again, the values of
possible social welfare functions according to $\widehat {\mathbf{f}}$ are
in $\dbigcup \limits_{B\in \mathcal{P}(A)}\mathcal{R}_{B}$.

As a consequence, \ SAFE and (decomposable) PAFE social welfare functions
essentially share $\dbigcup \limits_{B\in \mathcal{P}(A)}\mathcal{R}_{B}$ as
their common outcome space.

Therefore, it transpires that SAFE and PAFE social welfare functions for a
pair $(N,A)$ consist of functions whose domains and codomains result from
various combinations of `building blocks' chosen from the collection given
by $\mathcal{P}(A)$ and the sets $\mathcal{R}_{B}$ of all total preorders on 
$B$, for any $B\subseteq A$. As it turns out, such `building blocks' share a
common structure: all of them are \textit{median join-semilattices, }and
that fact will play a key role in the subsequent analysis of the behaviour
of PAFE and SAFE social welfare functions. Accordingly, we turn now to
providing a precise definition of median join-semilattices, and establishing
the previous claim on $\mathcal{P}(A)$ and the sets of the family $\left \{ 
\mathcal{R}_{B}\right \} _{B\subseteq A}$.

\begin{definition}
A (finite)\textbf{\ join-semilattice }is a pair $\mathcal{X}=(X,\leqslant)$ 
\textit{\ where }$X$ is a (finite) set and $\leqslant$ is a partial order
(i.e. a reflexive, transitive and antisymmetric binary relation) such that
the least upper bound or join $x\vee y$ (with respect to $\leqslant$) is
well-defined in $X$ for all $x,y\in X$ \ and thus $\vee:X\times X\rightarrow
X$ is a well-defined associative and \ commutative function that also
satisfies idempotency, namely $x\vee x=x$ for every $x\in X$.
\end{definition}

\begin{remark}
Thus a join-semilattice $\mathcal{X}=(X,\leqslant)$ can also be regarded as
a pair $(X,\vee)$ where $\vee:X\times X\longrightarrow X$ is an associative,
commutative and idempotent operation such that, for any $x,y\in X$, $x\vee
y=x$ iff $y\leqslant x$. Note that a \textit{partial }meet\textit{-}%
operation $\wedge:X\times X\longrightarrow X$ is also definable in $\mathcal{%
X}$ by means of the following rule: for any $x,y\in X$, $x\wedge y$ is the
(necessarily unique, whenever it exists) $z\in X$ such that: (i) $x\vee z=x$%
, $y\vee z=y$, and (ii) $v\vee z=z$ \ for every $v\in X$ which satisfies (i).
\end{remark}

Observe that a finite join-semilattice $\mathcal{X}=(X,\leqslant)$ has a
(unique) universal upper bound or \textit{top element} $\mathbf{1}=\vee
X=\wedge \varnothing$, and its \textit{co-atoms} are those elements $x\in X$
such that $x\ll \mathbf{1}$ (i.e. $x<\mathbf{1}$ and there is no $z\in X$
such that $x<z<\mathbf{1}$): the set of co-atoms of $\mathcal{X}%
=(X,\leqslant)$ is denoted by $\mathcal{C}_{\mathcal{X}}$ . An element $x\in
X$ is \textit{meet-irreducible }if for any $Y\subseteq X$, $x=\wedge Y$
entails $x\in Y$. \ Moreover, for any $Y\subseteq X$, $\vee Y$, respectively
is well-defined if and only if there exists $z\in X$ such that $y\leqslant z$
for all $y\in Y$, namely the elements of $Y$ have a \textit{common upper
bound}. The set of all meet-irreducible elements of $\mathcal{X}%
=(X,\leqslant)$ will be denoted by $M_{\mathcal{X}}$. Notice that, by
construction, for every $x\in X$, $x=\wedge M(x)$ where $M(x):=\left \{ m\in
M_{\mathcal{X}}:x\leqslant m\right \} $. By construction, a co-atom is also
a meet-irreducible element, but the converse need not be true. When co-atoms
and meet-irreducibles do in fact \textit{coincide} the join-semilattice is
said to be \textit{coatomistic}.\footnote{%
Dually, a (finite)\textbf{\ meet-semilattice }is a pair $\mathcal{X}%
=(X,\leqslant)$ \textit{\ }where\textit{\ }$X$ is a (finite) set, and
partial order $\leqslant$ is such that the greatest lower bound or \textit{%
meet} $x\wedge y$ (with respect to $\leqslant$) is well-defined in $X$ for
all $x,y\in X$ \ and thus $\wedge:X\times X\rightarrow X$ is a well-defined
associative and \ commutative function that also satisfies idempotency,
namely $x\wedge x=x$ for every $x\in X$.
\par
An element $x\in X$ of a meet-semilattice is\textit{\ join-irreducible }if
for any $x=\vee Y$ entails $x\in Y$ for any (finite) $Y\subseteq X$ such
that $\vee Y$, is well-defined. The set of all join-irreducible elements of $%
\mathcal{X}=(X,\leqslant)$ is denoted by $J_{\mathcal{X}}$. The \textit{atoms%
} of $\mathcal{X}$ are those elements $x\in X$ such that $\mathbf{0}\ll x$
(i.e. $\mathbf{0}<x$ and there is no $z\in X$ such that $\mathbf{0}<x<z$
where $\mathbf{0}=\wedge X=\vee \varnothing$): the set of atoms of $\mathcal{%
X}$ is denoted by $\mathcal{A}_{\mathcal{X}}$. Clearly, $\mathcal{A}_{%
\mathcal{X}}\subseteq J_{\mathcal{X}}$. The semilattice $\mathcal{X}$ \ is 
\textit{atomistic }if $\mathcal{A}_{\mathcal{X}}=J_{\mathcal{X}}$.
\par
Notice that, by construction, for every $x\in X$, $x=\vee J(x)$ where $%
J(x):=\left \{ j\in J_{\mathcal{X}}:j\leqslant x\right \} $.
\par
If a join-semilattice $\mathcal{X}=(X,\leqslant)$ is \textit{also }a
meet-semilattice then $\mathcal{X}$ is a \textbf{lattice }and \textit{%
absorption laws }hold, namely for any $x,y\in X$, $x\vee(y\wedge x)=x$ $%
=x\wedge(y\vee x)$.}

\begin{definition}
(\textbf{Median join-semilattice)\ }A (finite) \textit{join-semilattice} $%
\mathcal{X}=(X,\leqslant)$ is a \textbf{median}\textit{\ }\textbf{%
join-semilattice }if it also satisfies the following pair of conditions:

i) \textbf{upper distributivity}\textit{:} for all $u\in X$, and for all $%
x,y,z\in X$ such that $u$ is a lower bound of $\left \{ x,y,z\right \} $, $%
x\vee(y\wedge z)=(x\vee y)\wedge(x\vee z)$ (or, equivalently, $x\wedge(y\vee
z)=(x\wedge y)\vee(x\wedge z)$) holds i.e. $(\uparrow u,\leqslant_{|\uparrow
u})$ -where $\leqslant_{|\uparrow x}$denotes the restriction of $\leqslant$
to $\uparrow u$- is a \textit{distributive lattice\footnote{%
A partially ordered set $(Y,\leqslant)$ is a \textit{distributive lattice}
iff, for any $x,y,z\in X$, $x\wedge y$ and $x\vee y$ exist, and $%
x\wedge(y\vee z)=(x\wedge y)\vee(x\wedge z)$ (or, equivalently, $%
x\vee(y\wedge z)=(x\vee y)\wedge(x\vee z)$). Moreover, a (distributive)
lattice $\mathcal{X}$ is said to be \textit{lower (upper) bounded }if there
exists $\bot \in X$ \ ($\top \in X)$ such that $\bot \leqslant x$ $\ $($%
x\leqslant \top$) for all $x\in X$, and \textit{bounded}\textbf{\ }if it is
both lower bounded and upper bounded. A bounded distributive lattice $%
(X,\leqslant)$ is \textit{Boolean}\textbf{\ }if for each $x\in X$ there
exists a \textit{complement }namely an $x^{\prime}\in X$ such that $x\vee
x^{\prime}=\top$ and $x\wedge x^{\prime}=\bot$.}};

(ii) \textbf{co-coronation}\textit{\ (or meet-Helly property): }for all $%
x,y,z\in X$ if $x\wedge y$, $y\wedge z$ and $x\wedge z$ exist, then $%
(x\wedge y\wedge z)$ also exists\textit{.}
\end{definition}

In fact, it is easily checked that if $\mathcal{X}=(X,\leqslant)$ is a
median join-semilattice then the partial function $\mu_{\mathcal{X}%
}:X^{3}\rightarrow X$ defined as follows: for all $x,y,z\in X$, $\  \mu_{%
\mathcal{X}}(x,y,z)=(x\vee y)\wedge(y\vee z)\wedge(x\vee z)$

is in fact a \textit{well-defined ternary operation }on $X$\textit{, }the%
\textit{\ }\textbf{median }of $\mathcal{X}$ which satisfies the following
two characteristic properties (see Sholander (1952, 1954)):

$(\mu_{1})$ \ $\mu_{\mathcal{X}}(x,x,y)=x$ for all $x,y\in X$

$(\mu_{2})$ $\mu_{\mathcal{X}}(\mu_{\mathcal{X}}(x,y,v),\mu_{\mathcal{X}%
}(x,y,w),z)=\mu_{\mathcal{X}}(\mu_{\mathcal{X}}(v,w,z),x,y)$

for all $x,y,v,w,z\in X$.

A pair $(X,\mu)$ where $\mu$ is a ternary operation on $X$ that satisfies $%
(\mu_{1})$ and $(\mu_{2})$ is also said to be a \textit{median algebra}.

Relying on $\mu_{\mathcal{X}}$, a ternary (median-induced) \textbf{%
betweenness }relation

$B_{\mathcal{\mu}_{\mathcal{X}}}:=\left \{ (x,z,y)\in X^{3}:z=\mu _{\mathcal{%
X}}(x,y,z)\right \} $ can also be defined on $X$. \footnote{%
It should be recalled that such a median operation $\mu$ is also
well-defined in any \textit{distributive lattice }$\mathcal{X}=(X,\leqslant) 
$. Thus, every (finite) distributive lattice is in particular a (finite)
median join-semilattice (and a (finite) median meet-semilattice as well).}
The pair $(X,B_{\mathcal{\mu}_{\mathcal{X}}})$ is also said to be a \textit{%
median (ternary) space.}

\bigskip

\begin{remark}
It is worth emphasizing here that any finite median join-semilattice is
naturally endowed with two equivalent \textbf{metrics}\footnote{%
In a (finite) median join-semilattice $\mathcal{X}=(X,\leqslant)$ a \textit{%
metric }$d_{r}:X\times X\rightarrow \mathbb{Z}_{+}$ can be defined in a
natural way by the following rule: for any $x,y\in X$, $d_{r}(x,y)=2r(x\vee
y)-r(x)-r(y)$, where r is a \textit{rank function} of $\mathcal{X}$, namely
a function $r:X\rightarrow \mathbb{Z}_{+}$ such that, for any $x,y\in X$, $%
r(y)=r(x)+1$ whenever $x$ is an immediate $\leqslant$-predecessor of $y$.
\par
This metric turns out to be equivalent to the metric $\delta_{C(\mathcal{X}%
)} $ induced on $\mathcal{X}$ by the length of shortest path between any two
elements on the graph defined by the Hasse diagram of $\mathcal{X}$ (the
simple undirected graph having $X$ as its set of vertices, with edges
connecting each pair consisting of a vertex and one of its immediate $%
\leqslant$-predecessors).}, and that a further betweenness relation can be
defined on it relying on such metrics. However, it turns out that such a
metric-based betweenness is in fact equivalent to the median-based
betweenness $B_{\mu_{\mathcal{X}}}$ introduced above in the text (see e.g.
Sholander (1954), Avann (1961)).
\end{remark}

\bigskip

Thus, $B_{\mu_{\mathcal{X}}}$ is indeed a most natural `intrinsic'
betweenness relation and can also be regarded as `the' natural metric
betweenness attached to $\mathcal{X}$. Relying on such a betweenness $%
B_{\mu_{\mathcal{X}}}$, a `natural' notion of \textit{single-peakedness }for
preference preorders on $\mathcal{X}=(X,\leqslant)$ can be defined as
follows.

$\bigskip$

\begin{definition}
(\textbf{Single-peaked preference preorders on a median join-semilattice}).
Let $\mathcal{X}=(X,\leqslant)$ be a finite median join-semilattice and$\
\succcurlyeq$ a preorder i.e. a reflexive and transitive binary relation on $%
X$ (we shall denote by $\succ$ and $\sim$ its asymmetric and symmetric
components, respectively). Then, $\succcurlyeq$ is said to be \textbf{%
single-peaked \ }with respect to betweenness relation $B_{\mu _{\mathcal{X}%
}} $ (or $B_{\mu_{\mathcal{X}}}$-single-peaked) if and only if

$U$-$(i)$ there exists a \textit{unique maximum} of $\succcurlyeq $ in $X$,
its \textit{top }outcome -denoted $top(\succcurlyeq )$ - and

$U$-$(ii)$ for all $x,y,z\in X$, if $x=top(\succcurlyeq)$ and $z=\mu _{%
\mathcal{X}}(x,y,z)$ then not $y\succ z$.
\end{definition}

\bigskip

We denote by $U_{B_{\mu}}$ the set of all $B_{\mu_{\mathcal{X}}}$-\textit{%
single-peaked} preorders on $X$. An $N$-profile of $B_{\mu}$-single-peaked
preorders is a mapping from $N$ into $U_{B_{\mu}}$. We denote by $U_{B_{\mu_{%
\mathcal{X}}}}^{N}$ the set of all $N$-profiles of $B_{\mu}$-lu preorders.

Moreover, a set $D\subseteq U_{B_{\mu_{\mathcal{X}}}}^{N}$ of preorders
which are single-peaked w.r.t. $B_{\mu_{\mathcal{X}}}$ is a \textbf{rich}%
\textit{\ }\textbf{single-peaked domain }for $\mathcal{X}$\textbf{\ }if for
all $x,y\in X$ there exists $\succcurlyeq \in D$ such that $top(\succcurlyeq
)=x$ and $UC(\succeq,y)=\left \{ z\in X:z=\mu(x,y,z)\right \} $ (where $%
UC(\succeq,y):=\left \{ y\in X:x\succcurlyeq y\right \} $ is the upper
contour of $\succcurlyeq$ at $y$).

An aggregation rule $f$ \ for $(N,X)$ is \textbf{strategy-proof }\textit{on }%
$U_{B_{\mu_{\mathcal{X}}}}^{N}$ iff for all $B_{\mu_{\mathcal{X}}}$%
-single-peaked $N$-profiles $(\succcurlyeq_{i})_{i\in N}\in$ $U_{B_{\mu _{%
\mathcal{X}}}}^{N}$, and for all $i\in N$, $y_{i}\in X$, and $(x_{j})_{j\in
N}\in X^{N}$ such that $x_{j}=top(\succcurlyeq_{j})$ for each $j\in N$, 
\textit{not }$f((y_{i},(x_{j})_{j\in N\smallsetminus \left \{ i\right \}
}))\succ_{i}f((x_{j})_{j\in N})$. Finally, an aggregation rule $%
f:X^{N}\rightarrow X$ \ is $B_{\mu_{\mathcal{X}}}$-\textbf{monotonic}\textit{%
\ }iff for all $i\in N$, $y_{i}\in X$, and $(x_{j})_{j\in N}\in X^{N}$,

$f((x_{j})_{j\in N})=\mu_{\mathcal{X}}(x_{i},f((x_{j})_{j\in
N}),f(y_{i},(x_{j})_{j\in N\smallsetminus \left \{ i\right \} }))$.\footnote{%
$B_{\mu _{\mathcal{X}}}$-monotonicity of $f$ amounts to requiring all of its
projections $f_{i}$ to be \textit{gate maps }to the image of $f$ (see van de
Vel (1993), p.98 for a definition of gate maps). The introduction of $%
B_{\mu_{\mathcal{X}}}$-monotonic functions in a strategic social choice
setting is essentially due to Danilov (1994).}

In particular, let $\mathcal{X}=(X,\leqslant)$ be a finite \textit{%
join-semilattice} and $M_{\mathcal{X}}$ the set of its meet-irreducible
elements, and for any $x^{N}\in X^{N}$, and any $m\in M_{\mathcal{X}}$,
posit $N_{m}(x^{N}):=\left \{ i\in N:x_{i}\leqslant m\right \} $. Then, the
following properties of an aggregation rule can also be introduced:

$M_{\mathcal{X}}$\textbf{-Independence: }an aggregation rule $%
f:X^{N}\rightarrow X$ is $M_{\mathcal{X}}$\textbf{-independent }if and only
if for all $x_{N},y_{N}\in X^{N}$ and all $m\in M_{\mathcal{X}}$: if $%
N_{m}(x_{N})=N_{m}(y_{N})$ then $f(x_{N})\leqslant m$ if and only if $%
f(y_{N})\leqslant m$.

\textbf{Isotony: }an aggregation rule $f:X^{N}\rightarrow X$ is \textbf{%
Isotonic} if $\ f(x_{N})\leqslant f(x_{N}^{\prime})$ for all $%
x_{N},x_{N}^{\prime}\in X^{N}$ such that $x_{N}\mathbf{\leqslant}%
x_{N}^{\prime}$ (i.e. $x_{i}\leqslant x_{i}^{\prime}$ for each $i\in N$).

It can be easily shown (see Monjardet (1990)) that the \textit{conjunction}
of $M_{\mathcal{X}}$\textbf{-Independence }and \textbf{Isotony }is
equivalent to the following condition:

\textbf{\ Monotonic }$M_{\mathcal{X}}$\textbf{-Independence: }An aggregation
rule $f:X^{N}\rightarrow X$ is \textbf{monotonically }$M_{\mathcal{X}}$-%
\textbf{independent }if and only if for all $x_{N},y_{N}\in X^{N}$ and all $%
m\in M_{\mathcal{X}}$: if $N_{m}(x_{N})\subseteq N_{m}(y_{N})$ then $%
f(x_{N})\leqslant m$ implies $f(y_{N})\leqslant m$.\footnote{%
The notions of $J_{\mathcal{X}}$-Independence and Monotonic $J_{\mathcal{X}}$%
-Independence are defined similarly by dualization for a finite median
meet-semilattice $\mathcal{X}=(X,\leqslant)$ as follows: for all $%
x_{N},y_{N}\in X^{N}$ and all $j\in J_{\mathcal{X}}$, if
\par
$N_{j}(x_{N}):=\left \{ i\in N:j\leqslant x_{i}\right \} \subseteq$%
\par
$\subseteq N_{j}(y_{N}):=\left \{ i\in N:j\leqslant y_{i}\right \} $%
\par
then $j\leqslant f(x_{N})$ implies $j\leqslant f(y_{N})$.}

We are now ready to establish the following claim.

\bigskip

\begin{claim}
\textbf{\ }$(\mathcal{P}(A),\subseteq)$, $(\mathcal{R}_{A},\subseteq)$, $(%
\mathcal{R}_{B},\subseteq)$ for any $B\subseteq A$, and
\end{claim}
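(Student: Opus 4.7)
My overall strategy is to verify the two defining axioms of a median join-semilattice (upper distributivity and co-coronation/meet-Helly) for each poset on the list, handling the Boolean case $(\mathcal{P}(A),\subseteq)$ first and then treating $(\mathcal{R}_{B},\subseteq)$ uniformly for every $B\subseteq A$ (the case $B=A$ being a special instance). The case $(\mathcal{P}(A),\subseteq)$ is essentially immediate: this is the Boolean lattice $2^{A}$, with join $\cup$, meet $\cap$, top $A$, and bottom $\varnothing$. Boolean lattices are bounded distributive lattices, and the footnote to Definition 3 records that every finite distributive lattice is already a median join-semilattice. Upper distributivity then holds everywhere, and co-coronation is vacuous because arbitrary finite meets already exist.

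For $(\mathcal{R}_{B},\subseteq)$ I would identify each total preorder with its graph of ordered pairs, so that the top element is the universal indifference $B\times B$ and the join $R_{1}\vee R_{2}$ is the transitive closure of $R_{1}\cup R_{2}$; this closure stays total (totality is preserved by enlarging a relation inside $B\times B$) and transitive by construction, so joins always exist. The partial meet $R_{1}\wedge R_{2}$, when defined, is the set-theoretic intersection $R_{1}\cap R_{2}$, which at the level of associated ordered partitions corresponds to the common refinement whenever that refinement is itself an ordered partition. I would then identify the meet-irreducibles $M_{\mathcal{R}_{B}}$ with the \emph{two-class} total preorders $R_{S}$ indexed by proper nonempty $S\subsetneq B$, where $S$ is strictly preferred to $B\setminus S$ and both classes are internally indifferent. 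Since every $R\in\mathcal{R}_{B}$ is the intersection of all $R_{S}$ that contain it, the map $R\mapsto\{S:R\subseteq R_{S}\}$ is a canonical injection of $\mathcal{R}_{B}$ into the Boolean lattice $2^{M_{\mathcal{R}_{B}}}$ that preserves existing meets.

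The main work, and the main obstacle, is transferring upper distributivity and co-coronation across this embedding. For co-coronation I would argue directly: if $R_{1}\cap R_{2}$, $R_{2}\cap R_{3}$, $R_{1}\cap R_{3}$ are all total preorders, then a three-element case analysis on $B$ rules out any strict-preference three-cycle in $R_{1}\cap R_{2}\cap R_{3}$, so the triple intersection is itself connected and hence a total preorder. For upper distributivity, I would observe that the principal upset $\uparrow R$ consists of the total preorders coarser than $R$ and is in bijection with the ordered partitions of the $R$-quotient of $B$; embedding this lattice into a product of chains yields distributivity of $\uparrow R$ as required. The hardest technical point is the Helly step for co-coronation --- care is needed because the meet-irreducibles do not form a Boolean lattice inside $\mathcal{R}_{B}$ but only an embedded sub-meet-semilattice --- and I would resolve it precisely through the three-element case analysis on strict-preference cycles sketched above.
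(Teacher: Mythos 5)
Your overall plan (treat $(\mathcal{P}(A),\subseteq)$ as a Boolean, hence distributive, lattice; take the join on $\mathcal{R}_{B}$ to be the transitive closure of the union, the partial meet to be intersection when total, and the meet-irreducibles to be the two-class preorders) matches the paper's setup, but the paper does not verify the two axioms by hand: it simply invokes Janowitz (1984) (Claims (P.1), (P.3), (P.5)) for upper distributivity and co-coronation of $(\mathcal{R}_{B},\overline{\cup})$. Your attempt to replace those citations with direct arguments is legitimate in principle, but both verification steps are misstated in ways that would break the proof as written. For upper distributivity, $\uparrow R$ is \emph{not} in bijection with the ordered partitions of the quotient $B/\sim_{R}$: that set is the whole semilattice of total preorders on the quotient, which is precisely the kind of object whose distributivity is at issue (and which is not even a lattice once the quotient has three or more classes). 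The total preorders containing $R$ are exactly those obtained by merging \emph{consecutive} indifference classes of $R$, so $\uparrow R$ is in bijection with the subsets of the $k-1$ cuts of $R$'s ordered partition ($k$ the number of classes), i.e.\ it is a Boolean lattice; that, not an order-embedding into a product of chains (which by itself does not transfer distributivity unless it is a lattice embedding), is what gives distributivity of $\uparrow R$. For co-coronation, ``no strict-preference three-cycle, hence connected'' is a non sequitur: the intersection of preorders is a preorder, so its asymmetric part is automatically acyclic, and acyclicity says nothing about totality. The correct (and simpler) argument is pairwise: the meet $R_{i}\wedge R_{j}$ exists iff $R_{i}\cap R_{j}$ is total (otherwise there is no common lower bound in $\mathcal{R}_{B}$ at all), and if $R_{1}\cap R_{2}\cap R_{3}$ failed to relate some pair $x,y$, then two of the three relations would rank $x,y$ strictly in opposite ways, so already one pairwise intersection would fail to be total, contradicting the hypothesis; hence the triple intersection is total and is the greatest lower bound. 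With these two repairs your direct verification does go through, and the preliminary detour through the embedding of $\mathcal{R}_{B}$ into $2^{M_{\mathcal{R}_{B}}}$ becomes unnecessary.

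Two further points of comparison with the paper. First, the claim as stated there also covers the disjoint union $\bigcup_{B\subseteq A}(\mathcal{R}_{B},\subseteq)$ (the statement you saw was truncated); the paper handles it as a coproduct of the family $\{(\mathcal{R}_{B},\subseteq)\}_{B\subseteq A}$ with the median given blockwise, and your proposal does not address it. Second, the paper's proof also records, for later use, that the meet-irreducibles of $(\mathcal{R}_{B},\overline{\cup})$ coincide with its co-atoms, which you assert for the meet-irreducibles but do not need for the present claim.
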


$\dbigcup \limits_{B\in \mathcal{P}(A)}(\mathcal{R}_{B},\subseteq)$ are 
\textit{median join-semilattices}.

\begin{proof}
Let us define the join of two total preorders $R,R^{\prime}\in \mathcal{R}%
_{A} $ as the \textit{transitive closure} $\overline{\cup}$ of their
set-theoretic union. Then, by construction, $\mathcal{X}:=(\mathcal{R}_{A},%
\overline{\cup})$ is a \textit{join-semilattice}, and satisfies both \textit{%
upper distributivity} (by Claim (P.1) of Janowitz (1984)), and \textit{%
co-coronation} (by Claims (P.3) and (P.5) of Janowitz (1984)). It follows
that $(\mathcal{R}_{A},\overline{\cup})$ thus defined is indeed a \textit{%
median join-semilattice }(whose median ternary operation is denoted here $%
\mu^{\prime}$), and its meet-irreducibles are the \textit{total preorders }$%
R_{A_{1}A_{2}}\in \mathcal{R}_{A}$ \textit{having just two (non-empty)
indifference classes} $A_{1},A_{2}$ such that (i) $(A_{1},A_{2})$ is a
two-block ordered partition of $A$, written $(A_{1},A_{2})\in \Pi _{A}^{(2)}$%
, namely $A_{1}\cup A_{2}=A$, $A_{1}\cap A_{2}=\emptyset$ and (ii) [$%
xR_{A_{1}A_{2}}y$ and \textit{not }$yR_{A_{1}A_{2}}x$] if and only if $x\in
A_{1}$ and $y\in A_{2}$. It can be easily checked that such total preorders $%
R_{A_{1}A_{2}}$ with $(A_{1},A_{2})\in \Pi_{A}^{(2)}$ are also the \textit{%
co-atoms of }$(\mathcal{R}_{A},\overline{\cup})$. Of course, the very same
argument applies to $(\mathcal{R}_{B},\overline{\cup})$, for every $%
B\subseteq A$. Moreover, the partially ordered set $\mathcal{X}^{\prime }:=(%
\mathcal{P}(A),\subseteq)$ of agendas is of course a bounded distributive
lattice with respect to set-theoretic union $\cup$ and intersection $\cap$.
Hence $(\mathcal{P}(A),\cup)$ is in particular a median join-semilattice. As
a consequence, the product join-semilattice $\mathcal{X}\times \mathcal{X}%
^{\prime}:=(\mathcal{R}_{A}\times \mathcal{P}(A),\overline{\cup}\times \cup)$
is also a \textit{median join-semilattice: }indeed, the ternary
product-operation $\mu_{\mathcal{X}}\times \mu_{\mathcal{X}^{\prime}}:(%
\mathcal{R}_{A}^{T}\times \mathcal{P}(A))^{3}\longrightarrow \mathcal{R}%
_{A}\times \mathcal{P}(A)$ inherits the characteristic median properties $%
\mu(i),\mu(ii)$ (as previously defined above) from its components. Finally, $%
\dbigcup \limits_{B\in \mathcal{P}(A)}(\mathcal{R}_{B},\subseteq)$ is a
median join-semilattice with join $\overline{\cup}$, because it is a
(disjoint) \textit{sum }(or co-product) of the family $\left \{ (\mathcal{R}%
_{B},\subseteq)\right \} _{B\subseteq A} $ of median join-semilattices, and
its median operation $\mu^{\sqcup}$ is defined as follows: for any $B,C,D\in%
\mathcal{P}(A)$, and $R^{B}\in \mathcal{R}_{B}$, $R^{C}\in \mathcal{R}_{C}$, 
$R^{D}\in \mathcal{R}_{D}$,

$\mu^{\sqcup}(R^{B},R^{C},R^{D})=(R^{B}\overline{\cup}R^{C})\cap (R^{C}%
\overline{\cup}R^{D})\cap(R^{B}\overline{\cup}R^{D})$.
\end{proof}

\bigskip

Therefore, in particular, $\mathcal{X}^{\ast}:=(\dbigcup \limits_{B\in%
\mathcal{P}(A)}\mathcal{R}_{B},\overline{\cup})$ is also endowed with a
`natural' metric $d$ (namely $d=d_{r}=\delta_{C(\mathcal{X}^{\ast})}$). But
then, any preference relation $R\in$ $\dbigcup \limits_{B\in \mathcal{P}(A)}%
\mathcal{R}_{B}$ induces in a `natural' way a reflexive preference relation $%
\mathbf{R}_{R}$ on $\dbigcup \limits_{B\in \mathcal{P}(A)}\mathcal{R}_{B}$
which has $R$ itself as its unique maximum and is \textit{single-peaked}
with respect to $d$, being defined as follows: for any $R^{\prime},R^{\prime%
\prime}$ $\in(\dbigcup \limits_{B\in \mathcal{P}(A)}\mathcal{R}%
_{B})\smallsetminus \left \{ R\right \} $, $R^{\prime}\mathbf{R}%
_{R}R^{\prime \prime}$ holds if and only if $R^{\prime}$ lies on a geodesic
from $R$ to $R^{\prime \prime}$ on the Hasse diagram $C(\mathcal{X}^{\ast})$%
. Moreover, it can be shown that any such $\mathbf{R}_{R}$ is also \textit{%
transitive}\footnote{%
See e.g. Sholander (1954), Section 3, property 3.6 for a proof.}.

Thus, it turns out that $\dbigcup \limits_{B\in \mathcal{P}(A)}\mathcal{R}%
_{B} $ can also be `naturally' endowed with a set $\mathcal{D}%
^{sp(d)}:=\left \{ \mathbf{R}_{R}:R\in \dbigcup \limits_{B\in \mathcal{P}(A)}%
\mathcal{R}_{B}\right \} $ of \textit{preorders }on $\dbigcup \limits_{B\in%
\mathcal{P}(A)}\mathcal{R}_{B}$ which are \textit{single-peaked} with
respect to the `intrinsic' metric $d$ of $\dbigcup \limits_{B\in \mathcal{P}%
(A)}\mathcal{R}_{B}$ itself.\footnote{%
Notice that $\mathcal{D}^{sp(d)}$ includes the set $\mathcal{R}^{sp(d)}$ of
all \textit{total preorders }(hence in particular all the \textit{linear
orders}) on $\dbigcup \limits_{B\in \mathcal{P}(A)}\mathcal{R}_{B}$ which
are single-peaked with respect to $d$. Moreover, $\mathcal{R}^{sp(d)}$
includes\ in turn the subclass $\mathcal{R}^{msp(d)}$ of all \textit{metric }%
single-peaked total preorders on $\dbigcup \limits_{B\in \mathcal{P}(A)}%
\mathcal{R}_{B}$ (namely those total preorders which are \textit{entirely }%
determined by the $d$-distance from the peak).}

\bigskip 

\textbf{2. Main results.}

We are now eventually ready to provide two distinct definitions of
agenda-manipulation proofness to be matched, respectively, to PAFE and SAFE
social welfare functions as defined above.

$\mathcal{\ }$

\textbf{Definition }\textit{Agenda Manipulation-Proofness of a PAFE social
welfare function (AMP}$_{P}$\textit{). }A PAFE social welfare function $%
\mathbf{f:(}\mathcal{P}(A)^{N}\times \mathcal{R}_{A})^{N}\longrightarrow 
\mathcal{P}(A)\times \mathcal{R}_{A}$ with projections $\mathbf{f}_{1},%
\mathbf{f}_{2\text{ }}$is AMP$_{P}$ if for all $i\in N$, $R_{N}\in \mathcal{R%
}_{A}^{N}$, and $B_{N},B_{N}^{\prime}\in \mathcal{P}(A)^{N}$ such that $C=%
\mathbf{f}_{1}(B_{N},R_{N})\subseteq \mathbf{f}_{1}(B_{N}^{\prime },R_{N})=D$%
,

$\mathbf{f}_{2}(B_{N},R_{N})_{|C}\mathbf{R}_{R_{i}}\mathbf{f}%
_{2}(B_{N}^{\prime},R_{N})_{|C}$ iff $\mathbf{f}_{2}(B_{N}^{%
\prime},R_{N})_{|C}\mathbf{R}_{R_{i}}\mathbf{f}_{2}(B_{N},R_{N})_{|C}$.

Accordingly, a \textit{social welfare function} $f:\mathcal{R}%
_{A}^{N}\rightarrow \mathcal{R}_{A}$ is said to be AMP$_{P}$ if the PAFE $%
\mathbf{f\simeq}f^{0}\times f$ is AMP$_{P}$ for every \textit{sovereign}
agenda formation rule $f^{0}:\mathcal{P}(A)^{N}\longrightarrow \mathcal{P}%
(A) $.

\bigskip

In other words, AMP$_{P}$ requires that at every preference profile $%
R_{N}=(R_{i})_{i\in N}$ on the entire set $A$ of admissible alternatives
each agent $i$ be \textit{indifferent }(according to her preference\ $%
\mathbf{R}_{R_{i}}$on preference preorders on $A$ as induced by her actual
preference $R_{i}$ on $A$) between the \textit{restriction} of the \textit{%
social} preference $f(R_{N})$ to an arbitrary agenda $C$, \textit{no matter
if that agenda is the actually selected agenda }$D$ \textit{or just a
subagenda of }$D$.

\bigskip

\textbf{Definition }\textit{Agenda Manipulation-Proofness of a SAFE social
welfare function (AMP}$_{S}$\textit{) }

A SAFE social welfare function\textit{\ }$\widehat{\mathbf{f}}=(f^{0},%
\mathcal{F}(f^{0}))$ (with

$\mathcal{F}(f^{0})=\left \{ f_{B}^{0}:\mathcal{R}_{B}^{N}\rightarrow 
\mathcal{R}_{B}\right \} _{B\in f^{0}[\mathcal{P}(A)]}$as defined above) is
AMP$_{S}$ if for all $i\in N$, $R_{N}\in \mathcal{R}_{A}^{N}$, and $%
B_{N},B_{N}^{\prime}\in \mathcal{P}(A)^{N}$, $C,D\in \mathcal{P}(A)$ such
that $C=f^{0}(B_{N})\subseteq f^{0}(B_{N}^{\prime})=D$, $\ $

$f_{C}^{0}((R_{N})_{|C})\mathbf{R}_{R_{i}}(f_{D}^{0}((R_{N})_{|D})_{|C}$ iff 
$\ f_{D}^{0}((R_{N})_{|D})_{|C}\mathbf{R}_{R_{i}}\ f_{C}^{0}((R_{N})_{|C})$.

Accordingly, a \textit{social welfare function} $f:\mathcal{R}%
_{A}^{N}\rightarrow \mathcal{R}_{A}$ is AMP$_{S}$ iff for every \textit{%
sovereign} agenda formation rule $f^{0}:\mathcal{P}(A)^{N}\longrightarrow%
\mathcal{P}(A)$ the SAFE social welfare function $\widehat{\mathbf{f}}%
=(f^{0},\mathcal{F})$ with \textit{uniform }family $\mathcal{F}=\left \{
(f_{A})_{|B}:\mathcal{R}_{B}^{N}\rightarrow \mathcal{R}_{B}\right \} _{B\in%
\mathcal{P}(A)}$ induced by $f$\ is AMP$_{S}$.

\bigskip

Thus, AMP$_{S}$ requires that at every preference profile $%
R_{N}=(R_{i})_{i\in N}$ on the entire set $A$ of admissible alternatives
each agent $i$ be \textit{indifferent }(according to her preference\ $%
\mathbf{R}_{R_{i}}$on preference preorders on $A$ as induced by her actual
preference $R_{i}$ on $A$) between the \textit{social} preference $%
f_{C}^{0}((R_{N})_{|C})$ at the \textit{restriction} of $R_{N}$ to any
selected agenda $C$, and the \textit{restriction }to $C$ of the \textit{%
social} preference $f_{D}^{0}((R_{N})_{|D})$ at the \textit{restriction} of $%
R_{N}$ to any other selected agenda $D\supseteq C$.\textit{\ }

As mentioned above, in his classic work (Arrow (1963)) Arrow refers to the
need to prevent agenda manipulation as the main argument to support the
requirement of Independence of Irrelevant Alternatives for social welfare
functions, that is defined as follows.

\bigskip

\textbf{Definition }\textit{Independence of Irrelevant Alternatives (IIA)}.

A social welfare function $f_{A}:\mathcal{R}_{A}^{N}\rightarrow \mathcal{R}%
_{A}$ satisfies IIA iff for all $R_{N},R_{N}^{\prime}\in \mathcal{R}_{A}^{N} 
$, and $B\in \mathcal{P}(A)$, $(f(R_{N}))_{|B}=(f(R_{N}^{\prime}))_{|B}$
whenever $(R_{N})_{|B}=(R_{N}^{\prime})_{|B}$.

\bigskip

Therefore, we have just introduced \textit{three }distinct conditions that
are meant to address the \textit{same} problem, namely preventing agenda
manipulation. A first fact about such conditions is worth mentioning at the
outset: when regarded as conditions on social welfare functions both AMP$%
_{P} $ and AMP$_{S}$ only make reference to an arbitrary \textit{single} 
\textit{preference profile} on $A$, while IIA concerns an arbitrary \textit{%
pair of} \textit{preference profiles }on $A$. That contrast is quite
remarkable, because reference to a \textit{single} preference profile is a
feature that seems to make full sense, in view of Arrow's overt intention to
put aside all the issues related to possible strategic misrevelation of
preferences. Notice, however, that in Arrow's work the notion of agenda
manipulation-proofness is only introduced in a quite informal way.
Accordingly, our next task is to explore the precise relationship of IIA to
each one of the agenda manipulation-proofness properties introduced above.

Let us start from AMP$_{P}$. Indeed, our first finding is that IIA is 
\textit{not at all} related to AMP$_{P}$.

\bigskip

\begin{proposition}
Let $f:\mathcal{R}_{A}^{N}\rightarrow \mathcal{R}_{A}$ be a social welfare
function and $\mathbf{f}:(\mathcal{R}_{A}\times \mathcal{P}%
(A))^{N}\rightarrow \mathcal{R}_{A}\times \mathcal{P}(A)$ a decomposable PAFE
social welfare function for $(N,A)$ such that $\mathbf{f\simeq}f^{0}\times f$
where\ $f^{0}$ is an arbitrary sovereign agenda-formation rule. Then, $%
\mathbf{f}$ is AMP$_{P}$ (and consequently $f$ is also AMP$_{P}$, by
definition).
\end{proposition}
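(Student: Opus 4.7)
The plan is to observe that the statement is essentially immediate once one unpacks what decomposability does to the AMP$_P$ condition, and that the main (indeed only) content of the argument is a reflexivity-based triviality. I would open by fixing the data referenced in AMP$_P$: an agent $i \in N$, a profile $R_N \in \mathcal{R}_A^N$, two agenda-proposal profiles $B_N, B_N' \in \mathcal{P}(A)^N$, and the resulting selected agendas $C = \mathbf{f}_1(B_N, R_N)$ and $D = \mathbf{f}_1(B_N', R_N)$ with $C \subseteq D$.

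The key step is then to use the decomposition hypothesis $\mathbf{f} \simeq f^0 \times f$. By the very definition of decomposability recalled in the excerpt, the preference-component projection $\mathbf{f}_2$ depends only on the preference profile $R_N$ and not on the agenda-proposal profile: $\mathbf{f}_2(B_N, R_N) = f(R_N) = \mathbf{f}_2(B_N', R_N)$. Restricting both sides to $C$ therefore yields the \emph{same} total preorder on $C$, namely $f(R_N)_{|C}$. Hence the two total preorders compared in the biconditional defining AMP$_P$ coincide.

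To conclude, I would invoke reflexivity of the meta-preference $\mathbf{R}_{R_i}$, which was established when the relations $\mathbf{R}_{R}$ were constructed as single-peaked preorders on $\bigcup_{B \in \mathcal{P}(A)} \mathcal{R}_B$ with $R$ itself as the unique peak. Reflexivity gives $f(R_N)_{|C} \mathbf{R}_{R_i} f(R_N)_{|C}$, so both sides of the biconditional hold; in particular they are equivalent. Since $i$, $R_N$, $B_N$ and $B_N'$ were arbitrary, $\mathbf{f}$ is AMP$_P$. The statement for a bare social welfare function $f$ then follows by the definition given just above the proposition, since the above argument was uniform in the choice of sovereign agenda-formation rule $f^0$.

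There is no real obstacle here: the substantive content lies in the \emph{formulation}, which makes transparent that, under the parallel protocol, preference elicitation and agenda formation are causally decoupled, so an agent's agenda proposal cannot alter the elicited social preference and a fortiori cannot alter its restriction to any sub-agenda. The only thing to be slightly careful about is not to overclaim: the proposition does not say anything about the joint variation of both projections, only about the preference-projection $\mathbf{f}_2$ that appears in the AMP$_P$ condition, which is exactly why decomposability suffices.
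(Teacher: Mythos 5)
Your argument is correct and coincides with the paper's own proof: both rest on the observation that decomposability forces $\mathbf{f}_{2}(B_{N},R_{N})=\mathbf{f}_{2}(B_{N}^{\prime},R_{N})=f(R_{N})$, so the two restrictions to $C$ are identical and the biconditional in AMP$_{P}$ holds by reflexivity of $\mathbf{R}_{R_{i}}$. No gaps; the final step passing from $\mathbf{f}$ to $f$ via arbitrariness of the sovereign agenda-formation rule is exactly as in the paper.
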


\begin{proof}
Straightforward: let $R_{N}\in \mathcal{R}_{A}^{N}$, and $%
B_{N},B_{N}^{\prime }\in \mathcal{P}(A)^{N}$ such that $C=\mathbf{f}%
_{1}(B_{N},R_{N})\subseteq \mathbf{f}_{1}(B_{N}^{\prime},R_{N})=D$. By
decomposability of $\mathbf{f}$, $\mathbf{f}_{2}(B_{N},R_{N})=\mathbf{f}%
_{2}(B_{N}^{\prime},R_{N})=f(R_{N})$. Hence, for every $i\in N$, \textbf{\ }%
both $\mathbf{f}_{2}(B_{N},R_{N})_{|C}\mathbf{R}_{R_{i}}\mathbf{f}%
_{2}(B_{N}^{\prime},R_{N})_{|C}$ and

$\mathbf{f}_{2}(B_{N}^{\prime},R_{N})_{|C}\mathbf{R}_{R_{i}}\mathbf{f}%
_{2}(B_{N},R_{N})_{|C}$ hold by reflexivity of $\mathbf{R}_{R_{i}}$, and the
thesis follows.
\end{proof}

\bigskip

Observe that, when formally considered as a condition for a PAFE social
welfare function $\mathbf{f}$, AMP$_{P}$ is in fact an \textit{interprofile }%
condition\footnote{%
See Fishburn (1973) for a careful classification of structural, interprofile
and intraprofile conditions for social welfare functions and related
constructs.} because it involves \textit{two} profiles $%
(B_{N},R_{N}),(B_{N}^{\prime},R_{N})$ in $(\mathcal{R}_{A}\times \mathcal{P}%
(A))^{N}$. However, the projection of AMP$_{P}$ to the social welfare
component $f$ of $\mathbf{f}$ collapses in fact to an \textit{intraprofile}
condition since it involves a single profile $R_{N}$ $\in \mathcal{R}%
_{A}^{N} $. Now, IIA is of course an \textit{interprofile} condition for
social welfare functions involving arbitrary \textit{pairs} of profiles in $%
\mathcal{R}_{A}^{N}$. Therefore, ostensibly, AMP$_{P}$ and IIA are \textit{%
mutually unrelated} as conditions for social welfare functions.

Let us now turn to the relationship between AMP$_{s}$ of a SAFE social
welfare function and IIA. In order to accomplish that task, we shall take
advantage of the notion of \  \textit{projectivity }of a preference profile
for a social welfare function. Indeed, let $B\subseteq A$, $f:\mathcal{R}%
_{B}^{N}\rightarrow \mathcal{R}_{B}$ and $R_{N}\in \mathcal{R}_{B}^{N}$:
then, $R_{N}$ is a\textit{\ projective profile }for $f$ if there exists $%
i\in N$ such that $f(R_{N})=R_{i}$. Next, we introduce a considerably
weakened version of IIA, namely its restriction to projective profiles in
the following sense:

\bigskip

\textit{Independence of Irrelevant Alternatives at Projective Profiles (IIAP)%
}.

A social welfare function $f:\mathcal{R}_{A}^{N}\rightarrow \mathcal{R}_{A}$
satisfies IIAP iff for all $B\in \mathcal{P}(A)$ and $R_{N},R_{N}^{\prime}\in%
\mathcal{R}_{A}^{N}$, if $R_{N}$ and $R_{N}^{\prime}$ are projective
profiles for $f$ and $(R_{N})_{|B}=(R_{N}^{\prime})_{|B}$ then $%
(f(R_{N}))_{|B}=(f(R_{N}^{\prime}))_{|B}$.

\bigskip

Observe that IIAP is indeed \textit{strictly weaker} than IIA. To check the
validity of that statement just consider the social welfare function $%
f_{i^{\ast}}^{BC}$for $(N,A)$ defined as follows: for $R_{N}\in \mathcal{R}%
_{A}^{N}$, $f_{i^{\ast}}^{BC}(R_{N}):=f^{BC}(R_{N})$ (where $f^{BC}$ denotes
the Borda-Count scoring aggregation rule) if $R_{N}$ is \textit{not} a
projective profile for $f^{BC}$ and $f_{i^{\ast}}^{BC}(R_{N}):=R_{i^{\ast}}$
otherwise. By construction, $R_{N}\in \mathcal{R}_{A}^{N}$ is projective for 
$f_{i^{\ast}}^{BC}$ if and only if $f_{i^{\ast}}^{BC}(R_{N})=R_{i^{\ast}}$%
hence $f_{i^{\ast}}^{BC}$ does satisfy IIAP. However, $f_{i^{\ast}}^{BC}$
clearly violates IIA:\ it is easily checked that there exist profiles $%
R_{N},R_{N}^{\prime}$ that are not projective for $f^{BC}$ and such that $%
(f_{i^{\ast}}^{BC}(R_{N}))_{|B}\neq(f_{i^{\ast}}^{BC}(R_{N}^{\prime}))_{|B}$
for some $B\subseteq A$.

We are now ready to show that AMP$_{S}$ is in fact \textit{tightly }%
connected to IIA, as established by the following proposition.

\bigskip \ 

\begin{proposition}
Let $f:\mathcal{R}_{A}^{N}\rightarrow \mathcal{R}_{A}$ be a social welfare
function for $(N,A)$. Then, (i) if \ $f$ satisfies IIA then $f$ is also AMP$%
_{S}$; ii) if $f$ is AMP$_{S}$ then $f$ satisfies IIAP. $\  \  \ $
\end{proposition}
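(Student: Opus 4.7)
The plan is to prove (i) by showing the AMP$_S$ biconditional collapses to reflexivity once IIA makes the uniform family well-defined, and to prove (ii) by exploiting the gate-property of restriction maps in the median semilattice $\bigcup_C \mathcal{R}_C$ together with AMP$_S$ applied at each projective profile.

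For (i), assume $f$ satisfies IIA. Then $(f(\tilde{S}_N))_{|B}$ is independent of the choice of extension $\tilde{S}_N \in \mathcal{R}_A^N$ of $S_N^B \in \mathcal{R}_B^N$, so the uniform family $\widehat{f}_B := (f)_{|B} : \mathcal{R}_B^N \to \mathcal{R}_B$ is well-defined. Given $R_N \in \mathcal{R}_A^N$ and $C \subseteq D \subseteq A$, take $R_N$ itself as the extension of both $(R_N)_{|C}$ and $(R_N)_{|D}$: this yields $\widehat{f}_C((R_N)_{|C}) = (f(R_N))_{|C} = ((f(R_N))_{|D})_{|C} = (\widehat{f}_D((R_N)_{|D}))_{|C}$. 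The two objects compared in the AMP$_S$ biconditional are identical, and both directions hold by reflexivity of $\mathbf{R}_{R_i}$.

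For (ii), assume $f$ is AMP$_S$; let $R_N, R_N' \in \mathcal{R}_A^N$ be projective for $f$ with $(R_N)_{|B} = (R_N')_{|B}$, and write $f(R_N) = R_i$, $f(R_N') = R_{j'}'$. Pick a sovereign $f^0$ together with $B_N, B_N' \in \mathcal{P}(A)^N$ such that $f^0(B_N) = B$ and $f^0(B_N') = A$. Apply AMP$_S$ to $R_N$ at agent $i$ with $C = B$, $D = A$: setting $X := \widehat{f}_B((R_N)_{|B})$ and $Y := (f(R_N))_{|B} = (R_i)_{|B}$, AMP$_S$ asserts $X \mathbf{R}_{R_i} Y$ iff $Y \mathbf{R}_{R_i} X$. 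The key geometric step is the \emph{gate property}: for every $X \in \mathcal{R}_B$, the restriction $(R_i)_{|B}$ is median-between $R_i$ and $X$ in $\bigcup_C \mathcal{R}_C$. To verify this, compute $\mu^\sqcup(R_i, X, (R_i)_{|B})$ from the formula in the Claim; since $(R_i)_{|B} \subseteq R_i$ one has $R_i \overline{\cup} (R_i)_{|B} = R_i$, and since $X \overline{\cup} (R_i)_{|B} \supseteq (R_i)_{|B}$ the triple intersection simplifies to $(R_i)_{|B}$. Hence $Y \mathbf{R}_{R_i} X$, whence $X \mathbf{R}_{R_i} Y$ by the AMP$_S$ biconditional; two-sided betweenness from the peak $R_i$ forces $X = Y$ by the triangle inequality in the intrinsic metric (see the Remark). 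Thus $\widehat{f}_B((R_N)_{|B}) = (R_i)_{|B}$, and the same argument applied to $R_N'$ at agent $j'$ gives $\widehat{f}_B((R_N')_{|B}) = (R_{j'}')_{|B}$. Since $\widehat{f}_B$ is a function of its input and $(R_N)_{|B} = (R_N')_{|B}$, the outputs coincide, whence $(f(R_N))_{|B} = (R_i)_{|B} = (R_{j'}')_{|B} = (f(R_N'))_{|B}$, which is IIAP.

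The main obstacle is the gate-property computation paired with the observation that two-sided median-betweenness from a fixed peak forces equality; both follow from the formula for $\mu^\sqcup$ in the Claim and the metric equivalence noted in the Remark, but they carry the nontrivial geometric content of the argument. A secondary care point is that the uniform family $\widehat{f}_B$ must be interpreted as a genuine function of the restricted profile $(R_N)_{|B}$ alone, so that the outputs at $R_N$ and $R_N'$ can be equated when the restricted profiles agree.
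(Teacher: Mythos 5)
Your proposal is correct and takes essentially the same route as the paper: part (i) is the paper's argument verbatim (IIA makes the uniform family well-defined, the two compared restrictions coincide, and the biconditional holds by reflexivity of $\mathbf{R}_{R_i}$), while part (ii) is the paper's contradiction argument run in the direct direction, using the same ingredients (the uniform SAFE with $C=B$, $D=A$ via sovereignty, the fact that $\widehat{f}_B$ depends only on the restricted profile, and the betweenness of $(R_i)_{|B}$ with respect to $\mathbf{R}_{R_i}$). The only difference is expository: you make explicit, via the $\mu^{\sqcup}$ formula and the two-sided-betweenness-forces-equality step, the gate property that the paper invokes tacitly with ``by definition of $\mathbf{R}_{R_i}$''.
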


\begin{proof}
(i) Let $f:\mathcal{R}_{A}^{N}\rightarrow \mathcal{R}_{A}$ be a social
welfare function for $(N,A)$ that satisfies $IIA$. Then, take any sovereign
agenda-formation rule $f^{0}:\mathcal{P}(A)^{N}\longrightarrow \mathcal{P}%
(A) $ and consider the \textit{uniform} SAFE social welfare function $%
\widehat {\mathbf{f}}=(f^{0},\mathcal{F}(f^{0}))$ induced by $f$, namely
with $\mathcal{F}(f^{0}):=\left \{ f_{B}:\mathcal{R}_{B}^{N}\rightarrow 
\mathcal{R}_{B}\right \} _{B\subseteq A}$ defined as follows: for every $%
B\subseteq A$, and $Q_{N}\in \mathcal{R}_{B}^{N}$, $%
f_{B}(Q_{N}):=(f(R_{N}))_{|B}$ where $R_{N}\in \mathcal{R}_{A}$ is such that 
$(R_{N})_{|B}=Q_{N}$. Clearly, such an $f_{B}$ is well-defined precisely
because $f$ satisfies IIA. But then, take any $R_{N}\in \mathcal{R}_{A}^{N}$%
, and $B_{N},B_{N}^{\prime}\in \mathcal{P}(A)^{N}$, $C,D\in \mathcal{P}(A)$
such that $C=f^{0}(B_{N})\subseteq f^{0}(B_{N}^{\prime})=D$, and consider $%
f_{C}((R_{N})_{|C})$\textbf{\ }and $(f_{D}((R_{N})_{|D}))_{|C}$. By
definition $%
f_{C}((R_{N})_{|C})=(f(R_{N}))_{|C}=(f(R_{N})_{|D})_{|C}=(f_{D}((R_{N})_{|D}))_{|C} 
$ whence $\ $

$f_{C}((R_{N})_{|C})\mathbf{R}_{R_{i}}(f_{D}((R_{N})_{|D}))_{|C}$ iff $\
(f_{D}((R_{N})_{|D}))_{|C}\mathbf{R}_{R_{i}}\ f_{C}((R_{N})_{|C})$ for all $%
i\in N$ i.e. $\widehat{\mathbf{f}}$ satisfies AMP$_{S}$ hence by definition $%
f$ is also AMP$_{S}$.

(ii) Suppose that $f$ \ is AMP$_{S}$ yet it violates IIAP. Thus, there exist 
$R_{N},R_{N}^{\prime}\in \mathcal{R}_{A}^{N}$ and $B\subseteq A$ such that $%
R_{N}$,$R_{N}^{\prime}$ are projective profiles for $f$ and $%
(R_{N})_{|B}=(R_{N}^{\prime})_{|B}$ yet $(f(R_{N}))_{|B}\neq(f(R_{N}^{%
\prime}))_{|B}$, hence $f(R_{N})\neq f(R_{N}^{\prime})$. Moreover, by
projectivity of $R_{N}$ and $R_{N}^{\prime}$, there exist $i,j\in N$ such
that $R_{i}=f(R_{N})$, $R_{j}^{\prime}=f(R_{N}^{\prime})$ whence, in
particular, $R_{i|B}=(f(R_{N}))_{|B}\neq(f(R_{N}^{\prime}))_{|B}=R_{j|B}^{%
\prime}$. However, $(R_{N})_{|B}=(R_{N}^{\prime})_{|B}$ implies that $%
R_{i|B}=R_{i|B}^{\prime}$ and $R_{j|B}=R_{j|B}^{\prime}$. Now,\ let $f^{0}$
be a sovereign agenda formation rule and $\widehat{\mathbf{f}}=(f^{0},%
\mathcal{F})$ a \textit{uniform} SAFE social welfare function with $\mathcal{%
F}=\left \{ f_{B}:\mathcal{R}_{B}^{N}\rightarrow \mathcal{R}_{B}|\text{ }%
f_{B}=(f_{A})_{|B}\right \} _{B\in \mathcal{P}(A)}$ and $f=f_{A}\in \mathcal{%
F}$ . Clearly $f_{B}((R_{N})_{|B})=f_{B}((R_{N}^{\prime})_{|B})$.

But then, either $(f(R_{N}))_{|B}\neq$ $f_{B}((R_{N})_{|B})$ or $%
(f(R_{N}^{\prime}))_{|B}\neq$ $f_{B}((R_{N}^{\prime})_{|B})$.

Suppose w.l.o.g. that $(f(R_{N}))_{|B}\neq$ $f_{B}((R_{N})_{|B})$. Then, by
definition of $\mathbf{R}_{R_{i}}$, $R_{i|B}=(f(R_{N}))_{|B}$ implies

$(f(R_{N}))_{|B}\mathbf{R}_{R_{i}}(f_{B}((R_{N})_{|B\text{ }}))$ and \textit{%
not \ }$(f_{B}((R_{N})_{|B\text{ }}))\mathbf{R}_{R_{i}}(f(R_{N}))_{|B\text{ }%
}$

whence AMP$_{S}$ fails, a contradiction.
\end{proof}

\bigskip 

\begin{remark}
One might also wonder whether IIA itself is also a necessary condition for
any social welfare function $f$ to be AMP$_{S}$. But that is clearly not the
case. To see that, just consider for any $R^{\ast}\in \mathcal{R}_{A}$, $%
\varnothing \neq B^{\ast}\subset A$ and $i\in N$, the social welfare
function $f^{iR^{\ast}B^{\ast}}$defined as follows: for every $R_{N}\in 
\mathcal{R}_{A}^{N}$,

$f^{iR^{\ast}B^{\ast}}(R_{N}):=\left \{ 
\begin{array}{c}
R^{\ast}\text{ if }R_{j|A\diagdown B^{\ast}}=R_{|A\setminus B^{\ast}}^{\ast }%
\text{ for some }j\in N\setminus \left \{ i\right \} \\ 
\text{and }R_{i}\text{ otherwise}%
\end{array}
\right \} $.

It is easy to check that $f^{iR^{\ast}B^{\ast}}$is AMP$_{S}$ because for any 
$C\subseteq D\subseteq A$, and $R_{N}\in \mathcal{R}_{A}^{N}$, $(f^{iR^{\ast
}B^{\ast}}(R_{N}))_{|C}\mathbf{=}((f^{iR^{\ast}B^{\ast}}(R_{N}))_{|D})_{|C}$.

However, $f^{iR^{\ast }B^{\ast }}$violates IIAP (hence, in particular, IIA
as well).\ Indeed, consider profiles $R_{N},R_{N}^{\prime }\in \mathcal{R}%
_{A}^{N}$ such that $R_{j|B^{\ast }}=R_{j|B^{\ast }}^{\prime }$ for all $%
j\in N$, $R_{i|B^{\ast }}\neq R_{|B^{\ast }}^{\ast }$, $R_{j|A\setminus
B^{\ast }}\neq R_{|A\setminus B^{\ast }}^{\ast }$ for each $j\in N\setminus
\left \{ i\right \} $, and $R_{k}^{\prime }=R^{\ast }$for some $k\in
N\setminus \left \{ i\right \} $. Now, both $R_{N}$ and $R_{N}^{\prime }$
are projective profiles for $f^{iR^{\ast }B^{\ast }}$since $f^{iR^{\ast
}B^{\ast }}(R_{N})=R_{i}\neq R^{\ast }$ while $f^{iR^{\ast }B^{\ast
}}(R_{N}^{\prime })=R^{\ast }=R_{k}^{\prime }$. In particular, $(f^{iR^{\ast
}B^{\ast }}(R_{N}))_{|B^{\ast }}$ $=R_{i|B^{\ast }}\neq R_{|B^{\ast }}^{\ast
}=(f^{iR^{\ast }B^{\ast }}(R_{N}^{\prime }))_{|B^{\ast }}$ though $%
(R_{N})_{|B^{\ast }}=(R_{N}^{\prime })_{|B^{\ast }}$, hence IIAP is violated.
\end{remark}

Finally, we can proceed to the next main task of the present analysis, which
is to explore the class of social welfare functions which are agenda
manipulation-proof \textit{and }do satisfy at least some \textit{minimal}
combination of \textit{outcome-unbiasedness} and \textit{distributed
responsiveness }to agents' preferences, as specified below.

A basic unbiasedness requirement is embodied in the standard sovereignty
property, as defined below.

\bigskip

\textbf{Sovereignty (S) }A social welfare function $f:\mathcal{R}%
_{A}^{N}\rightarrow \mathcal{R}_{A}$ for $(N,A)$ is \textbf{sovereign }if
for each $R\in \mathcal{R}_{A}$ there exists $R_{N}\in \mathcal{R}_{A}^{N}$
such that $f(R_{N})=R$.

\bigskip

A further, and weaker, unbiasedness condition is implicit in the following
property.

\bigskip

\textbf{Weak Sovereignty (WS) }A social welfare function $f:\mathcal{R}%
_{A}^{N}\rightarrow \mathcal{R}_{A}$ for $(N,A)$ is \textbf{weakly sovereign 
}if for any $x,y\in A$ there exists $R_{N}\in \mathcal{R}_{A}^{N}$ such that 
$xf(R_{N})y$.

\bigskip

Clearly, WS ensures a \textit{minimal degree of outcome-unbiasedness and
responsiveness }of the relevant social welfare function, but it is
consistent both with fairly distributed responsiveness-patterns involving a
large number of agents, and with extremely concentrated
responsiveness-patterns involving very few agents, or even just a single
agent.

In order to make precise such distributed responsiveness requirement, we
introduce the \textit{responsiveness correspondence }of a social welfare
function as defined below.

\bigskip

\textbf{Responsiveness Correspondence of a social welfare function} Let $f:%
\mathcal{R}_{A}^{N}\rightarrow \mathcal{R}_{A}$ be a social welfare function
for $(N,A)$. Then, its \textit{responsiveness correspondence }$F_{f}:A\times
A\rightrightarrows \mathcal{P}(N)$ is defined as follows: for every $x,y\in
A $,

$F_{f}(x,y):=\left \{ 
\begin{array}{c}
S\subseteq N:\text{there exists }R_{S}^{xy}\in \mathcal{R}_{A}^{S}\text{
such that for all }R_{N}\in \mathcal{R}_{A}^{N}\text{, } \\ 
\text{if\ }[xR_{i}y\text{ iff }xR_{i}^{xy}y\text{ for every }i\in S]\text{
then }xf(R_{N})y%
\end{array}
\right \} $.

\bigskip

\textbf{Minimally Distributed Responsiveness (MDR) }A social welfare
function $f:\mathcal{R}_{A}^{N}\rightarrow \mathcal{R}_{A}$ for $(N,A)$
satisfies \textbf{minimally distributed responsiveness }if whenever $\left
\{ i\right \} \in F_{f}(x,y)$ for some $i\in N$ and some pair of \textit{%
distinct} $x,y\in A$ it must be the case that there exist $S\subseteq
N\setminus \left \{ i\right \} $ and $v,z\in A$, $v\neq z$ such that $S\in
F_{f}(v,z)$.

\bigskip

In plain words, if the nonstrict preference of a single agent $i$ between
two distinct alternatives $x,y$ has to be accepted as part of the social
preference, then the nonstrict preference between two distinct alternatives $%
v$ and $z$ of \textit{some other coalition not including }$i$ is also
entitled to acceptance as part of the social preference. Thus, arguably, the 
\textit{combination} of WS and MDR amounts in fact to an appropriate \textit{%
minimal }requirement of \textit{unbiased distributed responsiveness}.

Let us now recall a few (mostly classic) requirements for social welfare
functions.

A social welfare function $f:\mathcal{R}_{A}^{N}\rightarrow \mathcal{R}_{A}$
satisfies

\textbf{Anonymity (AN) }if for every $R_{N}\in \mathcal{R}_{A}^{N}$ and
every permutation $\sigma$ of $N$, $f(x^{N})=f(R_{\sigma(N)})$ (where $%
R_{\sigma (N)}=(R_{\sigma(1)},...,R_{\sigma(n)})$);

\textbf{Idempotence (ID) }\textit{\ }if \ $f(R_{N})=R$ whenever $R_{N}$ is
such that $R_{i}=R$ for each $i\in N$;

\textbf{Neutrality (NT) }if [$xf(R_{N})y$ iff $yf(R_{N}^{\prime})x$] for any 
$x,y\in A$ and $R_{N},R_{N}^{\prime}\in \mathcal{R}_{A}^{N}$ such that $%
xR_{i}y$ iff $yR_{i}^{\prime}x$ for each $i\in N$;

\textbf{Weak Neutrality} (\textbf{WNT}) if [$f(R_{N})\subseteq R$ iff $%
f(R_{N})\subseteq R^{\prime}$] for any two-indifference-class $R,R^{\prime
}\in \mathcal{R}_{A}$ and $R_{N}\in \mathcal{R}_{A}^{N}$ such that $%
R_{i}\subseteq R$ iff $R_{i}\subseteq R^{\prime}$ for every $i\in N$;

\textbf{Weak Pareto Principle (WP) }if for every $x,y\in A$ and $R_{N}\in%
\mathcal{R}_{A}^{N}$, if $xP(R_{i})y$ for every $i\in N$ then $xP(f(R_{N}))y$%
;

\textbf{Basic Pareto Principle (BP) }if for every $x,y\in A$ and $R_{N}\in%
\mathcal{R}_{A}^{N}$, if $xR_{i}y$ for every $i\in N$ then $xf(R_{N})y$;

\textbf{Local Separation (LS) }if for every $x,y\in A$ there exist $%
R_{N},R_{N}^{\prime}\in \mathcal{R}_{A}^{N}$ such that $f(R_{N})_{|\left \{
x,y\right \} }\neq f(R_{N}^{\prime})_{_{|\left \{ x,y\right \} }}$.

It should be emphasized, for future reference, that LS implies WS, while WP
and LS are mutually independent.

Moreover, for any domain $\mathbf{D}$ of (preference) preorders on $\mathcal{%
R}_{A}$, a social welfare function $f:\mathcal{R}_{A}^{N}\rightarrow 
\mathcal{R}_{A}$ is \textbf{strategy-proof }on $\mathbf{D}$ iff for every $%
i\in N$, $\mathbf{R}_{N}\in \mathbf{D}^{N}$, $R_{N}\in$\textbf{\ }$\mathcal{R%
}_{A}^{N}$ and $R^{\prime}\in \mathcal{R}_{A}$, $\ f(R_{N})\mathbf{R}%
_{i}f((R_{i}^{\prime},R_{N\smallsetminus \left \{ i\right \} })$.

Moreover, a social welfare function $f:\mathcal{R}_{A}^{N}\rightarrow 
\mathcal{R}_{A}$ $\ $\ is said to be

\textbf{dictatorial }(respectively, \textbf{inversely dictatorial}) if there
exists $i\in N$ such that for all $R_{N}\in \mathcal{R}_{A}^{N}$ and $x,y\in
A $, $xf(R_{N})y$ only if $xR_{i}y$ (respectively, $yR_{i}x$), \textbf{%
weakly paretian }if it satisfies \textbf{WP, }and \textbf{weakly
anti-paretian }if $yP(f(R_{N}))x$ for every $x,y\in A$ and $R_{N}\in 
\mathcal{R}_{A}^{N}$ such that $xP(R_{i})y$ for every $i\in N$, \textbf{%
consensual }if it satisfies \textbf{ID}, and \textbf{properly consensual }if
it satisfies \textbf{ID} and \textbf{MDR}.

The \textbf{global stalemate }social welfare function for $(N,A)$ is the 
\textit{constant }function $f^{U_{A}}:\mathcal{R}_{A}^{N}\rightarrow 
\mathcal{R}_{A}$ such that $f^{U_{A}}(R_{N})=U_{A}$ for every $R_{N}\in$ $(%
\mathcal{R}_{A}^{T})^{N}$, where $U_{A}:=A\times A$, the \textit{universal
indifference }relation.

\bigskip

We are now ready to establish which social welfare functions among those
that ensure at least a modicum of unbiased distributed responsiveness do
also satisfy the agenda manipulation-proofness requirements AMP$_{P}$ and AMP%
$_{S}$, respectively.

Concerning AMP$_{P}$ social welfare functions, we can rely on the following
recent result (see Savaglio, Vannucci (2019, 2021)).

\bigskip

\begin{theorem}
(Savaglio, Vannucci (2021)) Let $\mathcal{X}=(X,\leqslant)$ be a finite 
\textit{median join-semilattice, }$M_{\mathcal{X}}$ the set of its
meet-irreducible elements, $B_{\mu}\mathcal{\ }$its median-induced
betweenness, and \ $f:X^{N}\rightarrow X$ \ an aggregation rule. Then, the
following statements are equivalent:

(i) \ $f$ is strategy-proof on $D^{N}$ for every rich domain $D\subseteq
U_{B_{\mu}}$ of locally unimodal preorders on w.r.t. $B_{\mu}$ on $X$;

(ii) $f$ is monotonically $M_{\mathcal{X}}$-independent;\textbf{\ }

(ii) there exists a family $\mathcal{F}_{M_{\mathcal{X}}}=\left \{
F_{m}:m\in M_{\mathcal{X}}\right \} $ of $\ $order filters of $(\mathcal{P}%
(N),\subseteq) $ such that $\ $

$f(x_{N})=f_{\mathcal{F}_{M_{\mathcal{X}}}}(x_{N}):=\tbigwedge \left \{ m\in
M_{\mathcal{X}}:N_{m}(x_{N})\in F_{m}\right \} $ for all $x_{N}\in X^{N}$ .
\end{theorem}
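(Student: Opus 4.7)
The plan is to establish the triple equivalence by treating (ii) as a pivot, since monotonic $M_{\mathcal{X}}$-independence bridges the behavioural strategy-proofness condition (i) and the explicit representation (iii). I would first dispose of the structural equivalence $(ii) \Leftrightarrow (iii)$, and then concentrate effort on the behavioural equivalence $(i) \Leftrightarrow (ii)$, in which the decisive step is $(i) \Rightarrow (ii)$.

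For $(ii) \Leftrightarrow (iii)$, I would adapt the Monjardet-style representation already alluded to in the preliminaries. Given $f$ monotonically $M_{\mathcal{X}}$-independent, set $F_m := \{ N_m(x_N) : x_N \in X^N \text{ and } f(x_N) \leqslant m \}$ for each $m \in M_{\mathcal{X}}$. Monotonic independence ensures that $F_m$ is well-defined as an order filter of $(\mathcal{P}(N),\subseteq)$ and that $f(x_N) \leqslant m$ holds if and only if $N_m(x_N) \in F_m$. Combining this with the identity $f(x_N) = \bigwedge M(f(x_N))$, valid in any finite join-semilattice, yields the formula in (iii). The converse is immediate, since each $F_m$ being an order filter makes the formula monotonically $M_{\mathcal{X}}$-independent by construction.

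For the direction $(ii) \Rightarrow (i)$, fix a rich single-peaked domain $D$, a profile $(\succcurlyeq_j)_{j \in N} \in D^N$ with peaks $x_N$, an agent $i$, and a potential misreport $y_i$; write $z := f(x_N)$ and $z' := f(y_i, x_{-i})$. Since the two profiles differ only at coordinate $i$, the symmetric difference of $N_m(x_N)$ and $N_m(y_i, x_{-i})$ is contained in $\{i\}$, and its direction is controlled by whether $x_i \leqslant m$ and whether $y_i \leqslant m$. A direct case analysis exploiting that each $F_m$ is an order filter shows that whenever $x_i \leqslant m$ and $z' \leqslant m$ one also has $z \leqslant m$; this yields $z \leqslant x_i \vee z'$, and upper distributivity together with co-coronation then gives the median identity $z = \mu_{\mathcal{X}}(x_i, z', z)$. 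Condition $U$-$(ii)$ in the definition of $B_{\mu_{\mathcal{X}}}$-single-peakedness immediately precludes $z' \succ_i z$, which is what strategy-proofness requires.

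The main obstacle is $(i) \Rightarrow (ii)$. I would argue by contradiction: suppose $f$ is strategy-proof on every rich single-peaked $N$-profile domain yet monotonic $M_{\mathcal{X}}$-independence fails, so there exist $m \in M_{\mathcal{X}}$ and $x_N, y_N$ with $N_m(x_N) \subseteq N_m(y_N)$, $f(x_N) \leqslant m$, and $f(y_N) \not\leqslant m$. Interpolating a single-coordinate path from $x_N$ to $y_N$ isolates an agent $i$ whose unilateral change pushes $f$ across $m$. The richness of the single-peaked domain is then invoked to supply agent $i$ with a preorder $\succcurlyeq_i$ whose top is the `honest' report and whose upper contour at $f(\text{strategic report})$ is precisely the median interval $\{z \in X : z = \mu_{\mathcal{X}}(\text{top},f(\text{strategic}),z)\}$; this forces the output on the `wrong' side of $m$ to be strictly preferred to the output on the `correct' side, delivering the required manipulation. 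The subtle part is ensuring that this witness preorder actually belongs to some rich domain, which rests on the fact that in a median join-semilattice every median-interval upper contour can be realised as the upper contour of a single-peaked preorder, a consequence of the convexity properties of median ternary spaces.
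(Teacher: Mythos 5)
A preliminary remark: the paper itself contains no proof of this statement — it is imported wholesale from Savaglio and Vannucci (2021) and used as a black box — so there is no in-paper argument to compare your route against; your proposal must stand on its own, and as written it has genuine gaps. The most serious one is your claim that $(iii)\Rightarrow(ii)$ is ``immediate, since each $F_m$ being an order filter makes the formula monotonically $M_{\mathcal{X}}$-independent by construction''. That is not immediate, and for an \emph{arbitrary} family of order filters it is in fact false: let $X$ be the chain $a<b<c$ (so $M_{\mathcal{X}}=\{a,b\}$ and $\mathbf{1}=c=\wedge\varnothing$), let $F_a$ be the order filter generated by $\{1\}$ and $F_b$ the one generated by $\{2\}$, and compare the profile where agent $1$ reports $a$ and all others report $c$ with the profile where agent $1$ reports $b$ and all others report $c$. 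The formula gives $a$ at the first profile and $c$ at the second, although $N_b$ is identical at the two profiles; so $f_{\mathcal{F}_{M_{\mathcal{X}}}}$ is neither monotonically $M_{\mathcal{X}}$-independent nor strategy-proof. Any correct treatment of this direction must either impose, or derive for the canonical family you construct in $(ii)\Rightarrow(iii)$, a coherence condition across comparable meet-irreducibles (e.g. $F_{m'}\subseteq F_{m}$ whenever $m'\leqslant m$), combined with the fact — obtained from upper distributivity — that meet-irreducibles are meet-prime inside principal filters; your ``immediate'' converse is precisely where the difficulty of that equivalence lives, and it signals that the representation cannot be taken over arbitrary filter families.

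There are also gaps in the behavioural directions. In $(ii)\Rightarrow(i)$, the inference from $z\leqslant x_i\vee z'$ to $z=\mu_{\mathcal{X}}(x_i,z',z)$ is invalid in a general median join-semilattice: in the four-element semilattice consisting of three pairwise incomparable elements $p,q,r$ below a top, one has $q\leqslant p\vee r$ while $\mu_{\mathcal{X}}(p,r,q)$ equals the top, not $q$. Betweenness needs the complementary half of the case analysis: for every $m\in M_{\mathcal{X}}$ with $x_i\not\leqslant m$ and $z'\not\leqslant m$ you must also show $z\not\leqslant m$. This does follow from monotonic $M_{\mathcal{X}}$-independence (when $x_i\not\leqslant m$, the set $N_m$ weakly increases as $i$ switches from $x_i$ to $y_i$), but you never derive it, and with only $z\leqslant x_i\vee z'$ single-peakedness yields no conclusion. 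Finally, in $(i)\Rightarrow(ii)$ the richness witness pins down a single upper contour and the admissible preorders need not be total, so it does not by itself deliver the strict preference of the strategic outcome over the honest one; you must exhibit (and adjoin to some rich domain, which the universal quantification over rich domains permits) a single-peaked preorder with the honest peak that explicitly ranks the two outcomes, using convexity of $\downarrow m$ and of its complement to place the honest outcome outside the interval from the peak to the strategic outcome, and you must say which of the two reports plays the honest role in each case of the crossing. The overall architecture — pivoting on $(ii)$, reducing to a unilateral deviation, and manufacturing a manipulation via richness — is the right one, but as written the argument does not go through.
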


\bigskip 

\begin{remark}
Thus, in particular, let $\mathcal{X}=(\mathcal{R}_{A}^{T},\overline{\cup})) 
$ be the join-semilattice of total preorders on finite set $A$, $\mu$ its
median ternary operation and $B_{\mu}$ the corresponding betweenness as
previously defined, $\Pi_{A}^{(2)}$ the set of \ all total preorders on $A$
with two indifference classes and $f:\mathcal{R}_{A}^{N}\rightarrow \mathcal{%
R}_{A}$ an aggregation rule (namely, a social welfare function) for $(N,%
\mathcal{R}_{A})$. Then, $M_{\mathcal{X}}=\Pi_{A}^{(2)}$ and $f$ is
strategy-proof on $D^{N}$ for every rich domain $D\subseteq U_{B_{\mu}}$ of
locally unimodal preorders w.r.t. $B_{\mu}$ on $\mathcal{R}_{A}$ iff there
exists a family $\mathcal{F}_{M_{\mathcal{X}}}=\left \{
F_{A_{1}A_{2}}:(A_{1},A_{2})\in \Pi_{A}^{(2)}\right \} $ of order filters of 
$(\mathcal{P}(N),\subseteq)$ such that $\  \ $
\end{remark}

$\  \  \  \  \  \ f(R_{N})=f_{\mathcal{F}_{M_{\mathcal{X}}}}(R_{N}):=\tbigwedge
\left \{ R_{A_{1}A_{2}}\in M_{\mathcal{X}}:\left \{ i\in N:R_{i}\subseteq
R_{A_{1}A_{2}}\right \} \in F_{A_{1}A_{2}}\right \} $ for all $R_{N}\in%
\mathcal{R}_{A}^{N}$ .

\bigskip 

\begin{remark}
Hence, the collection of such strategy-proof social welfare functions
includes the following subclasses:

\begin{itemize}
\item \textit{Inclusive quorum systems }, namely functions $f_{\mathcal{F}%
_{M_{\mathcal{X}}}}$ such that every order filter $F_{R_{A_{1}A_{2}}}$ is 
\textit{transversal }i.e. $S\cap T\neq \varnothing$ for all $S,T\in
F_{R_{A_{1}A_{2}}}$ and $\dbigcup \limits_{R_{A_{1}A_{2}}\in M_{\mathcal{X}%
}}F_{R_{A_{1}A_{2}}}^{\min}=N$ (observe that such a\ class includes any rule
such that for every $R_{A_{1}A_{2}}\in M_{\mathcal{X}}$, $F_{R_{A_{1}A_{2}}}$
is \textit{simple-majority collegial }i.e. there exists a \textit{minimal}
simple majority coalition $S_{A_{1}A_{2}}\subseteq N$, $|S_{A_{1}A_{2}}|=%
\left \lfloor \frac{|N|+2}{2}\right \rfloor $ with $F_{R_{A_{1}A_{2}}}=\left
\{ T\subseteq N:S_{A_{1}A_{2}}\subseteq T\right \} $). Generally speaking,
inclusive quorum systems need not be anonymous or neutral.

\item \textit{Outcome-biased aggregation rules}, namely functions $f_{%
\mathcal{F}_{M_{\mathcal{X}}}}$ where $F_{R_{A_{1}A_{2}}}=\varnothing$ for
some $R_{A_{1}A_{2}}\in M_{\mathcal{X}}$ (observe that they include the
subclass of those aggregation rules such that for some \textit{total
preorder }$\overline{R}\in \mathcal{R}_{A}$,\textit{\ }including possibly a%
\textit{\ linear order, }$F_{R_{A_{1}A_{2}}}=\varnothing$ for every $%
R_{A_{1}A_{2}}\in M_{\mathcal{X}}$ such that $\overline{R}\subseteq
R_{A_{1}A_{2}}$).

\item \textit{Weakly-neutral aggregation rules}, namely functions $f_{%
\mathcal{F}_{M_{\mathcal{X}}}}$ where $F_{R_{A_{1}A_{2}}}=F_{R_{A_{1}^{%
\prime}A_{2}^{\prime}}}$ whenever $R_{A_{1}A_{2}}\wedge R_{A_{1}^{\prime
}A_{2}^{\prime}}$ exists.

\item \textit{Quota aggregation rules}, i.e. functions $f_{\mathcal{F}_{M_{%
\mathcal{X}}}}$ such that for each $R_{A_{1}A_{2}}\in M_{\mathcal{X}}$ there
exists an integer $q_{[R_{A_{1}A_{2}}]}\leq|N|$ with

$F_{m}=\left \{ T\subseteq N:q_{[R_{A_{1}A_{2}}]}\leq|T|\right \} $ (such
rules are clearly anonymous, but not necessarily weakly-neutral: they are of
course weakly-neutral as well if, furthermore, $%
F_{R_{A_{1}A_{2}}}=F_{R_{A_{1}^{\prime}A_{2}^{\prime}}}$ whenever $%
R_{A_{1}A_{2}}\wedge R_{A_{1}^{\prime}A_{2}^{\prime}}$ exists). Quota
aggregation rules are said to be \textit{positive }if $%
q_{[R_{A_{1}A_{2}}]}>0 $ for every $R_{A_{1}A_{2}}\in M_{\mathcal{X}}$. The
subclass of positive and weakly-neutral quota aggregation rules includes as
a prominent example the \textit{co-majority} social welfare function $%
f^{\partial maj}$defined as follows: for every $R_{N}\in$\textbf{\ }$%
\mathcal{R}_{A}$

\ $f^{\partial maj}(R_{N}):=\wedge_{S\in \mathcal{W}^{maj}}(\vee_{i\in
S}R_{i})$

where $\mathcal{\mathcal{W}}^{maj}:\mathcal{=}\left \{ S\subseteq N:|S|\geq%
\frac{n+1}{2}\right \} $.

\item The global stalemate social welfare function $f^{U_{A}}$ for $(N,A)$
which obtains when $F_{m}=\varnothing$ for all $m\in M_{\mathcal{X}}$\textit{%
\ .\ }
\end{itemize}
\end{remark}

\bigskip

It is worth noticing that a large subclass of such social welfare functions $%
f_{\mathcal{F}_{M_{\mathcal{X}}}}$ (including \textit{positive quota
aggregation rules} and \textit{inclusive quorum systems}) satisfy the Basic
Pareto Principle (BP), as made precise by the following claim.

\bigskip

\begin{claim}
Let $f_{\mathcal{F}_{M_{\mathcal{X}}}}$ be a social welfare function as
defined above such that $F_{R_{A_{1}A_{2}}}$is a nontrivial proper order
filter (i.e. $\varnothing \notin F_{R_{A_{1}A_{2}}}\neq \varnothing$) for
every $R_{A_{1}A_{2}}\in M_{\mathcal{X}}$. Then $f_{\mathcal{F}_{M_{\mathcal{%
X}}}}$ satisfies BP.
\end{claim}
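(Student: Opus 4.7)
The plan is to unpack the explicit representation of $f_{\mathcal{F}_{M_{\mathcal{X}}}}$ recorded in Remark 3 and to use the standing hypothesis $\varnothing \notin F_{R_{A_{1}A_{2}}}$ to rule out every meet-irreducible that would ``block'' a unanimously-endorsed pair from the aggregate preference. Concretely, I fix $x, y \in A$ and $R_N \in \mathcal{R}_A^N$ with $xR_iy$ for every $i \in N$, and argue in two steps culminating in $(x,y) \in f_{\mathcal{F}_{M_{\mathcal{X}}}}(R_N)$.

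The first step shows that every meet-irreducible $m = R_{A_{1}A_{2}}$ selected by the formula---i.e., satisfying $\{i \in N : R_i \subseteq R_{A_{1}A_{2}}\} \in F_{R_{A_{1}A_{2}}}$---must itself contain the pair $(x,y)$. Suppose not: then $x \in A_2$ and $y \in A_1$, so $(x,y) \notin m$; yet $(x,y) \in R_i$ for every $i$, whence no $R_i$ can be $\subseteq$-contained in $m$, forcing the selecting coalition to be $\varnothing$, which by hypothesis is not in $F_{R_{A_{1}A_{2}}}$---a contradiction. The second step then converts ``every selected $m$ contains $(x,y)$'' into ``$(x,y) \in f_{\mathcal{F}_{M_{\mathcal{X}}}}(R_N)$''. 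For this I plan to invoke the auxiliary observation that in $(\mathcal{R}_A, \subseteq)$ the partial meet $\bigwedge S$ of a family of total preorders, whenever defined, coincides with their set-theoretic intersection $\bigcap S$: the inclusion $\bigwedge S \subseteq \bigcap S$ is automatic, and any lower bound of $S$ in $\mathcal{R}_A$ is itself a total preorder contained in $\bigcap S$, so if $\bigcap S$ failed connectedness at some pair no total preorder whatsoever could lie below $\bigcap S$, contradicting the existence of $\bigwedge S$. Hence $f_{\mathcal{F}_{M_{\mathcal{X}}}}(R_N) = \bigcap S$, and the first step delivers BP.

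The hard part, I expect, is not the core argument from properness, which is essentially a one-line verification, but the auxiliary bridge between the partial meet and set-theoretic intersection in $(\mathcal{R}_A, \subseteq)$. Although intuitively clear, a tidy proof of that auxiliary observation should refer back to the characterization of meet-irreducibles as two-indifference-class preorders (established in the proof of the earlier structural Claim about $(\mathcal{R}_A, \subseteq)$) and, if needed, to the co-coronation (meet-Helly) property of median join-semilattices.
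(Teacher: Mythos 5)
Your proof is correct and follows essentially the same route as the paper's: unanimity plus $\varnothing\notin F_{R_{A_{1}A_{2}}}$ forces every selected meet-irreducible to contain the pair $(x,y)$, whence $(x,y)$ survives into the meet. The only difference is presentational: the paper argues by contradiction and silently passes from $(x,y)\notin\tbigwedge S$ to the existence of a selected $R_{A_{1}A_{2}}$ omitting $(x,y)$, whereas you make that bridge explicit by verifying that in $(\mathcal{R}_{A},\subseteq)$ the partial meet, when it exists, coincides with the set-theoretic intersection --- a sound and slightly more careful rendering of the same argument.
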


\begin{proof}
Suppose that $x,y\in A$ and $R_{N}\in \mathcal{R}_{A}^{N}$ are such that $%
xR_{i}y$ for every $i\in N$\textit{, yet not }$xf_{\mathcal{F}_{M_{\mathcal{X%
}}}}y$. Namely, by construction,

$(x,y)\notin \tbigwedge \left \{ R_{A_{1}A_{2}}\in M_{\mathcal{X}}:\left \{
i\in N:R_{i}\subseteq R_{A_{1}A_{2}}\right \} \in F_{A_{1}A_{2}}\right \} $%
\textit{. }

Hence, there exists\textit{\ }$R_{A_{1}A_{2}}\in M_{\mathcal{X}}$ such that $%
\left \{ i\in N:R_{i}\subseteq R_{A_{1}A_{2}}\right \} \in F_{A_{1}A_{2}}$ and 
$(x,y)\notin R_{A_{1}A_{2}}$. However, by assumption , $F_{A_{1}A_{2}}$ is
nonempty and \textit{every }$T\in F_{A_{1}A_{2}}$ is itself nonempty: thus, $%
N\in F_{A_{1}A_{2}}$. But then $(x,y)\in R_{i}\subseteq R_{A_{1}A_{2}}$ for
any $i\in T$, a contradiction.
\end{proof}

\bigskip 

\begin{remark}
It should be emphasized that BP and WP are \textit{independent }alternative
ways of weakening the (strong) Pareto principle\footnote{%
A social welfare function $f:\mathcal{R}_{A}^{N}\longrightarrow \mathcal{R}%
_{A}$ satisfies the \textit{(strong) Pareto principle} iff $xP(f(R_{N}))y$
for any $x,y\in A$ and $R_{N}\in $ $\mathcal{R}_{A}^{N}$ such that $xR_{i}y$
for every $i\in N$, and $xP(R_{j})y$ for some $j\in N$.}. To see that, just
consider social welfare functions $f^{UN}$ and $f$ $^{\mathbf{L}x^{\ast }}$%
for $(N,A)$ defined informally as follows: if $R_{N}$ is such that $R_{i}=R$
for each $i\in N$ then $f^{UN}(R_{N})=R$, otherwise $f^{UN}(R_{N})$ is the
universal indifference relation on $A$ while $f^{\text{ }\mathbf{L}}x^{\ast
}(R_{N})$ -where $\mathbf{L}$ is a linear order of $\mathcal{R}_{A}$- is the 
$\mathbf{L}$-minimum linear order $L$ of $A$ having $x^{\ast }$as its top
element and such that $L\supseteq \cap _{i\in N}P(R_{i})$ if there is no $y$
such that $(y,x^{\ast })\in \cap _{i\in N}P(R_{i})$, and the $\mathbf{L}$%
-minimum linear order $L$ of $A$ such that $L\supseteq \cap _{i\in N}P(R_{i})
$ otherwise. Clearly, neither of them satisfy the (strong) Pareto principle:
however, $f^{UN}$ satisfies BP and violates WP while $f^{\mathbf{L}x^{\ast }}
$ \ satisfies WP and violates BP. Nevertheless, WP has been widely used in
the extant literature, whereas BP has been rarely if ever explicitly
employed.
\end{remark}

\bigskip 

\begin{proposition}
There exist social welfare functions $f:\mathcal{R}_{A}^{N}\rightarrow 
\mathcal{R}_{A}$ which satisfy AMP$_{P}$, AN, ID, WNT, BP and are
strategy-proof on the domain $\mathcal{D}^{sp(d)}$ of single-peaked \ 
\textit{preorders }on $\dbigcup \limits_{B\in \mathcal{P}(A)}\mathcal{R}_{B}$%
\textit{\ }.
\end{proposition}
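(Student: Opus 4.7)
The plan is to exhibit a concrete witness and verify the six properties by reducing five of them to results already established in the excerpt. First, Proposition 1 gives \textbf{AMP}$_{P}$ for free: whatever social welfare function $f$ we ultimately choose, decomposing it as $\mathbf{f}\simeq f^{0}\times f$ with any sovereign agenda-formation rule $f^{0}$ guarantees \textbf{AMP}$_{P}$. So the real task is to produce an $f:\mathcal{R}_{A}^{N}\rightarrow\mathcal{R}_{A}$ that is simultaneously \textbf{AN}, \textbf{ID}, \textbf{WNT}, \textbf{BP} and strategy-proof on $\mathcal{D}^{sp(d)}$.

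The natural candidate is the co-majority rule $f^{\partial maj}$ already described in the preceding remark (or, more generally, any positive, weakly neutral, uniform quota rule). Viewed through Theorem 1 with $\mathcal{X}=(\mathcal{R}_{A},\overline{\cup})$ and $M_{\mathcal{X}}=\Pi_{A}^{(2)}$, this is the aggregation rule $f_{\mathcal{F}_{M_{\mathcal{X}}}}$ obtained by assigning to every meet-irreducible $R_{A_{1}A_{2}}\in M_{\mathcal{X}}$ the same order filter $F_{R_{A_{1}A_{2}}}=\mathcal{W}^{maj}=\{S\subseteq N:|S|\geq(n+1)/2\}$. The key steps would then be: (i) strategy-proofness on $\mathcal{D}^{sp(d)}$ follows from Theorem 1, once one checks that the restrictions to $\mathcal{R}_{A}$ of the canonical single-peaked preorders $\{\mathbf{R}_{R}\}_{R\in\mathcal{R}_{A}}\subseteq\mathcal{D}^{sp(d)}$ form a rich single-peaked domain for $(\mathcal{R}_{A},\overline{\cup})$; this is immediate from the definition of $\mathbf{R}_{R}$ via the median-induced betweenness, since by construction $UC(\mathbf{R}_{R},R^{\prime\prime})=\{z:z=\mu(R,R^{\prime\prime},z)\}$; (ii) \textbf{AN} and \textbf{WNT} are immediate, because all order filters $F_{R_{A_{1}A_{2}}}$ coincide with $\mathcal{W}^{maj}$, a cardinality-based filter that is trivially invariant under any relabelling and equal across meet-compatible meet-irreducibles; (iii) \textbf{ID} is obtained from the representation $R=\wedge M(R)$ of the preliminaries: when $R_{i}=R$ for all $i$, then $N_{m}(R_{N})=N$ if $R\subseteq m$ and $\emptyset$ otherwise, and since $N\in\mathcal{W}^{maj}$ while $\emptyset\notin\mathcal{W}^{maj}$ we recover $f^{\partial maj}(R,\ldots,R)=\wedge\{m\in M_{\mathcal{X}}:R\subseteq m\}=R$; (iv) \textbf{BP} is exactly the content of Claim 2, since each $F_{R_{A_{1}A_{2}}}=\mathcal{W}^{maj}$ is a nontrivial proper order filter.

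The most delicate step is (i), specifically the claim that $\mathcal{D}^{sp(d)}$ indeed supplies a rich single-peaked domain suitable for Theorem 1. The subtlety is that elements of $\mathcal{D}^{sp(d)}$ are preorders on the sum $\bigcup_{B\in\mathcal{P}(A)}\mathcal{R}_{B}$ rather than on the single semilattice $\mathcal{R}_{A}$ directly; however, since $(\mathcal{R}_{A},\overline{\cup})$ sits inside that sum as one of the components and its median operation is inherited by restriction, the restrictions $\mathbf{R}_{R|\mathcal{R}_{A}}$ for $R\in\mathcal{R}_{A}$ remain single-peaked with peak $R$ and their upper-contour sets are exactly the median-betweenness upper-contour sets, so richness carries over and Theorem 1 applies to profiles of such preorders. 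Once this reduction is in place, the conjunction of the five properties has been transferred to verifications about the single filter $\mathcal{W}^{maj}$, each of which is either immediate or covered by Claim 2, completing the proof.
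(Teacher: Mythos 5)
Your proposal is correct and follows essentially the same route as the paper: it exhibits the positive weakly-neutral quota rules (in particular the co-majority rule $f^{\partial maj}$) as witnesses, derives strategy-proofness on $\mathcal{D}^{sp(d)}$ from Theorem 1 via the median join-semilattice structure, gets AMP$_{P}$ from Proposition 1, BP from the claim on nontrivial proper order filters, and AN, ID, WNT from the uniform cardinality-based filters. Your explicit verification of domain richness and of idempotence via $R=\wedge M(R)$ merely spells out details the paper leaves implicit, so there is no substantive difference in approach.
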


\begin{proof}
Since $(\dbigcup \limits_{B\in \mathcal{P}(A)}\mathcal{R}_{B},\subseteq)$ is
a median join-semilattices by Claim 2 above, it follows that Theorem 1 above
applies hence positive weakly-neutral quota social welfare functions as
defined above satisfy AN, ID, WNT and are strategy-proof on the domain $%
\mathcal{D}^{sp(d)}$ of single-peaked \  \textit{preorders }on $\dbigcup
\limits_{B\in \mathcal{P}(A)}\mathcal{R}_{B}$. Moreover, they also satisfy
AMP$_{P}$ by Claim 1, and BP by Claim 3, and the thesis is established.
\end{proof}

\bigskip

\begin{remark}
Observe that several domains of preference profiles are being considered
here. The first one consists of \ profiles $R_{N}=(R_{i})_{i\in N}$ of
arbitrary total preorders on the set $A$ of basic alternatives. The second
domain consist of profiles $\mathbf{R}_{N}=\mathbf{(R}_{i})_{i\in N}$ of \
single-peaked (partial) preorders on the ground set $\mathcal{R}_{A}$ of the
median join-semilattice of total preorders on $A$ (with single-peakedness
induced by the median betweenness of $\mathcal{R}_{A}$, and $\mathbf{R}_{N}$
induced by $R_{N}$). The third domain amounts to the subdomain of the second
one which only includes the metric single-peaked profiles $\widehat {\mathbf{%
R}}_{N}=\mathbf{(}\widehat{\mathbf{R}}_{i})_{i\in N}$ of preorders on $%
\mathcal{R}_{A}$ that are entirely determined by profiles $R_{N}$ through
their graphic distance from the top. Accordingly, we can consider different
notions of WP and BP: namely,

(1) WP (BP) of \ each `social preference' $f(R_{N})$ with respect to $R_{N}$%
:\ the requirement that, at any $R_{N}$, $f(R_{N})$ should faithfully
reflect any unanimous preference for an alternative in $A$ over another one;

(2) WP (BP) of \ each `social preference' $f(R_{N})$ with respect to $%
\mathbf{R}_{N}$ (or $\widehat{\mathbf{R}}_{N}$): the requirement that, at
any $R_{N}$, $f(R_{N})$ should be consistent with unanimous preferences
according to $\mathbf{R}_{N}$ (or $\widehat{\mathbf{R}}_{N}$), which means
that there should be no alternative `social preference' $R^{\prime}\in%
\mathcal{R}_{A}$ that is unanimously preferred over $f(R_{N})$ according to $%
\mathbf{R}_{N}$ (or $\widehat{\mathbf{R}}_{N}$).

Two most remarkable points are to be made here concerning the social welfare
functions mentioned in the previous Proposition. First, such social welfare
functions fail to satisfy WP with respect to the first and second domains,
consisting respectively of arbitrary profiles $R_{N}$ of total preorders on $%
A$, and of single-peaked profiles $\mathbf{R}_{N}$ of preorders on $\mathcal{%
R}_{A}$. Second, the very same social welfare functions do satisfy WP with
respect to the domain consisting of metric single-peaked profiles $\widehat{%
\mathbf{R}}_{N}$ of total preorders on $\mathcal{R}_{A}$.
\end{remark}

\bigskip

\begin{remark}
It is worth noticing that all of the anonymous, idempotent and
strategy-proof social welfare functions mentioned in the previous
proposition (including those which satisfy the Basic Pareto Principle BP
e.g. positive quota social welfare functions) admit a stalemate as one of
the possible outcomes, arising from certain specific patterns of strong
conflict among individual preferences. By definition, such (contingent)
stalemates give rise to violations of \ the Weak Pareto principle (WP) by
the chosen `social preference' both with respect to the `basic' preference
profiles $R_{N}$ of total preorders on the set $A$ of alternatives, and with
respect to general single-peaked domains $\mathbf{R}_{N}$ on $\mathcal{R}%
_{A} $ induced by the former $R_{N}$ profiles. Thus, the foregoing social
welfare functions may also be regarded as valuable sources of information
and advice concerning the `general interest' (or `common good'). In many
cases, they provide an explicit description of the alternatives that best
represent the `common good', or define anyway clear improvements on the
status quo. But occasionally they may also help to pursue the `general
interest' by pointing to situations of pathologically strong social
conflict: they do that precisely by returning outcomes that allow for
`inefficient' choices when fed with inputs encoding such a sort of social
conflict\footnote{%
See also Saari (2008) on the connection between conflict, cycling and
inefficiency.}. To put it in other terms, any violation of WP by such social
welfare functions might be regarded as a sort of `error message' calling for
public intervention (e.g. promoting an improved access to key relevant
information for the general public, implementing some appropriate
redistribution policies, or just relying on some contingent agenda
manipulation activities of the sort thoroughly analyzed and discussed in
Schwartz (1986)\footnote{%
Notice, however, that in IIA-based models of collective choice as advocated
by Schwartz (1986) such agenda-structure manipulation activities are treated
as \textit{normal} and endemic to every democratic aggregation protocol. Of
course, that is pretty much the same conclusion as that typically suggested
by authors that regard Arrow's theorem as an indictment of democratic
preference aggregation protocols, \`{a} la Riker (1982) (see footnote 26
below). By contrast, within the IIA-free models considered in the present
work, such agenda-structure manipulation processes can (and should) be
considered as \textit{local, contingent} subroutines appended to general
democratic aggregation protocols in order to increase their effectiveness to
cope with certain \textit{specific }sorts of conflicts related to Condorcet
cycles.} in order to ensure outcome-efficiency).
\end{remark}

\bigskip

Concerning the study of agenda manipulation-proofness for SAFE social
welfare functions and their agenda, Proposition 2 implies that IIA does in
fact enforce AMP$_{S}$. Unfortunately, the side effects of IIA on proper
consensus-based social welfare functions are simply devastating. That is
well-known thanks to a cluster of theorems originating with Arrow's famous
`general possibility theorem'. We recall here just a selected sample of four
key results from that cluster. The first pair consists of two
characterizations of \textit{dictatorial} social welfare functions: both of
them rely on the combination of IIA with some further condition which at
first sight would seem to be rather uncontroversial or at least undemanding.

\bigskip

\begin{theorem}
(i) (Arrow's Theorem (Arrow (1963))) A social welfare function $f:\mathcal{R}%
_{A}^{N}\rightarrow \mathcal{R}_{A}$ satisfies IIA and WP if and only if $f$
is dictatorial;

ii) (Hansson' s \textit{Non-Constancy }Theorem (Hansson (1973))\footnote{%
It should be emphasized that Hansson's Non-Constancy Theorem (which was
established independently of Wilson's Theorem) amounts to replacing the
`global stalemate' clause of Wilson's Theorem \ with a weaker clause
(violation of the LS condition i.e. of `Strong Non-Constancy' in Hansson's
own original terminology). Incidentally, a close inspection of Hansson's
proof shows that it can also be deployed to imply the stronger Wilson's
`global stalemate' clause. See also Malawski, Zhou (1994) and Cato (2012)
for related work on preference aggregation without WP.} ) A social welfare
function $f:\mathcal{R}_{A}^{N}\rightarrow \mathcal{R}_{A}$ satisfies IIA,
LS and is \textbf{not }inversely dictatorial if and only if $f$ is
dictatorial.\textit{\ }
\end{theorem}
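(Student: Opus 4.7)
The right-to-left direction of each equivalence is immediate: a dictatorial social welfare function trivially satisfies IIA, WP, and LS, and (for $n\geq 2$) cannot also be inversely dictatorial, since unanimous strict preference for $x$ over $y$ would force both $xP(f(R_N))y$ (dictator) and $yR_jx$ for some $j$ (inverse dictator), a contradiction. Essentially all the effort goes into the forward directions, which I plan to handle with the classical decisive-coalitions machinery. Call $S\subseteq N$ \emph{decisive} for an ordered pair $(x,y)$ with $x\neq y$ if $xP(R_i)y$ for every $i\in S$ implies $xP(f(R_N))y$ irrespective of the preferences of agents outside $S$, and \emph{anti-decisive} if the same pattern forces $yP(f(R_N))x$.

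For Arrow's part (i), I would first observe that WP guarantees that $N$ is decisive for every pair $(x,y)$, so a minimal decisive coalition $S^{*}$ exists. The next step is the Field Expansion Lemma: if $S$ is decisive for some $(x,y)$, then $S$ is decisive for every ordered pair. Its proof introduces a third alternative $z$ (using $|A|\geq 3$) and builds a profile whose restriction to $\{x,y,z\}$, via IIA and WP, transfers decisiveness across pairs. Then comes the Contraction Lemma: if $|S^{*}|\geq 2$, one splits $S^{*}=S_{1}\cup S_{2}$ into two nonempty pieces and constructs a three-alternative profile in which $S_{1}$ and $S_{2}$ have opposing preferences, forcing one of them to be decisive on some pair and thereby contradicting the minimality of $S^{*}$. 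Hence $S^{*}=\{i^{*}\}$ is a singleton, and, combined with field expansion, $i^{*}$ is a dictator.

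For Hansson's part (ii), I plan to run a Wilson-style trichotomy, handling decisive and anti-decisive families in parallel because WP is unavailable. Under IIA, the behavior of $f$ on every pair $(x,y)$ is determined by which agents strictly prefer $x$ to $y$, so one can attach a monotone Boolean behavior to each pair. LS guarantees that for every pair some preference pattern produces a strict social preference, which bars the global-stalemate behavior $f^{U_{A}}$, whose restrictions $f(R_{N})_{|\{x,y\}}$ would all equal the universal indifference, violating LS. The expansion/contraction arguments then force one of three mutually exclusive outcomes: some singleton is decisive for every pair (dictator); some singleton is anti-decisive for every pair (inverse dictator); or neither, in which case only the global stalemate survives. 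The third alternative is excluded by LS and the second by hypothesis, leaving dictatorship.

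The main obstacle is, as always with Arrow-type arguments, the delicate profile constructions required to run the field expansion and contraction steps. In part (ii) this is compounded by the need to track the decisive and anti-decisive families in tandem and to identify precisely the step where LS (rather than WP) suffices to exclude the global-stalemate fixed point of the expansion/contraction machinery. One must be careful that the constructions used to propagate decisiveness across pairs do not implicitly invoke strict Pareto, and that the same constructions, when fed an anti-decisive instance, yield a symmetric anti-decisive conclusion so that the trichotomy closes.
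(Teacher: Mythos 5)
The paper itself offers no proof of this theorem: it is quoted from Arrow (1963) and Hansson (1973), and the surrounding text (Wilson's Theorem, stated next as Theorem 3, together with the Remark following it and the observation that LS implies WS) indicates the economical route to part (ii): LS implies WS, so Wilson's trichotomy gives that $f$ is dictatorial, inversely dictatorial, or the global stalemate $f^{U_{A}}$; the global stalemate violates LS (all its pairwise restrictions coincide), and inverse dictatorship is excluded by hypothesis, so $f$ is dictatorial. Your part (i) is the standard decisive-coalition argument (field expansion plus group contraction) and is the right approach, though only at the level of a plan, since the two lemmas are named rather than proved. In part (ii) you propose instead to re-derive the Wilson-style trichotomy from scratch with parallel decisive/anti-decisive families; that is legitimate (it is essentially Wilson (1972)), but it is substantially heavier than invoking the trichotomy, and it is exactly the portion you leave unexecuted, acknowledging the "delicate profile constructions" without supplying them. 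As it stands, the entire mathematical content of (ii) is asserted, not established.

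Two supporting claims also need repair. First, IIA does not make the behavior of $f$ on a pair a function of "which agents strictly prefer $x$ to $y$", and it confers no monotone Boolean structure: it only says the social ranking of $\left\{ x,y\right\}$ depends on the profile restricted to $\left\{ x,y\right\}$, which records indifferences as well, and monotonicity is an extra property that is unavailable here — its absence is precisely why the inversely dictatorial branch appears in Wilson's theorem, so an argument that presupposes monotone pairwise aggregators would fail. Second, under the paper's definition of dictatorship ($xf(R_{N})y$ only if $xR_{i}y$, i.e. $f(R_{N})\subseteq R_{i}$, so the dictator's indifferences may be resolved arbitrarily and even profile-dependently), the converse "dictatorial implies IIA" is not in fact trivial: a rule that reproduces $R_{1}$ and breaks $R_{1}$'s ties by $R_{2}$ or by $R_{3}$ according to how $R_{3}$ ranks some third alternative is dictatorial yet violates IIA. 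This looseness is inherited from the theorem's "if and only if" phrasing, but you should either confine yourself to the substantive forward implications or note the needed strengthening of "dictatorial" rather than declare that direction immediate.
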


\bigskip

The third theorem focusses instead on the combination of IIA with a
definitely compelling and uncontroversial condition, to point out the
exceedingly strong and unpalatable restrictions that combination engenders
on the admissible social welfare functions.

\bigskip

\begin{theorem}
(Wilson (1972)) Let $f:\mathcal{R}_{A}^{N}\rightarrow \mathcal{R}_{A}$ be a
social welfare function that satisfies IIA and WS. Then, $f$ is either
dictatorial, or inversely dictatorial, or else $f=f^{U_{A}}$ i.e. $f$ is the
global stalemate social welfare function for $(N,A)$.
\end{theorem}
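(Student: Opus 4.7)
The plan is to follow the classical Wilson route via the machinery of decisive coalitions, suitably extended to account for both "positive" and "negative" decisiveness since the Weak Pareto principle is unavailable. First, I would use IIA to reduce the problem to pair-by-pair analysis: for every ordered pair of distinct alternatives $(x,y)\in A\times A$ and every profile $R_N\in\mathcal{R}_A^N$, the restriction $f(R_N)_{|\{x,y\}}$ depends only on $(R_{i|\{x,y\}})_{i\in N}$. This lets me define, for each ordered pair $(x,y)$, the families
\[
D^{+}(x,y)=\{S\subseteq N : \text{for every } R_N\text{ with }\{i: xP(R_i)y\}=S \text{ one has }xP(f(R_N))y\},
\]
and analogously $D^{-}(x,y)$ (with $y$ strictly above $x$ socially). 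Weak Sovereignty guarantees that the social preference on at least one pair is not trivially fixed.

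Next, I would run the standard \emph{field-expansion / contagion} argument, adapted to a WP-free setting. Given $S\in D^{+}(x,y)$, pick a third alternative $z$ (which exists because $|A|\geq 3$) and construct a profile in which members of $S$ rank $x\succ y\succ z$, while the preferences of $N\setminus S$ on the pair $\{y,z\}$ are prescribed but those on $\{x,z\}$ are left unconstrained. Transitivity of $f(R_N)$ combined with IIA, together with decisiveness of $S$ on $(x,y)$, forces a conclusion on $(x,z)$ that must be independent of the unconstrained preferences; iterating over all triples gives $S\in D^{+}(u,v)$ for every pair $(u,v)$. A symmetric argument propagates $D^{-}$. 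Accordingly, $D^{+}:=D^{+}(x,y)$ and $D^{-}:=D^{-}(x,y)$ are both pair-independent, and by construction $D^{+}\cap D^{-}=\varnothing$.

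Finally, I would perform a trichotomy. If $D^{+}$ is nonempty, standard closure properties (supersets of decisive coalitions are decisive, and the intersection of two decisive coalitions is decisive: this last fact is obtained by the usual three-alternative cyclic profile plus transitivity of $f(R_N)$) show that $D^{+}$ is a proper filter of $(\mathcal{P}(N),\subseteq)$, and, by running a Condorcet-style construction, in fact an ultrafilter; since $N$ is finite, it is principal, generated by some $\{i^{\ast}\}$, and then IIA forces $xf(R_N)y$ iff $xR_{i^{\ast}}y$ for every pair, i.e. $f$ is dictatorial. If instead $D^{-}$ is nonempty, the symmetric argument yields an inverse dictator. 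If both $D^{+}$ and $D^{-}$ are empty, I claim $f\equiv f^{U_{A}}$: indeed, if $xP(f(R_N))y$ for some $R_N$ and some distinct $x,y$, then setting $S:=\{i: xP(R_i)y\}$, a modification of all preferences outside the pair $\{x,y\}$ (using IIA to preserve $f(R_N)_{|\{x,y\}}$) would exhibit $S\in D^{+}(x,y)$, a contradiction; hence $f(R_N)$ has no strict component, i.e. $f(R_N)=U_A$.

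The main obstacle is the contagion step: the case analysis on triples must be carried out carefully so that the interaction between $D^{+}$ and $D^{-}$ on different pairs is controlled without appealing to WP (which is available in Arrow's original proof but not here). This is exactly the technical subtlety that distinguishes Wilson's theorem from Arrow's, and is the reason the conclusion admits the two extra possibilities of inverse dictatorship and global stalemate.
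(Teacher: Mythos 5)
You should first note that the paper does not prove this statement at all: it is quoted as Wilson's theorem (Wilson (1972)), and the surrounding Remark only observes that it can be obtained from Arrow's theorem via Hansson's (1973) argument. So the comparison is with the classical proof, and your plan is indeed the classical decisive-coalition/ultrafilter route. As a plan it is reasonable, but as written it has two concrete gaps. First, your notion of decisiveness is not the invariant that IIA actually provides. IIA fixes the social ranking of $\{x,y\}$ only across profiles with the \emph{same restriction} to $\{x,y\}$, i.e.\ the same tri-partition of $N$ into agents with $x\succ y$, agents indifferent, and agents with $y\succ x$. Your $D^{+}(x,y)$ is defined via the support set $\{i: xP(R_i)y\}=S$ alone, which lumps together many different restricted profiles (agents outside $S$ may be indifferent or may strictly prefer $y$). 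Consequently your final step fails: from one profile with $xP(f(R_N))y$ you cannot conclude $S\in D^{+}(x,y)$, because IIA says nothing about profiles in which the agents outside $S$ change between indifference and strict reversal on $\{x,y\}$. The argument must either be run at the level of the full pairwise restricted profile, or follow Wilson's own strategy (non-nullness plus non-imposition yields WP or inverse WP, then invoke Arrow), which is essentially the route the paper's Remark alludes to.

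Second, the contagion step — which you yourself flag as ``the main obstacle'' — is precisely the content of the theorem and is not carried out. Without WP you cannot control what the social preference does on the auxiliary pairs $\{y,z\}$, $\{x,z\}$ even under unanimity, so ``transitivity plus IIA forces a conclusion on $(x,z)$'' is an assertion, not an argument; this is exactly where WS (non-imposition) must be deployed, and your proposal never shows how. That this cannot be waved away is seen from the constant social welfare function $f\equiv L$ for a fixed linear order $L$: it satisfies IIA, your $D^{+}(x,y)$ is nonempty (indeed all of $\mathcal{P}(N)$) for pairs with $xP(L)y$, yet $f$ is neither dictatorial, nor inversely dictatorial, nor the global stalemate — it is excluded only by WS. Hence any correct version of your contagion and trichotomy must make WS do real work (your only use of it, ``the social preference on at least one pair is not trivially fixed,'' also misstates WS, which is non-imposition: for every ordered pair $(x,y)$ some profile yields $xf(R_N)y$). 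As it stands the proposal defers the essential difficulty rather than resolving it.
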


\begin{remark}
Notice that the original proof of Hansson's Non-Constancy Theorem in Hansson
(1973) relies in fact on Arrow's Theorem. Moreover, a careful inspection of
that proof makes it clear that the only social welfare function that is
neither dictatorial nor inversely dictatorial and satisfies IIA is the
Global Stalemate function. In other terms, Hansson's proof shows that
Arrow's Theorem implies Wilson's Theorem (it should be recalled here that
Hansson's Non-Constancy Theorem was first published in a 1972 working paper,
independently of Wilson's Theorem). But then, since Wilson's Theorem
obviously implies Arrow's Theorem, the foregoing observation confirms that
the two of them are in fact equivalent.
\end{remark}

\bigskip

The fourth theorem shows that combining IIA with two widely accepted
conditions for `democratic' aggregation rules such as anonymity (AN) and
neutrality (NT) results in a characterization of the global stalemate social
welfare function.

\bigskip

\begin{theorem}
(Hansson (1969a)) Let $f:\mathcal{R}_{A}^{N}\rightarrow \mathcal{R}_{A}$ be
a social welfare function that satisfies IIA, AN and NT. Then, $f=f^{U_{A}}$
i.e. $f$ is the global stalemate social welfare function for $(N,A)$.
\end{theorem}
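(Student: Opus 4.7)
The plan is to derive the result from Wilson's Theorem (Theorem 3 above) via a two-step reduction. First, I would show that NT together with totality of $f(R_N)$ already forces Weak Sovereignty (WS). Second, once WS is available, Theorem 3 restricts $f$ to one of three types, and AN excludes both (inverse-)dictatorial options, leaving only $f=f^{U_A}$.

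For the first step, suppose for contradiction that WS fails, so there exist distinct $x,y\in A$ with $xf(R_N)y$ false for every $R_N\in\mathcal{R}_A^N$. Given an arbitrary profile $R_N''$, let $R_N$ be obtained from $R_N''$ by transposing $x$ and $y$ in each individual preorder, i.e. by setting $aR_ib$ iff $\tau(a)R_i''\tau(b)$ with $\tau=(xy)$. Then the pair $(R_N,R_N'')$ satisfies the NT hypothesis $xR_iy$ iff $yR_i''x$ for every $i$, so NT yields $xf(R_N)y$ iff $yf(R_N'')x$. The left side is false by the standing assumption, whence $yf(R_N'')x$ is false as well. Since $R_N''$ was arbitrary, neither $(x,y)$ nor $(y,x)$ can ever belong to $f(R_N'')$, contradicting connectedness of the total preorder $f(R_N'')$. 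Therefore WS must hold.

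Having secured WS, Theorem 3 entails that $f$ is dictatorial, inversely dictatorial, or $f^{U_A}$. To exclude dictatorship of an agent $i^*$, pick any distinct $R,R'\in\mathcal{R}_A$ and any $j\in N\setminus\{i^*\}$, and compare a profile $R_N$ with $R_{i^*}=R$ and $R_j=R'$ to its image under the permutation $\sigma=(i^*\,j)$: dictatorship sends them to $R$ and $R'$ respectively, while AN forces equal outputs, a contradiction. The inversely dictatorial case is excluded by the same argument applied to the order-reversed relations. Hence $f=f^{U_A}$.

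I expect the first step to be the main subtlety: it is crucial that the $(xy)$-transposition on individual preorders is an involution on $\mathcal{R}_A^N$, so that every profile $R_N''$ genuinely arises as an NT-partner of some constructed $R_N$; without this surjectivity one would only learn something about the specific transform of a single profile, which would be insufficient to contradict connectedness. Once WS is in hand, the appeal to Wilson's Theorem and the AN-based elimination of the two dictatorship cases are routine.
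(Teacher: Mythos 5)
The paper itself does not prove this theorem: it is quoted from Hansson (1969a), and the nearest in-paper argument is the proof of Proposition 4, which likewise runs through Wilson's Theorem and then eliminates the dictatorial and inversely dictatorial branches (there using WS and MDR rather than AN; the subsequent Remark, noting that AN and NT imply WS and MDR, means the theorem also drops out of Proposition 4). Your route is therefore sound and essentially parallel to the paper's own machinery. Your first step is correct: deducing WS from NT together with connectedness of $f(R_{N}^{\prime \prime })$ via the $(x\,y)$-relabelled profile works, and the only thing actually needed is that the relabelled profile lies in $\mathcal{R}_{A}^{N}$ (it does, since $\tau $ is a bijection on $A$); the ``involution/surjectivity'' worry you flag is beside the point, because you construct $R_{N}$ from an arbitrary $R_{N}^{\prime \prime }$, not the other way around. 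The appeal to Wilson is exactly the paper's Theorem 3.

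One small repair is needed in the elimination step. Under the paper's definition, ``dictatorial'' means only that $xf(R_{N})y$ implies $xR_{i^{\ast }}y$, i.e. $f(R_{N})\subseteq R_{i^{\ast }}$, not $f(R_{N})=R_{i^{\ast }}$; so the claim ``dictatorship sends them to $R$ and $R^{\prime }$'' is not literally available, and an arbitrary distinct pair $R,R^{\prime }$ need not yield a contradiction (if, say, $R\subseteq R^{\prime }$, then AN plus dictatorship only force the common value into $R\cap R^{\prime }=R$, which still contains total preorders). Choose instead $R,R^{\prime }$ that strictly disagree on some pair, e.g. $xP(R)y$ and $yP(R^{\prime })x$ (two reversed linear orders will do): then AN and dictatorship give $f(R_{N})=f(R_{\sigma (N)})\subseteq R\cap R^{\prime }$, and $R\cap R^{\prime }$ contains neither $(x,y)$ nor $(y,x)$, contradicting connectedness of the social preorder; the same fix handles the inversely dictatorial branch. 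With that adjustment your proof is complete and, in substance, is an inlined version of the derivation the paper offers via Proposition 4 and its accompanying Remark.
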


\bigskip

The following alternative characterization of the global stalemate social
welfare function combines IIA with two weak conditions following from
anonymity and neutrality such as WS and MDR to the effect of emphasizing the
inordinate strength of IIA.

\bigskip

\begin{proposition}
A social welfare function $f:\mathcal{R}_{A}^{N}\rightarrow \mathcal{R}_{A}$
satisfies IIA, WS and MDR if and only if $f$ is the global stalemate social
welfare function i.e. $f=f^{U_{A}}$.
\end{proposition}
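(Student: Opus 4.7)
The plan is to verify the easy direction by direct inspection and to reduce the converse to Wilson's theorem (Theorem 3), eliminating its two non-constant alternatives by means of MDR. For the forward direction, $f^{U_{A}}$ is constant at $U_{A}=A\times A$, so IIA is vacuous, and $x\,f^{U_{A}}(R_{N})\,y$ holds for every $R_{N}$ and every $x,y\in A$, giving WS. Moreover, for every $S\subseteq N$ and every distinct $v,z\in A$, any witness $R_{S}^{vz}$ trivially fulfils the defining condition of $F_{f^{U_{A}}}(v,z)$, so $S\in F_{f^{U_{A}}}(v,z)$ and the conclusion of MDR is satisfied whenever its hypothesis is.

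For the converse, assume $f$ satisfies IIA, WS, and MDR. By Wilson's theorem, $f$ must be dictatorial, inversely dictatorial, or equal to $f^{U_{A}}$; I would rule out the first two alternatives using MDR. Suppose $f$ is dictatorial with dictator $i\in N$, and fix any distinct $x,y\in A$. Take a witness $R_{i}^{xy}$ with $x\,P(R_{i}^{xy})\,y$: the matching clause in the definition of $F_{f}(x,y)$ pins down agent $i$'s ranking of $\{x,y\}$ in the direction of the witness, so the dictator property yields $x\,P(f(R_{N}))\,y$ and hence $x\,f(R_{N})\,y$ for every admissible $R_{N}$. Thus $\{i\}\in F_{f}(x,y)$, and MDR requires some $S\subseteq N\setminus\{i\}$ and distinct $v,z\in A$ with $S\in F_{f}(v,z)$. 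But for any such candidate $S$ and any witness $R_{S}^{vz}$, I would construct a profile $R_{N}$ agreeing with $R_{S}^{vz}$ on $S$ while satisfying $z\,P(R_{i})\,v$; the dictator property then forces $z\,P(f(R_{N}))\,v$, so $v\,f(R_{N})\,z$ fails and $S\notin F_{f}(v,z)$. This contradicts MDR. The inversely dictatorial case is entirely symmetric: one flips the chosen direction of the witness and applies the reversed dictator property.

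The main delicate point is ensuring that the matching clause in $F_{f}$'s definition really does pin down agent $i$'s $\{x,y\}$-ranking in the direction of the witness, so that the dictator's strict-preference control is actually triggered and $\{i\}\in F_{f}(x,y)$. Once this reading is granted, the rest is bookkeeping, because the dictator can override any coalition disjoint from $\{i\}$ on any chosen ordered pair by a suitable choice of $R_{i}$; Wilson's trichotomy then collapses to the global stalemate alternative $f=f^{U_{A}}$.
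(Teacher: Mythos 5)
Your proposal is correct and follows essentially the same route as the paper: the forward direction by direct inspection of the constant function, and the converse via Wilson's theorem with MDR used to eliminate the dictatorial and inversely dictatorial branches. The delicate point you flag is resolved exactly as you hope, since in its proof the paper reads the matching clause of $F_{f}$ as equality of the restrictions to the pair (i.e. $R_{S|\left\{ v,z\right\} }=R_{S|\left\{ v,z\right\} }^{vz}$), so the witness does pin down the dictator's strict ranking; your explicit verification that $\left\{ i\right\} \in F_{f}(x,y)$ holds for a dictator (the hypothesis of MDR) is in fact spelled out more carefully than in the paper's own text, which invokes MDR without making that step explicit.
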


\begin{proof}
$\Longrightarrow$ Suppose that social welfare function $f$ satisfies IIA, WS
and MDR. But then, it follows from Wilson's Theorem as mentioned above that $%
f$ is dictatorial, inversely dictatorial, or the global stalemate constant
function $f^{U_{A}}$. However, a dictatorial social welfare function clearly
violates MDR: indeed, suppose $i\in N$ is such that, for every $x,y\in A$
and $R_{N}\in$ $\mathcal{R}_{A}^{N}$, $xf(R_{N})y$ entails $xR_{i}y$.
Moreover, by WS, for every $x,y\in A$ there exists $R_{N}\in \mathcal{R}%
_{A}^{N}$ such that $xf(R_{N})y$, whence $xR_{i}y$ holds. Then, by MDR there
exists $S\subseteq N\smallsetminus \left \{ i\right \} $ and a pair of
distinct $v,z\in A$ such that $S\in F_{f}(v,z)$ i.e. there exists $%
R_{S}^{vz}\in \mathcal{R}_{A}^{S}$ such that $vf(R_{N})z$ for every $R_{N}\in 
\mathcal{R}_{A}^{N}$ with $R_{S|\left \{ v,z\right \} }=R_{S|\left \{
v,z\right \} }^{vz}$. Now, consider a profile $R_{N}\in \mathcal{R}_{A}^{N}$
such that $zR_{i}v$, \textit{not }$vR_{i}z$ and $R_{S|\left \{ v,z\right \}
}=R_{S|\left \{ v,z\right \} }^{vz}$. By definition of $F_{f}$, $vf(R_{N})z$%
. However, since $f$ is dictatorial, \textit{not }$vR_{i}z$ implies \textit{%
not }$vf(R_{N})z$, a contradiction. Thus, $f$ is \textit{not} dictatorial,
as required.

Similarly, suppose there exists $i\in N$ such that for every $x,y\in A$ and $%
R_{N}\in$ $\mathcal{R}_{A}^{N}$, $xf(R_{N})y$ entails $yR_{i}x$. Again, it
follows from WS that for every $x,y\in A$ there exists $R_{N}\in$ $\mathcal{R%
}_{A}^{N}$ such that $xf(R_{N})y$, whence $yR_{i}x$ holds, by our
assumption. Then, by MDR there exists $S\subseteq N\smallsetminus \left \{
i\right \} $ and a pair of distinct $v,z\in A$ such that $S\in F_{f}(v,z)$.
Now, consider a profile $R_{N}\in \mathcal{R}_{A}^{N}$ such that $vR_{i}z$, 
\textit{not }$zR_{i}v$ and $R_{S|\left \{ v,z\right \} }=R_{S|\left \{
v,z\right \} }^{vz}$. By definition of $F_{f}$, $vf(R_{N})z$. However, by
assumption, \textit{not }$zR_{i}v$ implies \textit{not }$vf(R_{N})z$, a
contradiction. Thus, $f$ is \textit{not} inversely dictatorial either.
Therefore, it follows from Wilson's Theorem that $f=f^{U_{A}}$, the global
stalemate function.

$\Longleftarrow$ It can be easily shown that the global stalemate social
welfare function $f^{U_{A}}$ satisfies IIA, WS and MDR. Indeed, take any
sovereign agenda formation rule $f$, posit $\mathcal{F}\mathbb{(}f):=\left
\{ f_{B}:=f^{U_{B}}\right \} _{B\subseteq A}$, and consider the
corresponding PAFE social welfare function $\mathbf{f}=(f,\mathcal{F}\mathbb{%
(}f))$. By definition, $f_{A}:=f^{U_{A}}$ which obviously satisfies IIA,
being a constant function defined on $\mathcal{R}_{A}^{N}$. Thus, by
Proposition 2, $\mathbf{f}$ satisfies AMP$_{S}$ and consequently, by
definition, $f$ $^{U_{A}}$ also satisfies AMP$_{S}$. Moreover, for any $%
x,y\in A$ and $R_{N}\in \mathcal{R}_{A}^{N}$, $xf^{U_{A}}(R_{N})y$ hence WS
is trivially satisfied by $f^{U_{A}}$. Finally, observe that the
responsiveness correspondence $F_{f^{U_{A}}}$ is such that $%
F_{f^{U_{A}}}(x,y)=\mathcal{P}(N)$ for all $x,y\in A$. But then, for any $%
x,y\in A$ and any $i,j\in N$, $\left \{ N,\left \{ i\right \} ,\left \{
j\right \} \right \} \subseteq F_{f^{U_{A}}}(x,y)$. It follows that $%
f^{U_{A}}$ also satisfies MDR.
\end{proof}

\bigskip

\begin{remark}
Notice that AN and NT do indeed imply WS and MDR, while the converse does
not hold: to see this, consider the social welfare function $f^{\ast}$ such
that for some $1,2,3\in N$, and for every $x,y\in$ $A$, $R_{N}\in$ $\mathcal{%
R}_{A}^{N}$, $xf^{\ast}(R_{N})y$ iff either $xR_{\left \{ 1,2\right \} }y$
or [\textit{not }$xR_{\left \{ 1,2\right \} }y$ and $xR_{3}y $]. It follows
that Proposition 4 amounts to an extension of Hansson's characterization of
the Global Stalemate social welfare function $f^{U_{A}}$via AN, NT and IIA
(Hansson(1969a)).
\end{remark}

\bigskip

\begin{corollary}
There is no social welfare function $f:\mathcal{R}_{A}^{N}\rightarrow 
\mathcal{R}_{A}^{T}$ that satisfies IIA, S and MDR. Thus, in particular,
there is no idempotent social welfare function that satisfies IIA and MDR.
\end{corollary}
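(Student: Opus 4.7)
The plan is to derive the corollary as a direct consequence of Proposition 4, using only the obvious implications among the unbiasedness conditions. First I would observe that Sovereignty (S) trivially implies Weak Sovereignty (WS): for any $x,y\in A$ choose any $R\in\mathcal{R}_{A}$ with $xRy$ (for instance, the universal indifference relation $U_{A}$ works), and by S there exists $R_{N}\in \mathcal{R}_{A}^{N}$ with $f(R_{N})=R$, hence $xf(R_{N})y$.

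Next, suppose for contradiction that $f$ satisfies IIA, S and MDR. By the implication just noted, $f$ satisfies IIA, WS and MDR, so by Proposition 4 we must have $f=f^{U_{A}}$, the global stalemate social welfare function. But $f^{U_{A}}$ is the constant function whose unique value is $U_{A}=A\times A$, so its range is the singleton $\{U_{A}\}$. Since $|A|\geq 3$ the set $\mathcal{R}_{A}$ contains preorders different from $U_{A}$ (e.g. any linear order on $A$), and therefore $f^{U_{A}}$ fails to be sovereign. This contradicts the assumption that $f$ satisfies S, establishing the first claim.

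For the second claim I would simply observe that Idempotence (ID) implies Sovereignty: given any $R\in \mathcal{R}_{A}$, take the constant profile $R_{N}=(R,\ldots,R)$; then by ID, $f(R_{N})=R$, so every element of $\mathcal{R}_{A}$ lies in the range of $f$. Hence any idempotent social welfare function satisfying IIA and MDR would also satisfy IIA, S and MDR, which is impossible by the first part. The only step requiring any care is the verification that the global stalemate fails sovereignty, which rests on the hypothesis $|A|\geq 3$ fixed at the outset; no subtle step is involved, so I do not anticipate a genuine obstacle.
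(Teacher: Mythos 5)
Your proof is correct and follows essentially the same route as the paper: derive $f=f^{U_{A}}$ from Proposition 4 via the implication S $\Rightarrow$ WS, contradict Sovereignty since the global stalemate function is constant, and then note that Idempotence implies Sovereignty for the second claim. You merely spell out details the paper leaves implicit (why S implies WS, why $f^{U_{A}}$ fails S, why ID implies S), all of which are verified correctly.
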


\begin{proof}
Suppose that on the contrary there exists a social welfare function $f$
which satisfies IIA, S, and MDR. Since S clearly implies WS, it follows from
Proposition 4 above that $f=f^{U_{A}}$, a contradiction because by
definition $f^{U_{A}}$does \textit{not} satisfy S. The second statements
follows trivially since any idempotent social welfare function does satisfy
S.
\end{proof}

\bigskip

So, we have a further impossibility result that follows just from the
combination of IIA and MDR with Sovereignty, with no role at all for the
Weak Pareto Principle.

It should also be mentioned that, relying on other well-known results from
the extant literature, further elaborations on the role of IIA established
by the foregoing theorems can be easily produced. For instance, a further
result in Wilson (1972) shows that \textit{any} social welfare function
which satisfies IIA must produce `social preferences' invariably composed by 
\textit{some} combination of \textit{at most five }different types of
patches corresponding respectively to `locally imposed strict preferences',
`minimal (local) stalemates', `non-minimal (local) stalemates', `locally
dictatorial preferences' and `locally inversely-dictatorial preferences'%
\footnote{%
See Wilson (1972), Theorem 5. Binmore (1976) is an interesting further
extension of that theorem, showing that in order to avoid the grim
consequences of the latter the domain of a social welfare function which
satisfies IIA should be dramatically restricted. Specifically, its domain
should not include \textit{all} the preference profiles of total preorders
consistent with \textit{at least one} arbitrarily fixed `party structure'
(namely, a partition of agents into `parties' defined as sets of agents
whose preferences are al least partially concordant on \textit{every} pair
of alternatives).}. Furthermore, it is also well-known that there exists a
quite general model-theoretic rationale underlying such results (see e.g.
Lauwers, Van Liedekerke (1995) for details). Namely, it is sufficient to
join \textit{either} IIA and the Weak Pareto Principle (WP) \textit{or }IIA
and Idempotence (ID) to force the set of \textit{decisive coalitions%
\footnote{%
A \textit{decisive coalition} of $\ $a social welfare function $f$ \ is any
coalition $C\subseteq N$ that can enforce the unanimous preference of its
members between each pair of alternatives as the actual social preference. } 
}of a social welfare function $f$ for $(N,A)$ to be an \textit{ultrafilter%
\footnote{%
An \textit{ultrafilter} (or \textit{maximal lattice-filter} on $N$) is a
nonempty set $\mathcal{F\subseteq P}(N)\setminus \left \{ \varnothing \right
\} $ such that for every $C,D\in \mathcal{F}$: (i) $C\cap D\in \mathcal{F}$
\ and (ii) either $C\in \mathcal{F}$ or $N\setminus C\in$ $\mathcal{F}$.
Since $N$ is finite, every ultrafilter $\mathcal{F}$ on $N$ is \textit{%
principal}, namely $\mathcal{F}=\left \{ C\subseteq N:i\in C\right \} $ for
some $i\in N$. It follows that $\left \{ i\right \} $ is a decisive
coalition for $f$, and consequently $f$ is a \textit{dictatorial }social
welfare function.}} on $N$, a fact which in turn implies that $f$ is \textit{%
dictatorial }since $N$ is by assumption finite.

Thus, our results confirm Arrow's core intuition concerning the relationship
of IIA to agenda manipulation-proofness. At the same time, they also
circumscribe that relationship to the case of an `agenda-first' \textit{%
sequential coupling of agenda formation and preference aggregation }(the AMP$%
_{S}$ version of agenda manipulation-proofness fr SAFE social welfare
functions)\textit{. }Specifically, Proposition 4 shows that IIA both ensures
AMP$_{S}$ for any SAFE social welfare function, and implies a condition
(IIAP) that is \textit{required} to secure AMP$_{S}$. It follows that
Arrow's `general possibility theorem' can be regarded as the seminal result
of a cluster of theorems showing that reliance on IIA to achieve agenda
manipulation-proofness in the AMP$_{S}$ version for SAFE social welfare
function has an exhorbitant, unacceptable cost. To be sure, it remains to be
seen whether relaxing IIA to IIAP makes it possible to achieve AMP$_{S}$
without ruling out at all proper consensus-based social welfare functions,
and if so to what extent. But in any case, such a cluster of theorems (as
combined with Propositions 1 and 3 above) point to \textit{PAFE social
welfare functions as a feasible working alternative to secure the
possibility of reliable and proper consensus-based preference aggregation
rules}.

\bigskip

\textbf{3. Discussion.}

Thus, it is abundantly clear that IIA is a powerful obstruction\textit{\ }to 
\textit{each one }of the following two basic requirements for any proper
consensus-based social welfare function, namely (i) \textit{weak
Pareto-optimality} \textit{and }(ii) \textit{minimally distributed
responsiveness}. But the main motivation for introducing IIA is \textit{%
precisely} the attempt to prevent agenda manipulation of a social welfare
function when the elicited preferences only concern alternatives of the
agenda. Hence, it is insisting on IIA to ensure agenda
manipulation-proofness (as expressed by AMP$_{S}$) that gives rise \textit{%
by itself\ }to a bleak scenario concerning the construction of properly
consensual social welfare functions.

Let us then briefly summarize the widely shared views on the import of
Arrow's theorem which typically follow from a firm endorsement of IIA as
grounded on the assumption that IIA amounts to a key requirement for \textit{%
any }reasonable voting rule in order to prevent agenda manipulation. Since
Arrow's IIA is a property shared by several commonly used\ voting rules
including the simple majority rule (arguably, a paragon of `democratic'
voting rules), it seems to follow that Arrow's theorem does validate the
following challenging, momentous statement. Namely, the assertion that 
\textit{any attempt} to use \textit{democratic} voting rules to articulate a 
\textit{consistent formulation of the collective interest} with a view to
identify and select policies which best promote it \textit{is doomed to
failure}. That is so precisely because Arrow's theorem shows that under IIA (%
\textit{Weak)\ Pareto optimality can only be achieved through dictatorship}.
Therefore, since dictatorship is obviously to be rejected as a means to
define proper consensus-based `social preferences', insisting to prevent
agenda manipulation (specifically, agenda-content manipulation) entails
reliance on some aggregation rule which might license `social preferences'
that occasionally reverse unanimously held individual strict preferences
between some pairs of alternatives. And that is also deemed to be not
acceptable. But then, the only alternative left is \textit{to allow for
agenda manipulation} (specifically, agenda-content manipulation) to the
effect of undermining \textit{reliability} of the aggregation rule, since
the representation of the `general interest' provided by such a rule will
typically reflect just successful manipulatory activities, and possibly
nothing else. Either way, the aim of producing a consistent, faithful and
credible formulation of the `general interest' cannot be apparently
fulfilled. To put it bluntly, under that view majority voting cannot be
relied upon to discover the public interest or `general will' because of its
possible cycles, and \textit{nothing else can work }because of the same
vulnerability to agenda manipulation. As a consequence, in actual practice 
\textit{there is apparently not such a thing as `general interest' to
discover, express and implement as a guide or benchmark for public policy}
(see e.g. Schwartz (1970)). It also follows, accordingly, that there is
apparently no way to feed work in mechanism design and/or institutional
design with well-grounded and reliable criteria summarizing the `general
interest', aimed at improving the effectiveness of democratic institutions.%
\footnote{%
Riker (1982) is probably the most outspoken and consistent presentation of
such a view available in print. While Dahl (1956) tentatively labeled
`populist democracy' the doctrine that identifies the exercise of
sovereignty with \textit{exclusive and unrestrained reliance on majority
voting}, Riker assumes that Arrow's theorem licences the imputation of the
same, allegedly hopeless, limitations of the majority rule to \textit{every
possible `democratic' preference aggregation rule}. Consequently, virtually
all the social-choice-theoretic work in mechanism design is uncerimoniously
identified with `populism' and dismissed as a hopeless endeavour. Riker's
suggested and clearly preferred alternative is acceptance of the
`Schumpeterian' view that modern `democratic politics' is -and \textit{has
to be}- nothing else than (i) \textit{competitive electoral selection of the
ruling elite}, and (ii) pervasive and relentless activities of \textit{%
agenda manipulation }on the part of elected officials and representatives,
in view of more or less special interests, with no effective role left for a
public representation of the common interest as a shared, consensus-based
benchmark. Accordingly, analyzing `democratic politics' from such a
perspective is regarded as the main task of `democratic theory' proper, and
the plain endorsement of that view is the defining feature of
`liberalism'(see Riker (1982), and Schofield (1985) for a thorough technical
treatment of social choice theoretic models in a multidimensional spatial
setting that shares at least some of Riker's views, while refraining from
the latter's most ideological overtones).}

That scenario has been variously described as the impossibility of
`rational' collective decisions\footnote{%
See e.g. Buchanan (1954) for an early example of that view, insisting on the
alleged impossibility of collective decisions replicating the `rationality'
of individual decisions. A much more elaborated argument to a similar effect
is offered by Bordes, Tideman (1991), showing that IIA is a consequence of a
strong restriction on general voting rules called `regularity' that is
however scarcely compelling \textit{unless systematic irredundancy} \textit{%
of preference elicitation} is -again- tacitly assumed. A somewhat more
balanced view, advocating a combination of IIA with less demanding criteria
of `rationality' for \textit{both} collective and individual decisions is
advanced by Schwartz (1986).}, or the impossibility of a reliable and
significant consensus-based expression of the `collective good' (or `general
interest' or `public will'\footnote{%
See e.g. the highly influential Riker (1982) as discussed above (footnote
25).}). This is in fact the most familiar understanding of Arrow's
`impossibility theorem'. Such an interpretation projects a `dark' view of
the content and significance of Arrow's theorem concerning the viability and
effectiveness of democratic protocols since it suggests, in short, that
there is no way to use voting methods, decision systems or preference
aggregation rules of any sort to help improving the effectiveness and
deliberative quality of current democratic protocols\footnote{%
An influential tentative list of basic, \textit{substantive} requirements
for democratic decision protocols including `political equality',
`deliberation', `participation', and `agenda control' is due to Dahl (1979).}%
. Notice however that, again, all of the above rests crucially on the
working assumption that IIA is both reasonable and virtually inescapable%
\footnote{%
See for instance Schwartz (1986), p.33 or Bordes,Tideman (1991)) for a
remarkably clear, adamant endorsement of that view.}.

But then, Propositions 1 and 3 show that \textit{there exists in fact an
alternative way to achieve agenda manipulation-proofness via AMP}$_{P}$:
such an alternative makes it possible to devise anonymous and idempotent
social welfare functions that satisfy a basic version of the Pareto
principle, and are -in a compelling sense- \textit{both agenda
manipulation-proof and strategy-proof} \textit{. }Thus, there is indeed an
effective way out of the strictures identified by Arrow's theorem. From that
perspective, Arrow's result actually provides \textit{constructive}
information about the design of social welfare functions and preference
aggregation rules: in that sense, there is also a \textit{bright side }of
Arrow's theorem. Access to the latter requires three basic steps:

(i) reliance on (possibly \textit{redundant}) \textit{preference elicitation
concerning an entire set of prefixed admissible alternatives\ }in order to
ensure \textit{agenda manipulation-proofness} \textit{without any recourse
to IIA};

(ii) a \textit{mild relaxation of the Pareto Principle to BP allowing for
occasional stalemates} (namely, social indifference over a set of
alternatives including Pareto-dominated outcomes), and a concurrent
reinterpretation of possible violations of WP and other, stronger, versions
of the Pareto principle as `warning signals' pointing to the need for
remedial actions including policies to correct blatant disparities of access
to information and/or other key resources;

(iii) refocussing on \textit{a further condition (`monotonic }$M_{\mathcal{X}%
}$\textit{-independence')} in order to address strategic manipulation
issues: such a condition can be regarded as a combination of a mild
monotonicity property and a considerably \textit{weakened} version of IIA.

Broadly speaking, some form of each one of the foregoing steps was
previously considered or at least evoked in the extant literature (a
discussion of that matter is provided in the next section).\footnote{%
Remarkably, Przeworski (2011) suggests both relaxations of IIA and some
judicious use of agenda-structure manipulation as possible remedies to cope
with possible violations of WP, and secure a reliable representation of the
collective interest based on `proximity' to individual preferences. However,
his somewhat informal treatment leaves it unexplained whether or not he also
proposes to combine such remedial moves (and if so, how).} What is new here
is, arguably, \textit{their joint consideration as made possible by a model
that combines agenda formation and preference aggregation}. The resulting
analysis shows that a sound \textit{parallel-coupling }of (`full domain')
social welfare functions to their own agenda-formation processes makes it
possible to jointly achieve \textit{anonymity}, respect for \textit{unanimity%
}, a \textit{basic form of Pareto optimality}, and \textit{agenda
manipulation-proofness} together with \textit{strategy-proofness} on a
suitably large and very natural domain of single-peaked
`meta-preferences'(or `preferences on preferences'). As observed above, it
is also remarkable that the latter strategy-proofness property turns out to
be essentially equivalent to an `independence' condition which amounts to a 
\textit{considerable weakening of IIA.}

\section{Related work}

As mentioned in the introduction, agenda manipulation-proofness was used by
Arrow as the main motivation for introducing IIA as a basic requirement for
social welfare functions in his seminal work (Arrow (1963)). Since then, it
has become quite common to use `Arrowian' as a qualifier for social welfare
functions or aggregation rules which satisfy some version of `independence
of irrelevant alternatives' (and possibly some further basic requirement
such as idempotence i.e. `respect for unanimity')\footnote{%
See, among many others, Aleskerov (1999), Sethuraman, Teo, Vohra (2003),
Nehring, Puppe (2010). An alternative (and perhaps more appropriate) usage
is the one that rather contrasts Arrowian (or multi-profile) and
Bergson-Samuelson (or single-profile) social welfare functions, and goes as
follows. Let $N,A$ \ be two (finite) sets and $\mathcal{R}_{A}$ the set of
all total preorders on $A$. An \textit{Arrowian social welfare function} for 
$(N,A)$ is a function
\par
$f:$ $\mathcal{R}_{A}^{N}\rightarrow \mathcal{R}_{A}$,
\par
while a \textit{Bergson-Samuelson social welfare function }for $(N,A)$ is a
function
\par
$f:\left \{ r^{N}\right \} \rightarrow \mathcal{R}_{A}$, with $r^{N}\in 
\mathcal{R}_{A}^{N}$.
\par
Moreover, their \textit{strict }counterparts are obtained by replacing $%
\mathcal{R}_{A}$ with\textit{\ }$\mathcal{L}_{A}$ i.e. the set of all 
\textit{linear orders} (namely, \textit{antisymmetric }total preorders) on $%
A $.
\par
{}}. Hence, the amount of literature which is broadly related to the topics
covered by the present paper is simply enormous. Therefore, we shall confine
ourselves to a very brief comment focussing on the contributions that raise
points most strictly related to the content of the present work (the
interested reader is addressed to the supplementary Appendix for a more
extensive and detailed discussion).

Monjardet (1990) first introduces the notion of (\textit{monotonic) }$M_{%
\mathcal{X}}$\textit{-independence }presenting it as a generalized
counterpart of IIA, but is not concerned with nonmanipulability properties
of any sort. Dietrich, List (2007b) shows the equivalence between a similar
notion of monotonic independence as specialized to propositions of a certain
class of formal languages and strategy-proofness of \textit{judgment}
aggregation rules. Savaglio, Vannucci (2019) shows the equivalence of 
\textit{monotonic }$M_{\mathcal{X}}$\textit{-independence }and
strategy-proofness for arbitrary aggregation rules in bounded distributive
lattices, and Vannucci (2019) applies that equivalence to \textit{rating }%
aggregation rules. Nehring, Puppe (2007, 2010) rely on (finite) property
spaces to define a class of generalized IIA properties by reducing each one
of them to a specific family of binary issues. Thus, IIA and $M_{\mathcal{X}%
} $-independence\textit{\ }are easily obtained as special cases of that
construct (but the foregoing papers do not consider the latter property, and
are not concerned with either agenda manipulation-proofness or
strategy-proofness of social welfare functions). Neither agenda
manipulation-proofness nor strategy-proofness properties are considered in
Huang (2014). However, that work provides an escape route from Arrow's
`general possibility theorem' relying on the combination of a newly defined
weakened version of IIA and the same basic Pareto optimality condition
employed in the present work, to the effect of allowing \textit{stalemates }%
(or `singularities' in Huang's own terminology). Sato (2015) is partly
concerned with the relationship between agenda manipulation-proofness and
IIA, but only considers a special class of \textit{strict} social welfare
functions enjoying a certain continuity property (`bounded response').
Curiously enough, the preference-approval aggregators studied in Kruger,
Sanver (2021), that are \textit{not} meant to address agenda manipulation
issues but rather an entirely different problem (the joint aggregation of
rankings and binary ratings), are in fact isomorphic to a \textit{%
restricted-domain} version of our PAFE social welfare functions: such a
domain-restriction is indeed necessary to capture the required consistency
between individual rankings and ratings. Finally, it should also be noticed
that the replacement of IIA with (monotonic) $M_{\mathcal{X}}$-independence
that is suggested in the present work amounts in general to a \textit{%
remarkable change and enlargement of the information-base} that is made
available for the definition of the relevant social welfare function. That
is so because when expressed in terms of properties of $\mathcal{R}_{A}$
with $|A|=m\geq 3$, the size of the set of actually relevant \textit{binary
issues} (for each individual preference relation of any profile) is $m(m-1)$
for IIA, and $2(2^{m-1}-1)$ for $M_{\mathcal{X}}$-Independence. That move is
of course consonant with a long standing argument in the social
choice-theoretic literature, that counts Sen (2017) and Saari (2008) among
its most committed and distinguished advocates. Yet, the present enlargement
rests on a peculiar combination of two characteristic features that single
it out from many other proposed enrichments of the information-base.\
Namely, it relies on a \textit{standard input-format consisting of profiles
of total preorders}, and at the same time \textit{exploits the structure of
the set of total preorders} themselves i.e. the objects to be aggregated.

\section{Concluding remarks}

The IIA condition for preference aggregation rules was first introduced by
Arrow in order to ensure their agenda manipulation-proofness, but when
combined with a few minimal reasonable conditions it results in a
characterization of dictatorial social welfare functions. That is the basic
content of Arrow's general possibility theorem. Under the assumption that
IIA is indeed the only way to block agenda manipulation, that theorem does
also imply that reliable and proper consensus-based social welfare functions
do not exist. Now, that interpretation projects a negative, disagreeable
shadow on the perceived consequences of Arrow's theorem since it suggests
that\  \textit{no meaningful consensus-based formulation of `general
interest' is available} as a guide and benchmark to promote and assess
public decisions and policies, and improve the design and/or implementation
of democratic protocols. Thus, it is not unfair to describe all that as the `%
\textit{dark}' side of Arrow's theorem. The present work shows however that,
as a matter of fact, agenda manipulation-proofness of a social welfare
function is indeed available \textit{without any appeal to IIA }provided
that agenda formation and preference elicitation are \textit{simultaneous}.
In the latter case some \textit{anonymous, idempotent, agenda
manipulation-proof, minimally efficient and weakly-neutral social welfare
functions do exist}. Moreover, a \textit{much relaxed independence condition 
}that they do satisfy ensures their \textit{strategy-proofness} as well. But
then, from the perspective provided by such \textit{positive} results,
Arrow's theorem may also be regarded as a most \textit{constructive}
contribution to the design of preference aggregation rules, in that it
suggests that agenda formation and preference elicitation are better \textit{%
not }coupled sequentially. That is precisely the `\textit{bright}' side of
Arrow's theorem that the present work is meant to highlight and emphasize.

To be sure, the consensus-based, agenda manipulation-proof, and
strategy-proof preference aggregation rules we have shown to be available
require a significant increase of the amount of information to be extracted
from preference profiles, and processed. Thus, reliance on such aggregation
rules also involves a careful consideration of \textit{computational
complexity} issues, and their implications\footnote{%
Indeed, it is well-known that the computation of median total preorders for
arbitrary profiles of total preorders is NP-complete (i.e. it belongs to the
class of the hardest problems whose solutions are polynomial-time verifiable
or `easy' to verify, but apparently worst-case `hard' to compute).
Specifically, if the size of $N$ is suitably larger than the size of $A$,
computing a median total preorder is NP-complete for arbitrary profiles of
total preorders or linear orders (Hudry (2012)) and NP-hard (i.e. `easy' to
reduce to a NP-complete problem) for arbitrary profiles of binary relations
(Wakabayashi (1998)).}. Moreover, while \textit{individual}
strategy-proofness issues concerning social welfare functions have been also
considered in the present work, the further problems arising from \textit{%
coalitional} \textit{strategy-proofness }requirements for preference
aggregation rules have been deliberately put aside.\footnote{%
Concerning the relationships between individual and coalitional
strategy-proofness for general aggregation rules see e.g. Vannucci (2016).
\par
\bigskip
\par
\bigskip
\par
\bigskip
\par
{}} Finally, since we have shown that IIAP only (a weaker version of IIA) is
actually \textit{necessary} to ensure `sequential' agenda
manipulation-proofness in the required sense, it remains to be seen whether
further interesting aggregation rules are consistent with IIAP. Such most
significant and intriguing issues are however best left as challenging
topics for future research.

\bigskip

\bigskip \pagebreak

\section{Appendix}

The present supplementary Appendix provides a rather detailed discussion of
some related previous contributions collecting them under two distinct
subsections that correspond to two focal points of the present analysis both
related to IIA, namely `Agenda manipulation-proofness and IIA' and
`Strategy-proofness and weakenings of IIA'.

(I) \textit{Agenda manipulation-proofness and IIA.}

The role of preference elicitation on the \textit{entire} set of admissible
alternatives in order to ensure transitivity properties of the `social
preferences' and the resulting violation of IIA has been repeatedly pointed
out, and contrasted with \textit{peacemeal} elicitation of preferences on
specific agendas of admissible alternatives, which is conducive to
IIA-consistency and violation of transitivity properties of `social
preferences'(see e.g. Sen (1977) where that contrast is discussed with
reference to several versions of the simple majority rule and the Borda
Count scoring rule). Unfortunately, preference elicitation and agenda
formation are typically \textit{not} modelled together in the extant
literature: specifically, agenda manipulation is usually left unmodelled and
thus given a quite informal treatment\footnote{%
A partial exception due to Dietrich (2016) is available in the related
framework of \textit{judgment aggregation }(to be discussed below) where
agenda manipulation is modeled as sensitivity of aggregate judgments on
issues to agenda-content alterations (\textit{including expansions}), with
no explicit role for preferences. Notably, that notion of agenda
manipulation-proofness is shown to be tightly connected to a version of IIA,
and results in a characterization of dictatorial judgment aggregation rules
when combined with a unanimity-respecting condition for general finite
agendas.}. As a consequence, even the agenda-content and agenda-structure
dimensions of agenda manipulation are typically \textit{not }neatly and
consistently distinguished. Thus, following the lead of a much discussed,
misleading example proposed in Arrow (1963), use of the label `IIA' has also
been occasionally stretched to also refer (improperly) to requirements on
social preference rankings on a certain subset of alternatives across
several \textit{distinct} admissible agendas for a \textit{fixed} profile of
individual preferences on the largest admissible agenda (see e.g. Ray
(1973), Fishburn (1973), Sen (1977), Schwartz (1986), Bordes, Tideman
(1991), Young (1995)\footnote{%
Actually, Young (1995) also discusses at length a condition he calls `local
stability' or \textit{local independence of irrelevant alternatives (LIIA) }%
which is also satisfied\textit{\ by median-based aggregation rules }(but not
by positionalist rules such as the Borda Count)\textit{. }When applied to a
social welfare function $f:\mathcal{R}_{A}^{N}\rightarrow \mathcal{R}_{A}$
LIIA may be formulated as follows:
\par
$f((R_{N})_{|B})=(f(R_{N}))_{|B}$ for all $R_{N}\in \mathcal{R}_{A}$, and
all $B\in \mathcal{I}_{f(R_{N})}$ where, for any $R\in \mathcal{R}_{A}$,
\par
$\mathcal{I}_{R}:=\left \{ 
\begin{array}{c}
I\subseteq A:I=\left \{ z:xRzRy\right \} \text{ } \\ 
\text{for some }x,y\in A%
\end{array}
\right \} $.
\par
Thus LIIA is in fact an \textit{intraprofile }property\textit{, }rather than
an \textit{interprofile }property like IIA and its relaxed versions (again,
see Fishburn (1973) for a classic, exhaustive classification of standard
social choice-theoretic properties for preference aggregation rules).} for
discussions related to such topic). Furthermore, without an explicit joint
modelling of agenda formation and preference elicitation it is virtually
impossible to distinguish not only between parallel coupling and sequential
coupling of those two processes, but also between preference-first and
agenda-first sequential coupling. In the previous section of the present
work it has been shown that parallel-coupling allows for agenda
manipulation-proofness of social welfare functions without any recourse to
IIA, while under agenda-first sequential coupling agenda
manipulation-proofness is in fact \textit{strictly related to IIA. }But
then, what about \textit{preference-first sequential coupling }of agenda
formation and preference elicitation if outputs are just `social choice
sets' out of the entire admissible set that might be not representable as
optima of an underlying total preorder of social preferences? Under such
circumstances, the possibility of agenda-structure manipulation re-enters
the picture. In that connection, the main theorem of Hansson (1969b)
concerning \textit{generalized social choice correspondences (GSCCs)}%
\footnote{%
Or `group decision functions' in the original terminology of Hansson (1969b).
\par
A \textit{generalized social choice correspondence }for $(N,A)$ is a function
\par
$f:\mathcal{R}_{A}^{N}\longrightarrow \mathcal{C}_{\mathcal{A}}$ where $%
\mathcal{A}\subseteq \mathcal{P}(A)\setminus \left \{ \emptyset \right \} $
with $A\in \mathcal{A}$ and $\mathcal{C}_{\mathcal{A}}$ is the set of all
functions $C:\mathcal{A}\longrightarrow \mathcal{P}(A)\setminus \left \{
\emptyset \right \} $ such that $C(B)\subseteq B$ for every $B\in \mathcal{A}
$.
\par
A \textit{social choice correspondence} for $(N,A)$ is a function $f:%
\mathcal{R}_{A}^{N}\longrightarrow \mathcal{P}(A)\setminus \left \{
\emptyset \right \} $ i.e. a generalized social choice correspondence such
that $\mathcal{A}=\left \{ A\right \} $.
\par
A social choice correspondence for $(N,A)$ whose range consists of \textit{%
singleton-sets }is also said to be a \textit{social choice function, }and
usually written $f:\mathcal{R}_{A}^{N}\longrightarrow A$.
\par
{}}, and its reformulation and extension due to Denicol\`{o} (2000) are
indeed relevant and most helpful. In fact, the foregoing Hansson's result
relies on an \textit{extended }version of IIA for GSCCs that are not
necessarily generated through maximization of the total preorders that
express social preferences. Specifically, it implies that any \textit{social
choice correspondence }$F$ on a set $A$ (with $|A|\geq3$) that satisfies WP
and such an extended IIA property can be represented as the choice of maxima
of the social preferences in the range of a social welfare function $f$
which satisfies IIA and WP if and only if both $F$ and $f$ are dictatorial
(see Hansson (1969b), Theorem 3 \footnote{%
See also Denicol\`{o} (2000) for a simplified presentation of Hansson's
theorem, and a detailed formulation of its consequences for social choice
correspondences and social welfare functions as just mentioned in the text.}%
). In a similar vein, Ferejohn, McKelvey (1983) shows that even substituting
transitivity and totality of social preferences with social choice sets that
are Von Neumann-Morgenstern solutions of an asymmetric social
dominance/preference relation, insistence on IIA (in a slightly strenghtened 
\textit{monotonic} version that implies WP, actually) results in the
alternative between allowance for at least one agent with unlimited veto
power and allowance for \textit{agenda-structure manipulation} as defined in
the Introduction.

\bigskip

(II) \textit{`Strategy-proofness and weakenings of IIA'. }The other major
theme in the present work is that, once agenda manipulation-proofness of
properly consensus-based social welfare function is secured through
parallel-coupling of agenda formation and preference elicitation (with no
role at all for IIA), the strategy-proofness issue for such social welfare
functions does also admit a sensible formulation and a positive solution.
Specifically, the latter requires just (a) focussing on the `right'
individual preferences (which \textit{must} be preferences on the outcomes
of a social welfare function, hence \textit{preferences on social
preferences over outcomes} i.e. ultimately `meta-preferences on basic
preferences') and (b) observing that \textit{basic preferences on
alternatives induce in a natural way single-peaked `meta-preferences' on the
`preference space' which in turn ensure strategy-proofness of the proper
consensus-based social welfare functions }mentioned above. Moreover, it
turns out that in such a setting strategy-proofness is in fact equivalent to
the combination of a very mild \textit{monotonicity} condition on the
influence of coalitions (namely the requirement that adding support to a
previously positive decision on a certain binary issue should never result
in a decision reversal) and an \textit{independence }condition that amounts
to a \textit{much weakened version of IIA. }

Thus, in a sense, a certain version of IIA ultimately reenters the picture
but (i) in a \textit{much weakened} and \textit{very specific }form\ 
\footnote{%
To be sure, it is also well-known that IIA is \textit{so strong }that there
are also weakened versions of IIA which imply \textit{dictatorship} for the
relevant preference aggregation rule. An obvious example concerning social
welfare functions is the restriction of IIA to subsets of alternatives of a
fixed cardinality (see Blau (1971)). More significantly, there are certain
consequences of IIA (weaker than IIA itself) that imply dictatorship of
preference aggregation rules when coupled with the Weak Pareto condition or
indeed \textit{any non-constancy constraint. }Notice that this fact holds
not only for social welfare functions, but also for preference aggregation
rules admitting \textit{any total binary relation }as their output (with no
transitivity or `consistency' requirement at all!). That is the case of
so-called \textit{Independent} \textit{Decisiveness} of aggregation rule $f$
requiring that any coalition which is able to enforce its strict preference
over a certain ordered pair of alternatives $(x,y)$ for \textit{some }%
preference profile (no matter what the preferences of others over $x,y$
happen to be) must also be \textit{decisive }for $(x,y)$: see Sen (1993),
Denicol\`{o} (1998), Quesada (2002). For another example of an Arrow-like
theorem for aggregation rules in the same vein (albeit in a much more
general setting) see Dani\"{e}ls, Pacuit (2008).}\textit{\ }and (ii) with
reference to \textit{strategy-proofness}, an issue that (as opposed to
agenda manipulation-proofness) was explicitly \textit{put aside} in the
original Arrowian analysis of social welfare functions (see Arrow (1963),
p.7).

Now, both of those tenets run counter to some views that are apparently
still widely held in the literature, and to which we now turn. To begin
with, the exceptional strength of IIA is sometimes downplayed or in any case
not fully appreciated. \textit{\ }One reason for that may be the (correct)
perception of the relationship of IIA to agenda-content
manipulation-proofness as combined with the (incorrect) view that
sequential-coupling of agenda formation to preference elicitation is the
only available possibility\footnote{%
Indeed, there is arguably no other way to make full sense of the following
statement from a well-known and highly respected scholar:\ `Independence of
Irrelevant Alternatives and therewith Binary Independence are eminently
reasonable assumptions to make in a realistic study of collective choice. I
know of no real-world collective-choice process that violates either
condition. Both formalize the idea that collective choices depend only on
such preferential data as could be revealed by \textit{voting.' }(Schwartz
(1986), 33). In that connection, similar comments apply to Bordes, Tideman
(1991).}. It is also possibly the case that IIA is occasionally confused
with its earlier counterpart named `Postulate of Relevancy' which is due to
Huntington (1938), and is explicitly quoted by Arrow himself as a source of
inspiration and `a condition analogous to' IIA\ (Arrow (1963), p. 27).
Notice, however, that while Huntington's `Postulate of Relevancy' may well
be quite similar in spirit to IIA, it is in fact \textit{much weaker }than
the latter\textit{\ }because it relies on a \textit{common language of
linearly ordered grades} (indeed, numbers\footnote{%
The examples considered by Huntington (1938) concern in fact competing teams
of equal size, and the relevant numbers/scores are uniquely determined by
the measurement of individual performances of each team's members. Thus, the
alternatives to be ranked are teams, while the agents are the \textit{shared
classifiers} for distinct members of each team (e.g. \textit{first }members
of some team, \textit{second} members of some team, and so on). Huntington
essentially contrasts team ranking by aggregation of members' ratings and
members' rankings, respectively, observing that the former method typically 
\textit{does satisfy} the `Postulate of Relevancy' while the latter \textit{%
does not}.}) to express \textit{absolute judgments} (as opposed to merely
comparative ones)\footnote{\textit{Majority judgment} as recently introduced
by Balinski and Laraki (Balinski, Laraki (2011)) denotes a family of
aggregation and voting mechanisms which typically satisfy the `Postulate of
Relevancy' while violating IIA (see also Vannucci (2019) for a detailed
discussion of strategy-proofness properties of majority judgment).}. A third
line of reasoning in support of IIA originates from a misleading
interpretation of a well-known theorem due to Satterthwaite (1975) that
establishes a tight connection between strategy-proof \textit{strict }social
choice functions and \textit{strict} social welfare functions that satisfy
IIA\footnote{%
Specifically, Satterthwaite's theorem establishes a one-to-one
correspondence between sovereign strategy-proof strict social choice
functions and sovereign strict social welfare functions that satisfy IIA and
the Weak Pareto principle, whenever the size of the set $A$ of alternatives
is not smaller than three (see Satterthwaite (1975), Theorem 2). A \textit{%
strict} social choice function (social welfare function, respectively) is a
social choice function (social welfare function, respectively) whose domain
(and codomain) is (are) appropriately restricted to \textit{linear orders}
of $A$.}. Indeed, Satterthwaite himself claims that such a theorem `creates
a strong new justification for [WP and] IIA as conditions that an ideal
social welfare function should satisfy' (Satterthwaite (1975), p. 207, with
some minor editing of mine)\footnote{%
Even a most respected, highly influential author such as Saari opts on his
part for the following concise rendition of the upshot of the
Satterthwaite's theorem mentioned in the text:\ `A nonmanipulable system
satisfies IIA' (Saari (2008), p.60).}. Notice, however, that the
`IIA-nonmanipulability' connection identified and discussed by Satterthwaite
concerns IIA as a property of a strict social welfare function and
nonmanipulability of the strict social choice function attached to the
former, and such nonmanipulability amounts to \textit{strategy-proofness }%
(and obviously \textit{not} agenda manipulation-proofness) of the latter. To
be sure, further interesting elaborations on such connections between IIA
and nonmanipulable aggregation rules are provided in Sato (2015).
Specifically, Sato considers \textit{four} notions of \textit{%
nonmanipulability }for strict social welfare functions in order to formulate
both agenda manipulation-proofness and strategy-proofness requirements,
respectively. Then, relying on the Kendall metric for linear orders, he
introduces a weak continuity condition for strict social welfare functions
called \textit{Bounded Response} \footnote{%
The Kendall distance $d_{K}$ between rankings is given by the minimal number
of transpositions of adjacents elements that is necessary to obtain one
linear order starting from another one.
\par
A strict social welfare function $f$ satisfies Bounded Response if $%
d_{K}(f(R_{N}),f(R_{N}^{\prime}))\leq1$ whenever two preference profiles $%
R_{N},R_{N}^{\prime}$ are the same except for the preference of a single
agent $i$, and $R_{i}$ and $R_{i}^{\prime}$ are \textit{adjacent} (i.e. $%
R_{i}^{\prime}$ is obtained from $R_{i}$ by permuting the $R_{i}$-ranks of 
\textit{a single} pair of alternatives with \textit{consecutive} $R_{i}$%
-ranks).}. The main result of Sato (2015) implies \textit{the equivalence}
of the following statements concerning a \textit{strict} social welfare
function $f$ for $(N,A)$: (1) $f$ satisfies Bounded Response and \textit{at
least one} of the four distinct agenda manipulation-proofness or
strategy-proofness conditions mentioned above; (2) $f$ satisfies Bounded
Response and \textit{each one }of the foregoing nonmanipulability
conditions; (3) $f$ satisfies Adjacency-restricted Monotonicity (AM)%
\footnote{%
The \textit{Adjacency-Restricted Monotonicity }condition for strict social
welfare functions simply requires that for any pair of `adjacent' profiles $%
R_{N},R_{N}^{\prime}$ and any $x,y\in A$, if [$yR_{i}x$, $xR_{i}^{\prime}y$
and $xf(R_{N})y$] then $xf(R_{N}^{\prime})y$ as well.} and the Arrowian 
\textit{IIA }condition. Thus, even factoring in AM (a very mild requirement
that is virtually undisputable) it turns out that IIA is in particular a
necessary condition of strategy-proofness \textit{only} for a \textit{%
specific }class of `\textit{weakly continuous'} and \textit{strict} social
welfare functions\footnote{%
It should also be emphasized that, when it comes to preference aggregation
problems, there is no reason to consider continuity conditions as
essentially `technical' and innocuous. It is indeed well-known that
anonymity and idempotence of a social welfare function (or indeed of
virtually any preference aggregation rule for arbitrary profiles of total
preorders) are inconsistent with preservation of `preference proximity' (see
Baigent (1987)). See also Lauwers, Van Liedekerke (1995), Huang (2004) and
Saari (2008) for more general considerations on the difficulties raised by
continuity properties for aggregation rules. Clearly enough, requiring
proper `responsiveness' of a preference aggregation rule is one thing, and
insisting on its `continuity' quite another.}. In short, a closer inspection
of both Satterthwaite (1975) and Sato (2015) confirms that the most
interesting results they contribute are in fact \textit{silent} on necessary
and/or sufficient conditions for strategy-proofness of \textit{general}, 
\textit{unrestricted social welfare functions}.

All of the above suggests that both agenda manipulation-proofness and
strategy-proofness of a proper consensus-based social welfare function do
indeed require that IIA be \textit{either just dropped or at the very least
considerably relaxed}.

The independence condition used in the present paper, namely $M_{\mathcal{X}%
} $-Independence, can be indeed regarded as a drastic relaxation of IIA when
applied to social welfare functions. It was first introduced by Monjardet
(1990) and explicitly related to IIA and Arrowian aggregation models, but
not at all to strategy-proofness issues (or, for that matter, to agenda
manipulation-proofness issues) \footnote{%
It should be noted, however, that conditions strictly related to M$_{%
\mathcal{X}}$-Independence are deployed in Dietrich, List (2007b) to study
strategy-proofness properties in judgment aggregation as discussed below in
the present section.} .

Unsurprisingly, several alternative weakenings of IIA have been proposed in
the earlier literature. An entire set of substantially relaxed versions of
IIA was first introduced and discussed by Hansson (1973) with no reference
whatsoever to nonmanipulability issues of any sort\footnote{%
Afriat (1987) -first appeared as a 1973 conference paper- is also to be
credited for an early criticism of IIA. That criticism is mainly motivated
by the understanding of an Arrowian social welfare function as a way of
modeling the process of `voting for an order' and, again, with no explicit
reference whatsoever to agenda manipulation issues.}. The strongest of them
(i.e. the \textit{least} dramatic relaxation of IIA, denoted by Hansson as
Strong Positionalist Independence (SPI)\footnote{%
The label comes from the fact that SPI is of course satisfied by
`positionalist' or score-based aggregation rules including the Borda Count
rule (which assigns to every alternative $x$ a score given by the sum of its
individual ranks, defined as the sizes of the sets of alternatives which are
classified as strictly worse than $x$ itself).
\par
{}}) requires invariance of aggregate preference between any two
alternatives $x,y$ for any pair of preference profiles such that their 
\textit{restrictions to} $\left \{ x,y\right \} $\textit{\ are identical,
and for every agent/voter the supports of the respective closed preference
intervals having }$x$ \textit{and }$y$\textit{\ as their extrema are also
identical. }Incidentally\textit{, }SPI has been recently rediscovered,
relabeled as Modified IIA, and provided with a new motivation by Maskin
(2020). Indeed, Maskin points out that SPI enforces resistance of the
relevant aggregation rule to certain sorts of \ `vote splitting' effects,
thereby connecting SPI to manipulation issues, including strategic
manipulation. Notice, however, that Maskin's proposal is aimed at
strategy-proofness of the `\textit{maximizing' social choice function }%
induced by a certain social welfare function (as opposed to
strategy-proofness of the social welfare function itself). In a similar
vein, another weakening of IIA that is even stronger than SPI has been
proposed by Saari under the label `Intensity form of IIA' (IIIA). IIIA
requires invariance of aggregate preference between any two alternatives $%
x,y $ for any pair of preference profiles such that\textit{\ for every
agent/voter the rank (or score) difference between }$x$ \textit{and }$y$%
\textit{\ is left unchanged }from one profile to the other\textit{\ }(see
Saari (1995) and (1998))\footnote{%
Arguably, Saari's IIIA can also be regarded as a formalization of the
criticism of IIA originally advanced by Dahl (1956) with his advocacy of
aggregation rules based on intensity of individual preferences. Notice that
IIIA is indeed satisfied by some positional aggregation rules such as the
Borda Count but also by majority judgment as discussed above.}.

A further weakening of IIA in a quite different vein is due to Huang (2014),
under the label \textit{Weak Arrow's Independence }(WIIA). In plain words, a
social welfare function $f$ satisfies WIIA if, for any pair $R_{N}$,$%
R_{N^{\prime }\text{ }}$ of profiles of total preorders and any pair $x$,$y$
of alternatives such that the preferences between $x$ and $y$ of every agent 
$i$ in $N$ are the same in $R_{N}$ and $R_{N^{\prime }}$, the following
condition holds: if $x$ is \textit{strictly} \textit{preferred} to $y$
according to social preference $f(R_{N})$ then $x$ is \textit{preferred }%
(i.e. either strictly preferred or indifferent) to $y$ according to social
preference $f(R_{N}^{\prime })$. Notice the main difference between WIIA and
virtually all of the other weakenings of IIA considered in the present work:
while the other weakenings \textit{strenghten the hypothetical clause} of
IIA and leave its consequent unaltered, WIIA keeps the hypothetical clause
of IIA unaltered and \textit{weakens its consequent}. It should also be
emphasized that the overt motivation of WIIA is just finding an escape route
from the strictures of Arrow's theorem \textit{without any explicit
consideration of agenda manipulation-proofness or strategy-proofness
properties}. On the other hand, it is worth mentioning that Huang's positive
proposal for social welfare functions (what he denotes as `Weak Arrow's
Framework') consists in replacing IIA and WP with WIIA and BP (the Basic
Pareto principle -or Weak Pareto Condition in Huang's own terminology- which
is also used in our Proposition 3), and allowing for social indifference
classes that contain both a strictly Pareto dominated alternative and some
of its strict Pareto improvements (such classes are precisely the \textit{%
stalemates }previously considered in the present work, that are denoted as 
\textit{singularities }by Huang).

Remarkably, even at a first glance one conspicuous difference between $M_{%
\mathcal{X}}$-Independence and SPI (or IIIA and WIIA) stands out
immediately: the former relies heavily on the structure of the outcome set,
while SPI, IIIA and WIIA\ only impinge upon the relevant preference
profiles, completely disregarding any specific feature/structure of the
relevant outcome set (namely the set of all total preorders of the set $A$
of basic alternatives).

This crucial difference and its significant import can be further clarified
and fully appreciated by reconsidering all the relaxations of IIA mentioned
above from the common perspective of `\textit{aggregation by binary issues}'
that encompasses them all.

The \textit{binary aggregation model }originates with Wilson (1975) and has
been further extended by Rubinstein, Fishburn (1986)\footnote{%
To be sure, the original work by Wilson only considers the \textit{finite }%
case, but Wilson's framework can also be extended to an infinite number of
issues, and to non-binary issues (see Dokow, Holzman (2010c)). Indeed, one
such extension is covered in Rubinstein, Fishburn (1986). Since the present
paper is only concerned with finite social welfare functions, however, we
shall only consider the basic binary aggregation model with a \textit{finite}%
$\mathbb{\ }$number of issues.}: a finite number of $k$ issues are
considered for a collective yes/no judgment (the output) to be based on some
profile of individual yes/no judgments on each issue (the input), under some 
\textit{feasibility constraints} (usually the same, but possibly different)
imposed, respectively, on inputs and outputs\footnote{%
In more recent contributions coming from the computational social choice and
artificial intelligence research communities `\textit{integrity constraints' 
}is the most commonly used label to denote such constraints (see e.g.
Grandi, Endriss (2013)).\textit{\ }}. Thus, the basic aggregation rules for
a set $N$ of agents and a set $K=\left \{ 1,...,k\right \} $ of binary
issues are given by functions $f:X^{N}\rightarrow X$ with $X\subseteq \left
\{ 0,1\right \} ^{K}$. This model has also be shown to be equivalent to the
basic model of \textit{judgment aggregation }where the judgments to be
aggregated amount to acceptance/rejection of every element of an \textit{%
agenda} of interconnected formulas of a suitable formal language
representing propositions (see Dokow, Holzman (2009)). Indeed, several
versions of Arrow's general (im)possibility theorem for social welfare
functions have been explicitly shown to follow as a special interesting case
under both the \textit{feasible binary aggregation }and the \textit{judgment
aggregation }frameworks (see e.g. Dokow, Holzman (2010a, 2010b) for the
former and Dietrich, List (2007a), Van Hees (2007), Mongin (2008), Dani\"{e}%
ls, Pacuit (2008), Porello (2010) for the latter)\footnote{%
In particular, Mongin (2008) introduces a specific \textit{weakening of IIA }%
for the standard judgment aggregation model, by restricting the scope of IIA
to atomic propositional formulas, and still obtains a version of Arrow's
(im)possibility theorem under WP. As previously mentioned, Dani\"{e}ls,
Pacuit (2008) offers another characterization of dictatorial rules in a
quite general judgment aggregation framework using just some consequences of
IIA as combined with non-constancy and neutrality conditions. Furthermore,
it has been shown that Arrowian characterizations of dictatorial aggregation
rules by IIA and idempotence hold for other disparate domains including
arbitrary single-valued choice functions on finite sets (Shelah (2005)) and
task assignments (Dokow, Holzman (2010c)).}.\textit{\ }

An additional and most convenient perspective for the finite version of the
binary aggregation model of our concern here is provided by some joint work
of Nehring and Puppe (see in particular Nehring, Puppe (2007),(2010)). To be
sure, Nehring, Puppe (2007) is mainly concerned with \textit{strategy-proof
social choice functions} as defined on profiles of total preorders on finite
sets. Conversely, Nehring, Puppe (2010) is focussed on an `abstract' class
of Arrowian aggregation problems including preference aggregation and, more
specifically, social welfare functions, but it does \textit{not }address
issues concerning their strategy-proofness properties. However, social
choice functions with the top-only property\textit{\footnote{%
A social choice function $f:(\mathcal{R}_{A})^{N}\rightarrow A$ satisfies
the \textit{top-only property }if $f(R_{N})=f(R_{N}^{\prime})$ whenever $%
t(R_{i})=t(R_{i}^{\prime})$ for each $i\in N$, and $|t(R_{i})|=|t(R_{i}^{%
\prime})|=1$ for all $i\in N$ (with $t(R_{i}):=\left \{ x\in A:xR_{i}y\text{
for all }y\in A\right \} $).}} may be regarded as aggregation rules endowed
with a specific domain of total preorders, and the class of Arrowian
aggregation rules considered in Nehring, Puppe (2010) does include the case
of preference aggregation rules in finite median semilattices. Specifically,
Nehring and Puppe attach to any \textit{finite }outcome space $X$ a certain
finite hypergraph $\mathbb{H}=(X,\mathcal{H})$ denoted as \textit{property
space, }where the set $\mathcal{H}\subseteq \mathcal{P}(X)\smallsetminus
\left \{ \varnothing \right \} $ of (nonempty) hyperedges or \textit{%
properties }of outcomes/states in $X$\textit{\ }is \textit{%
complementation-closed }and \textit{separating }(namely $H^{c}:=X\setminus
H\in \mathcal{H}$ whenever $H\in \mathcal{H}$, and for every two \textit{%
distinct }$x,y\in X$ there exists $H_{x^{+}y^{-}}\in \mathcal{H}$ such that $%
x\in$ $H_{x^{+}y^{-}}$and $y\notin H_{x^{+}y^{-}}$). Such a property space $%
\mathbb{H}$ models the set of all \textit{binary} properties of outcomes
that are regarded as relevant for the decision problem at hand. Thus, 
\textit{binary issues are modeled here as pairs }$(H,H^{c})$\textit{\ of
complementary properties }and, as it is easily checked, \textit{both the
feasible binary aggregation and the judgment aggregation models can be
immediately reformulated as aggregation models in property spaces.} Then, a
betweenness relation $B_{\mathbb{H}}\subseteq X^{3}$ is introduced by
stipulating that $B_{\mathbb{H}}(x,y,z)$ holds precisely when $y$ satisfies
all the properties shared by $x$ and $z\ $\footnote{%
In particular, a nonempty subset $Y\subseteq X$ is said to be \textit{convex 
} for $\mathbb{H=}(X,\mathcal{H})$ if for every $x,y\in Y$ and $z\in X$, if $%
B_{\mathbb{H}}(x,z,y)$ then $z\in Y$, and \textit{prime }(or a \textit{%
halfspace}) for $\mathbb{H}$ if both $Y$ and $X\setminus Y$ are convex for $%
\mathbb{H}$ and $\left \{ Y,X\setminus Y\right \} \subseteq \mathcal{H}$.}.
Moreover, \textit{single-peaked }preference domains on $X$ can be defined
relying on $B_{\mathbb{H}}$. In particular, $B_{\mathbb{H}}$ is said to be 
\textit{median }if for every $x,y,z\in X$ there exists a unique $m_{xyz}\in
X $ such that $B_{\mathbb{H}}(x,m_{xyz},y)$, $B_{\mathbb{H}}(x,m_{xyz},z)$,
and $B_{\mathbb{H}}(y,m_{xyz},z)$ hold\footnote{%
In that case, $\mathbb{H}$ is said to be a \textit{median property space, }$%
(X,m^{\mathbb{H}})$ (where $m^{\mathbb{H}}:X^{3}\rightarrow X$ is defined by
the rule $m^{\mathbb{H}}(x,y,z)=m_{xyz}$ for every $x,y,z\in X$) is a 
\textit{median algebra}, and for each $u\in X$ the pair $(X,\vee_{u})$
(where $x\vee_{u}y=y$ iff $m^{\mathbb{H}}(x,y,u)$ $=y$ for some $u\in X$) is
a \textit{median join-semilattice} having $u$ as its maximum.}. Of course, a
main advantage of that second-order representation of binary issues is the
possibility to \textit{focus on several different property spaces which are
defined on the very same ground} \textit{set of alternative states. }

The following key results are obtained by Nehring and Puppe: (i) the class
of all idempotent social choice functions which are strategy-proof on the
domain of single-peaked preferences thus defined are characterized in terms
of voting by binary issues through a certain combinatorial property\footnote{%
The combinatorial property mentioned in the text is the so-called `\textit{%
Intersection Property}' which requires that for every minimal inconsistent
set of properties, it must be the case that any selection of winning
coalitions for the corresponding binary issues has a non-empty intersection.}
of the families of winning coalitions for the relevant issues and (ii) 
\textit{if the property space is median }then\textit{\ }such combinatorial
property is definitely met, and consequently non-dictatorial neutral and/or
anonymous strategy-proofs aggregation rules including the simple majority
rule are available (Nehring, Puppe (2007), Theorems 3 and 4). Furthermore,
in Nehring, Puppe (2010) the very same theoretical framework is deployed to
analyze preference aggregation and social welfare functions. In particular,
several `classical' properties for social welfare conditions including the
Arrowian Independence of Irrelevant Alternatives (IIA) property can be
reformulated in \textit{more general terms} which depend on the
specification of the relevant property space\footnote{%
Specifically, given a property space $\mathbb{H}=(\mathcal{R}_{A},\mathcal{H}%
)$, such a generalized $IIA$ for a social welfare function $f:(\mathcal{R}%
_{A})^{N}\longrightarrow \mathcal{R}_{A}$ can be defined as follows: for
every $H\in \mathcal{H}$ and $R_{N},R_{N}^{\prime}\in(\mathcal{R}_{A})^{N}$
such that $\left \{ i\in N:R_{i}\in H\right \} =\left \{ i\in
N:R_{i}^{\prime}\in H\right \} $,
\par
if $f(R_{N})\in H$ then $f(R_{N}^{^{\prime}})\in H$ as well. Of course the
original Arrowian version of such a generalized IIA may be obtained by
taking $\mathcal{H}:=\left \{ H_{(x,y)}:x,y\in A\right \} $ with $%
H_{(x,y)}:=\left \{ R\in \mathcal{R}_{A}:xRy\right \} $. That is so because
it is well-known that in the Arrowian aggregation framework IIA is
equivalent to its \textit{binary} version (namely, its restriction to
arbitrary pairs of alternatives).}: it follows that several versions of such
a \textit{generalized IIA condition }can be considered. But then, as it
turns out, (iii) the \textit{versions of generalized IIA attached to median
property spaces} are consistent with anonymous and (weakly) neutral social
welfare functions including those induced by majority-based aggregation
rules (Nehring, Puppe (2010), Theorem 4). Interestingly, a specific example
of a median property space for the set of all total preorders is also
provided by Nehring and Puppe, namely the one whose issues consist in asking
for each non-empty $Y\subseteq X$ and any total preorder $R$ whether or not $%
Y$ is a \textit{lower contour }of $R$ with respect to some outcome $x\in X$.%
\footnote{%
Thus, the property space suggested here is $\mathbb{H}^{\circ}:=\left \{ 
\mathcal{R}_{A},\mathcal{H}^{\circ }\right \} $, where
\par
$\mathcal{H}^{\circ}:=\left \{ H_{L}:\varnothing \neq L\subseteq A\right \} $
and
\par
$H_{L}:=\left \{ 
\begin{array}{c}
R\in \mathcal{R}_{A}:\text{for some }x\in A \\ 
L=\left \{ y\in A:xRy\right \}%
\end{array}
\right \} $.} By contrast, it can be easily checked that when translated
into the property-space framework SPI and IIIA correspond to \textit{%
non-median property spaces.}\footnote{%
Indeed, the most natural property-space attached to SPI is
\par
$\mathcal{H}_{SPI}:=\left \{ 
\begin{array}{c}
H_{(x,y,B)}:x,y\in A \\ 
B\subseteq A\setminus \left \{ x,y\right \}%
\end{array}
\right \} $ with
\par
$H_{(x,y,B)}:=\left \{ 
\begin{array}{c}
R\in \mathcal{R}_{A}:\text{ }\left \{ a\right \} \times \text{ }B\subseteq R%
\text{ and } \\ 
B\times \left \{ b\right \} \subseteq R \\ 
\text{if }\left \{ a,b\right \} =\left \{ x,y\right \}%
\end{array}
\right \} $.
\par
Similarly, the most natural property-space attached to IIIA is
\par
$\mathcal{H}_{IIIA}:=\left \{ 
\begin{array}{c}
H_{(x,y,k)}:x,y\in A, \\ 
k\leq|A|-2%
\end{array}
\right \} $ with
\par
$H_{(x,y,k)}:=\left \{ 
\begin{array}{c}
R\in \mathcal{R}_{A}:\text{ either }I_{x,y}=\left \{ z\in
A:xP(R)zP(R)y\right \} \text{ } \\ 
\text{and }k=|I_{x,y}| \\ 
\text{or }I_{x,y}=\left \{ z\in A:yP(R)zP(R)x\right \} \text{ } \\ 
\text{and }|I_{x,y}|=k%
\end{array}
\right \} $.
\par
{}It can be shown that $\mathcal{H}_{SPI\text{ }}$and $\mathcal{H}_{IIIA%
\text{ }}$are \textit{not} median property spaces since both of them contain
minimal inconsistent subsets of properties of size three. To check validity
of that statement, just consider any triplet $\left \{
H_{x,y,\emptyset},H_{y,z,\emptyset},H_{z,x,\emptyset}\right \} \subseteq%
\mathcal{H}_{SPI\text{ }}$%
\par
and $\left \{ H_{x,y,0},H_{y,z,0},H_{z,x,0}\right \} \subseteq \mathcal{H}%
_{IIIA\text{ }}$%
\par
with $x\neq y\neq z\neq x$.}

The overlappings between such results and those presented here are
remarkable, along with some sharp differences which make them mutually
independent. Since any finite median semilattice is indeed an example of a
finite median algebra\footnote{%
Specifically, a finite median join-semilattice can be regarded as a generic
instance of a finite `\textit{pointed'} median algebra, having one of its
elements singled out (that point corresponds to the top element of the
semilattice).}, and is consequently representable as a median property space%
\footnote{%
For instance, it is \textit{always} possible to represent a (finite) median
algebra as a (finite) property space by taking as properties its \textit{%
prime} sets as defined through its median betweenness (see e.g. Bandelt, Hedl%
\'{\i}kov\'{a} (1983), Theorem 1.5, and footnote 52 above for a definition
of prime sets). It is important to observe that in general a finite median
algebra or ternary space admits of several representations by distinct
median property spaces. By contrast, a ternary (finite) algebra or space
which is not median can only be represented by (finite) property spaces
which are \textit{not} median.}, all of the Nehring and Puppe's results
mentioned above \textit{do apply }to finite median semilattices as a special
case. Notice however that our results provide a characterization of (finite) 
\textit{strategy-proof} social welfare functions which is also both \textit{%
more explicit} (it includes a polynomial description of some such rules) and 
\textit{more comprehensive }(it is a complete characterization in that it is
not limited to sovereign and idempotent ones). Moreover, our treatment of
social welfare functions can also be translated in terms of a \textit{median}
property space, but a \textit{different one} from that considered by Nehring
and Puppe. In fact, in our case the set of relevant properties correspond to
the meet-irreducibles of the semilattice of total preorders, namely the
total preorders having just \textit{two} indifference classes, or
equivalently the binary ordered classifications of basic alternatives as 
\textit{good }or \textit{bad}, respectively. Accordingly, the collection of
relevant issues consist in asking, for each binary good/bad classification
of basic alternatives and any total preorder $R$, \textit{whether the latter
is consistent with the given binary classification or not}\footnote{%
Thus, the appropriate version of generalized IIA in our own model is $%
\mathbb{H}^{\ast }:=(\mathcal{R}_{A},\mathcal{H}^{\ast })$ with $\mathcal{H}%
^{\ast }:=\left \{ 
\begin{array}{c}
H_{A_{1}A_{2}}:A_{1}\neq \varnothing \neq A_{2} \\ 
A_{1}\cap A_{2}=\varnothing \text{, }A_{1}\cup A_{2}=A%
\end{array}%
\right \} $%
\par
$H_{A_{1},A_{2}}:=\left \{ R\in \mathcal{R}_{A}:R\subseteq
R_{A_{1}A_{2}}\right \} $%
\par
and $R_{A_{1}A_{2}}$ is of course the two-indifference-class total preorder
having $A_{1}$ and $A_{2}$ as top and bottom indifference classes,
respectively. Notice that both $\mathbb{H}^{\ast }$ and Nehring-Puppe's $%
\mathbb{H}^{\circ }$as previously defined (see footnote 56\ above) are 
\textit{median }property spaces, while the original Arrowian $\mathbb{H}$ is
not.
\par
{}}. Thus, proper consensus-based social welfare functions that are agenda
manipulation-proof and even strategy-proof can be defined by \textit{binary
aggregation}, provided that the set of relevant binary issues is carefully
selected, and in fact \textit{expanded }if the basic alternatives are\textit{%
\ more than three}: notice that, when expressed in terms of properties of $%
\mathcal{R}_{A}$ with $|A|=m$, the size of the set of actually relevant
binary issues (for each individual preference relation of any profile) is $%
m(m-1)$ for IIA, and $2(2^{m-1}-1)$ for $M_{\mathcal{X}}$-Independence.%
\textit{\ }This point is strongly consonant with one of the main arguments
in Saari (2008), lamenting the enormous loss of information enforced by the
Arrowian IIA. It also amounts to a special instance of a recurrent theme in
the social choice-theoretic literature, namely emphasizing the link between
Arrow's theorem and the strictures of the preference-information base
enforced by IIA and the other Arrowian axioms (see e.g. the classic Sen
(2017) for extensive elaborations on that topic). Notice, however, that
while changes and/or enrichments of the Arrowian input-format figure
prominently among the invoked remedies for the aforementioned strictures,
the relaxations of IIA we have been considering stick to a \textit{fixed,
standard input-format consisting of profiles of total preorders}.

\end{document}